\definecolor{mapleinput}{rgb}{0.5,0.0,0.0}
\definecolor{maplemath}{rgb}{0.0,0.0,1.0}
\definecolor{maplewarning}{cmyk}{0.0,1.0,.0.0,0.0}
\newcommand{\Filename}[1]{{\upshape\ttfamily #1}}
\ttfamily\color{mapleinput},
\ttfamily\color{mapleinput},
\ttfamily\color{maplewarning},
\ttfamily\color{maplemath},
\newenvironment{MapleMath}{%
\color{maplemath}\upshape\rmfamily%
\setlength{\abovedisplayskip}{0ex}%
\setlength{\abovedisplayshortskip}{\abovedisplayskip}%
\setlength{\belowdisplayskip}{\medskipamount}%
\setlength{\belowdisplayshortskip}{0ex}%
\csname gather*\endcsname}{\csname endgather*\endcsname%
{\hrule height 0pt}%
\ignorespacesafterend}
\newcommand{\Catalan}{\operatorname{Catalan}}
\numberwithin{equation}{section}
\newcommand{\K}{
\mathbbm{K}%
}
\newcommand{\Q}{
\mathbbm{Q}%
}
\newcommand{\N}{
\mathbbm{N}%
}
\newcommand{\Z}{
\mathbbm{Z}%
}
\newcommand{\C}{
\mathbbm{C}%
}
\newcommand{\Halfplane}{
\mathbbm{H}%
}
\newcommand{\tp}{
\otimes
}
\newlength{\wurelwidth}
\newcommand{\wurel}[2][=]{\mathrel{\mathop{#1}_{\!\scalebox{0.5}{\makebox[\the\wurelwidth]{#2}}\!}}}
\newcommand{\period}{\mathcal{P}}
\newcommand{\hide}[1]{}
\newcommand{\convolution}{\star}
\newcommand{\dd}[1][]{\mathrm{d}^{#1}}
\newcommand{\cupdot}{\mathbin{\dot{\cup}}}
\newcommand{\restrict}[2]{%
{\left. #1 \right|}_{#2}%
}
\DeclareMathOperator{\im}{im}
\newcommand{\defas}{
\mathrel{\mathop:}=
}
\newcommand{\set}[1]{
\left\{ #1 \right\}
}
\newcommand{\setexp}[2]{
\left\{ #1\!:\ #2 \right\}
}
\newcommand{\abs}[1]{
\left\lvert #1 \right\rvert
}
\newcommand{\id}{
\mathrm{id}
}
\newcommand{\Hyper}[1]{L_{#1}}
\newcommand{\letter}[1]{\omega_{#1}}
\DeclareMathOperator{\Li}{Li}
\newcommand{\mzv}[2][]{\zeta^{#1}_{#2 }}
\newcommand{\iInt}{\int}
\newcommand{\concat}{\star}
\DeclareMathOperator{\sdd}{sdd}
\newcommand{\loops}[1]{\abs{#1}}
\newcommand{\Graph}[2][1.0]{%
\vcenter{\hbox{\includegraphics[scale=#1]{Graphs/#2}}}%
}
\newcommand{\BBr}[2]{B_{#1,#2}}
\newcommand{\BBt}[2]{\hat{B}_{#1,#2}}
\newcommand{\Dim}{D}
\newcommand{\SP}{\alpha}
\newcommand{\EP}{a}
\newcommand{\EPE}{\nu}
\newcommand{\EPZ}{A}
\newcommand{\phipol}{\varphi}
\newcommand{\psipol}{\psi}
\newcommand{\forestpolynom}[1]{\Phi^{#1}}
\newcommand{\Maple}{\texttt{\textup{Maple}}}
\newcommand{\JaxoDraw}{\texttt{\textup{JaxoDraw}}}
\newcommand{\zetaprocedures}{\texttt{\textup{zeta\_procedures}}}
\newcommand{\HyperProg}{\texttt{\textup{HyperInt}}}
\newcommand{\AnaReg}[2]{\Reglim_{#1 \rightarrow #2}}
\newcommand{\WordReg}[2]{\WordReglim\nolimits_{#1}^{#2}}
\newcommand{\ReglimWord}[2]{\WordReglim_{#1 \rightarrow #2}}
\DeclareMathOperator*{\Reglim}{Reg}
\DeclareMathOperator*{\WordReglim}{reg}
\newcommand{\imag}{i}
\newcommand{\regulars}{\mathcal{O}}
\newcommand{\code}[1]{\mbox{\texttt{\textup{#1}}}}
\newcommand{\WordTransformation}[1]{\Phi_{#1}}
\newcommand{\emptyWord}{1}
\DeclareMathOperator{\leadingCoefficient}{lead}
\DeclareMathOperator{\Imaginaerteil}{Im}
\DeclareMathOperator{\Realteil}{Re}
\newcommand{\Hlog}[2]{\operatorname{Hlog}\left( #1, \left[ #2 \right] \right)}
\newcommand{\Mpl}[2]{\operatorname{Mpl}\left( \left[ #1 \right], \left[ #2 \right] \right)}
\newtheorem{theorem}{Theorem}[section]
\newtheorem{definition}[theorem]{Definition}
\newtheorem{lemma}[theorem]{Lemma}
\newtheorem{corollary}[theorem]{Corollary}
\newtheorem{example}[theorem]{Example}
\newtheorem{remark}[theorem]{Remark}
\newcounter{bla}
\newcommand{\mytitle}{Algorithms for the symbolic integration of hyperlogarithms with applications to Feynman integrals}
\journal{Computer Physics Communications}
\begin{document}

\begin{frontmatter}

%% Title, authors and addresses

%% use the tnoteref command within \title for footnotes;
%% use the tnotetext command for the associated footnote;
%% use the fnref command within \author or \address for footnotes;
%% use the fntext command for the associated footnote;
%% use the corref command within \author for corresponding author footnotes;
%% use the cortext command for the associated footnote;
%% use the ead command for the email address,
%% and the form \ead[url] for the home page:
%%
%% \title{Title\tnoteref{label1}}
%% \tnotetext[label1]{}
%% \author{Name\corref{cor1}\fnref{label2}}
%% \ead{email address}
%% \ead[url]{home page}
%% \fntext[label2]{}
%% \cortext[cor1]{}
%% \address{Address\fnref{label3}}
%% \fntext[label3]{}

\title{\mytitle}

\author{Erik Panzer}
\ead{panzer@mathematik.hu-berlin.de}
\address{Humboldt-Universit\"{a}t zu Berlin, Institut f\"{u}r Physik, Newton Stra{\ss}e 15, 12489 Berlin, Germany}

\begin{abstract}%
	We provide algorithms for symbolic integration of hyperlogarithms multiplied by rational functions, which also include multiple polylogarithms when their arguments are rational functions.
	These algorithms are implemented in {\Maple} and we discuss various applications.
	In particular, many Feynman integrals can be computed by this method.
\end{abstract}

\begin{keyword}
%% keywords here, in the form: keyword \sep keyword
Feynman integrals \sep hyperlogarithms \sep polylogarithms \sep computer algebra \sep symbolic integration \sep $\varepsilon$-expansions
\end{keyword}

\end{frontmatter}

\pdfbookmark[1]{Program summary}{program-summary}
\section*{Program summary}
%{\bf PROGRAM SUMMARY}
  %Delete as appropriate.

\begin{small}
\noindent
{\em Manuscript Title:}
	\mytitle
\\
{\em Authors:}
	Erik Panzer
\\
{\em Program Title:}
	\HyperProg
\\
{\em Journal Reference:}
  %Leave blank, supplied by Elsevier.
\\
{\em Catalogue identifier:}
  %Leave blank, supplied by Elsevier.
\\
{\em Licensing provisions:}
  %enter "none" if CPC non-profit use license is sufficient.
	GNU General Public License, version 3
\\
{\em Programming language:}
	{\Maple} [1], version 16 or higher
\\
{\em Computer:}
  %Computer(s) for which program has been designed.
	Any that supports {\Maple}
\\
{\em Operating system:}
  %Operating system(s) for which program has been designed.
	Any that supports {\Maple}
\\
{\em RAM:}
  %RAM in bytes required to execute program with typical data.
	Highly problem dependent; from a few \si{\mebi\byte} to many \si{\gibi\byte} 
\\
{\em Number of processors used:}
  %If more than one processor.
\\
{\em Supplementary material:}
  % Fill in if necessary, otherwise leave out.
	Example worksheet \Filename{Manual.mw} explaining most features provided and including plenty of examples of Feynman integral computations.
\\
{\em Keywords:}
  % Please give some freely chosen keywords that we can use in a
  % cumulative keyword index.
	hyperlogarithms, polylogarithms, symbolic integration, computer algebra, Feynman integrals, $\varepsilon$-expansions
\\
{\em Classification:}
  %Classify using CPC Program Library Subject Index, see (
  % http://cpc.cs.qub.ac.uk/subjectIndex/SUBJECT_index.html)
	4.4 Feynman diagrams, 5 Computer Algebra
\\
{\em External routines/libraries:}
  % Fill in if necessary, otherwise leave out.
\\
{\em Subprograms used:}
  %Fill in if necessary, otherwise leave out.
\\
{\em Nature of problem:}
  %Describe the nature of the problem here.
	Feynman integrals and their $\varepsilon$-expansions in dimensional regularization can be expressed in the Schwinger parametrization as multi-dimensional integrals of rational functions and logarithms. 
	Symbolic integration of such functions therefore serves a tool for the exact and direct evaluation of Feynman graphs.
\\
{\em Solution method:}
  %Describe the method solution here.
	Symbolic integration of rational linear combinations of polylogarithms of rational arguments is obtained using a representation in terms of hyperlogarithms. The algorithms exploit their iterated integral structure.
\\
{\em Restrictions:}
  %Describe any restrictions on the complexity of the problem here.
	To compute multi-dimensional integrals with this method, the integrand must be linearly reducible, a criterion we state in section~\ref{sec:polynomial-reduction}.
	As a consequence, only a small subset of all Feynman integrals can be addressed.
\\
{\em Unusual features:}
  %Describe any unusual features of the program/problem here.
	The complete program works strictly symbolically and the obtained results are exact.
	Whenever a Feynman graph is linearly reducible, its $\varepsilon$-expansion can be computed to arbitrary order (subject only to time and memory restrictions) in $\varepsilon$, near any even dimension of space-time and for arbitrarily $\varepsilon$-dependent powers of propagators with integer values at $\varepsilon=0$.
	Also the method is not restricted to scalar integrals only, but arbitrary tensor integrals can be computed directly.
\\
{\em Additional comments:}
  %Provide any additional comments here.
	Further applications to parametric integrals, outside the application to Feynman integrals in the Schwinger parametrization, are very likely.
\\
{\em Running time:}
  %Give an indication of the typical running time here.
	Highly dependent on the particular problem through the number of integrations to be performed (edges of a graph), the number of remaining variables (kinematic invariants), the order in $\varepsilon$ and the complexity of the geometry (topology of the graph). Simplest examples finish in seconds, but the time needed increases beyond any bound for sufficiently high orders in $\varepsilon$ or graphs with many edges.
\\

\end{small}

%% main text
\section{Introduction}
\label{sec:introduction}
An important class of special functions is given by multiple polylogarithms \cite{Goncharov:MplCyclotomyModularComplexes,BorweinBradleyBroadhurstLisonek:SpecialValues} 
\begin{equation}
	\label{eq:def:Li}%
	\Li_{n_1,\ldots,n_r} (z_1,\ldots,z_r)
	\defas
	\sum_{0<k_1<\cdots<k_r}
	\frac{z_1^{k_1} \cdots z_r^{k_r}}{k_1^{n_1} \cdots k_r^{n_r}}
\end{equation}
of several complex variables $\vec{z}$, which generalize the traditional polylogarithms $\Li_n(z)$ of a single variable (the case $r=1$) studied for example in \cite{Lewin:PolylogarithmsAssociatedFunctions}. Many properties and relations of these multivalued functions can be formulated and studied conveniently in terms of combinatorial structures, which renders them suitable for symbolic algorithms that can be implemented on a computer.

This is mainly a consequence of their representation as a special class of iterated integrals \cite{Chen:II} and our preferred basis are the classic hyperlogarithms \cite{LappoDanilevsky} of
\begin{definition}%
	\label{def:hyperlog} %
	Given a finite set $\Sigma \subset \C$ containing $0\in \Sigma$, each word $w = \letter{\sigma_1}\ldots\letter{\sigma_n} \in \Sigma^{\times}$ ($\letter{\sigma}$ denotes the letter for $\sigma \in \Sigma$) defines the \emph{hyperlogarithm} $\Hyper{w}$ by setting $\Hyper{\letter{0}^n}(z) \defas \frac{\log^n z}{n!}$ and otherwise recursively applying
	\begin{equation}%
		\label{eq:def:hyperlog} %
		\Hyper{\letter{\sigma}w'}
		\defas
		\int_0^z \frac{\dd z'}{z'-\sigma} \Hyper{w'}(z').
	\end{equation} %
	We also abbreviate $\Hyper{\sigma_1, \ldots, \sigma_n} \defas \Hyper{\letter{\sigma_1}, \ldots, \letter{\sigma_n}}$ and write $\sigma^{(n)}$ for a sequence $\sigma,\ldots,\sigma$ of $n$ letters $\sigma$.
\hide{
	\begin{equation}
		\label{eq:def:hyperlog} %
		\Hyper{w}(z)
		\defas
		\begin{cases}
			\frac{\log^n z}{n!} & w =\letter{0}^n \\
			\int_0^z \frac{\dd z'}{z'-\sigma} \Hyper{w'}(z') & \text{otherwise, $w=\letter{\sigma} w'$} \\
		\end{cases} %
	\end{equation} %
} %
\end{definition}
These functions are also referred to as generalized harmonic polylogarithms (with linear weights) or Goncharov polylogarithms, since they relate to \eqref{eq:def:Li} via
\begin{equation}
	(-1)^r
	\Li_{\vec{n}} \left( \vec{z} \right)
	=
	\Hyper{0^{(n_r -1)},\sigma_r,\ldots,0^{(n_2 - 1)},\sigma_2,0^{(n_1-1)},\sigma_1}(z)
	\label{eq:Hyper-as-Li} %
\end{equation}
where $\vec{n} = (n_1,\ldots,n_r) \in \N^r$ and $\sigma_1,\ldots,\sigma_r \neq 0$ are such that
$
 \vec{z}
 =
	\left(
 		\frac{\sigma_2}{\sigma_1},
 		\frac{\sigma_3}{\sigma_2},
		\ldots,
		\frac{z}{\sigma_r}
	\right)
$,
equivalently 
$
	\sigma_i 
	= 
	z \prod_{k=i}^{r} z_k^{-1}
$.
Particle physicists observed special classes of hyperlogarithms in results of Feynman integral calculations. The most famous example is the case when $\Sigma \subseteq \set{-1,0,1}$, called harmonic polylogarithms in \cite{RemiddiVermaseren:HarmonicPolylogarithms}, and practical tools to compute with these are available like \cite{Maitre:HPL,Maitre:HPLComplex}. Some algorithms for general hyperlogarithms are also implemented in \cite{AblingerBluemleinSchneider:GeneralizedHarmonicSumsAndPolylogarithms}. However, the full power of definition~\ref{def:hyperlog} can be used not only to express the result of Feynman integrals, but actually to compute them in the first place.

Namely, the study \cite{Brown:MZVPeriodsModuliSpaces} of periods of moduli spaces of curves of genus zero computed multiple integrals
\begin{equation}
	f_k
	\defas
	\int_0^{\infty} 
	f_{k-1}(z_k)
	\ \dd z_k
	=
	\int_{0}^{\infty} \dd z_1
		\cdots
		\int_0^{\infty} \dd z_k
	\ f_0
	\label{eq:iteration-partial-integrals} %
\end{equation}
of certain polylogarithms $f_0(\vec{z})$ such that each of the partial integrals $f_k$ is a hyperlogarithm in the next integration variable $z_{k+1}$. This criterion on $f_0$ is called \emph{linear reducibility} in \cite{Brown:TwoPoint}, where the symbolic integration algorithm of such functions is explained and applied theoretically to some finite scalar single-scale Feynman integrals (massless propagators). In \cite{Brown:PeriodsFeynmanIntegrals} it was further shown that linear reducibility is actually fulfilled for an infinite family of non-trivial Feynman integrals, but still explicit results were missing.

This technique has then practically been used in \cite{ChavezDuhr:Triangles} to compute off-shell three-point functions and in \cite{Wissbrock:Massive3loopLadder,Wissbrock:New3loopHeavyFlavor,AblingerBluemleinRaabSchneiderWissbrock:Hyperlogarithms} to calculate operator insertions into propagator graphs containing a single non-zero mass scale. A further application to phase-space integrals related to Higgs production can be found in \cite{AnastasiouDuhrDulatMistlberger:SoftTripleRealHiggs}.

Unfortunately, none of these programs was made publicly available so far. This might partly be due to the fact that the exposition in \cite{Brown:TwoPoint} does not provide a simple method to obtain certain integration constants in a crucial intermediate step of the algorithm.
In fact, \cite{AnastasiouDuhrDulatMistlberger:SoftTripleRealHiggs} resorts to numeric evaluations to guess these constants and a similar approach is common to many applications of the symbol- and coproduct-calculus \cite{Duhr:HopfAlgebrasCoproductsSymbols,GoncharovSpradlinVerguVolovich:ClassicalPolylogarithmsAmplitudesWilsonLoops,DuhrGanglRhodes:PolygonsAndSymbols}. Also within the method of differential equations \cite{ArgeriMastrolia:FeynmanDiagramsDifferentialEquations}, boundary conditions occur that must be obtained separately, e.g\ through physical reasoning or separate computations of expansions in certain limits.

We close this gap and provide a complete implementation of the method \cite{Brown:TwoPoint} of symbolic integration using hyperlogarithms in the computer algebra system {\Maple} \cite{Maple}. This program was used in \cite{Panzer:MasslessPropagators,Panzer:DivergencesManyScales} to compute several non-trivial Feynman integrals (including divergences and complicated kinematics) and we hope that it will prove helpful in further applications by physicists and mathematicians alike.

Since our foremost goal was to supply a tool for the computation of Feynman integrals, we did not aim for a most general computer algebra framework to handle hyperlogarithms but instead focussed on this particular application. Still, the algorithms were implemented for very general situations and may be used for different problems as well.

For completeness let us mention that while we focus on polylogarithms as iterated integrals, the representation \eqref{eq:def:Li} as nested sums opens the door to completely different strategies like \cite{MochUwerWeinzierl:NestedSums} with implementations readily available \cite{Weinzierl:SymbolicExpansion,MochUwer:XSummer}. A lot of progress is being made on symbolic manipulation of sums and we like to point out \cite{AblingerBluemleinSchneider:GeneralizedHarmonicSumsAndPolylogarithms} and the numerous references therein.
However, we will not pursue this approach in our work.

\subsection{Plan of the paper}
\label{sec:plan-of-paper}%
In section~\ref{sec:algorithms} we present our algorithms to symbolically manipulate hyperlogarithms in sufficient detail so as to make an implementation straightforward. We follow the ideas of \cite{Brown:TwoPoint} where the reader might find illuminating examples and details. Our main original contribution is section~\ref{sec:reglim-algorithm} where we solve the problem of determination of integration constants mentioned above.

The {\Maple} implementation {\HyperProg} is presented in section~\ref{sec:implementation} and includes examples of its application to integration problems and for transformations of arguments of polylogarithms.

To apply these methods to multiple integrals \eqref{eq:iteration-partial-integrals}, we review the property of linear reducibility in section~\ref{sec:polynomial-reduction} and explain how to exploit the polynomial reduction algorithm contained in {\HyperProg}.

Section~\ref{sec:feynman-integrals} is devoted to our original motivation and main application: the calculation of Feynman integrals. In {\HyperProg} we supply a couple of commands to facilitate the work with Feynman graphs. Detailed examples and demonstrations are contained in the attached {\Maple} worksheet \Filename{Manual.mw}.

To ensure correctness of our program, we performed plenty of tests. Some of them are summarized in \ref{sec:tests} and provided in the file \Filename{HyperTests.mpl}.

Some combinatorial proofs were delegated to \ref{sec:proofs} and we supply a short reference of functions and options provided by {\HyperProg} in \ref{sec:reference}.

\section{Algorithms for hyperlogarithms}
\label{sec:algorithms}
We already mentioned references on hyperlogarithms, multiple polylogarithms and iterated integrals. In section~\ref{sec:tensor-algebra-iterated-integrals} we collect some standard results and fix our notations.

Afterwards we follow the ideas of \cite{Brown:TwoPoint} for the integration of hyperlogarithms and explain in detail how each step can be implemented combinatorially. In short, to compute $f_{k} = \int_0^{\infty} \dd z_{k} f_{k-1}(z_k)$ for some polylogarithm $f_{k-1}$, we proceed in three steps:
\begin{enumerate}
	\item
		Express $f_{k-1}$ as a hyperlogarithm in $z_k$.
	\item
		Find a primitive $F_{k-1}(z_k)$ such that $\partial_{z_k} F_{k-1} = f_{k-1}$.
	\item
		Evaluate the limits $f_k = \lim_{z_k \rightarrow \infty} F_k - \lim_{z_k \rightarrow 0} F_k$.
\end{enumerate}
Many of these operations are straightforward or explained with examples in \cite{Brown:TwoPoint}. All the work actually lies in step 1. above, which is the subject of the reviewing section~\ref{sec:reginf-as-hyperlog} and our additional algorithm of section~\ref{sec:reglim-algorithm} to symbolically compute constants of integration.

The original work \cite{Brown:MZVPeriodsModuliSpaces} in the setting of moduli spaces contains a lot of details, worked examples and geometric interpretations of the ideas employed. In particular we recommend sections 5 and 6 therein which develop the theory of hyperlogarithms tailored to our setup, including logarithmic regularization in detail.

Let us remark that instead of tracking a sequence \eqref{eq:iteration-partial-integrals} of one-dimensional iterated integrals, the natural approach would be to consider every $f_k$ as an iterated integral in several variables $z_{k+1}, z_{k+2}, \ldots$ simultaneously. This idea is pursued in \cite{BognerBrown:SymbolicIntegration} and their authors are currently finalizing an implementation of this method as well.

\subsection{The tensor algebra and iterated integrals}
\label{sec:tensor-algebra-iterated-integrals}%
The algebraic avatar of iterated integrals is the shuffle algebra
\begin{equation}
	T(\Sigma)
	\defas
	\bigoplus_{w \in \Sigma^{\times}} \Q w
	=
	\bigoplus_{n=0}^{\infty}
	T_n(\Sigma)
	,\ 
	T_n(\Sigma)
	\defas
	\left(\Q\Sigma \right)^{\tp n}
	\label{eq:def:shuffle-algebra} %
\end{equation}
spanned by all words over the alphabet $\Sigma$; some references for this Hopf algebra are \cite{Reutenauer:FreeLieAlgebras,Sweedler}. It is graded by the weight $n = \abs{w}$ counting the number of letters in a word ($w = \letter{\sigma_1}\ldots\letter{\sigma_n} \in \Sigma^n$).
Apart from the non-commutative concatenation product, it is equipped with the commutative shuffle product defined recursively by
\begin{equation}
	\label{eq:shuffle-product} %
	(\letter{\sigma}w) \shuffle (\letter{\tau} w')
	\defas
		\letter{\sigma}(w \shuffle \letter{\tau} w')
	+ \letter{\tau} (\letter{\sigma}w \shuffle w')
\end{equation}
until $\emptyWord \shuffle w = w \shuffle \emptyWord = w$ where $\emptyWord$ denotes the empty word which is the identity of $T(\Sigma)$. The coproduct $\Delta: T(\Sigma) \longrightarrow T(\Sigma) \tp T(\Sigma)$ of interest is the deconcatenation
\begin{equation}
	\label{eq:def:deconcatenation} %
	\Delta\left( \letter{\sigma_1}\ldots\letter{\sigma_n} \right)
	\defas
	\sum_{k=0}^{n} \letter{\sigma_1}\ldots\letter{\sigma_k} \tp \letter{\sigma_{k+1}} \ldots \letter{\sigma_n}.
\end{equation}
These combinatorial structures precisely capture the analytic properties of the iterated integrals \cite{Chen:II,Hain:GeometryMHSFundamentalGroup}
\begin{equation}
	\iInt_{\gamma} \letter{\sigma_1}\ldots\letter{\sigma_n}
	\defas
	\int_{0}^{1}
	\frac{\dd \gamma(z)}{\gamma(z) - \sigma_1}
	\iInt_{\restrict{\gamma}{[0,z]}} \letter{\sigma_2}\ldots\letter{\sigma_n}
	\label{eq:def:iterated-integral} %
\end{equation}
of differential one-forms $\letter{\sigma} = \frac{\dd z}{z - \sigma} \in \Omega^1(\C \setminus \Sigma)$ of a single variable $z$. With $\iInt_{\gamma} \emptyWord \defas 1$, \eqref{eq:def:iterated-integral} defines homotopy invariant functions of a path $\gamma: [0,1] \rightarrow \C \setminus \Sigma$ of integration which are often identified with the resulting multivalued analytic functions of the endpoint $z = \gamma(1) \in \C \setminus \Sigma$.

Observe that in \eqref{eq:def:hyperlog} we chose the singular base point $\gamma(0) = 0 \in \Sigma$ which is the reason why we had to define $\Hyper{\letter{0}}(z) \defas \log z$ specially and not by the divergent integral $\int_0^{z} \frac{\dd z}{z}$.

We extend the maps $w \mapsto \iInt_{\gamma} w$ and $w \mapsto \Hyper{w}$ linearly to the whole shuffle algebra $T(\Sigma)$. Then the fundamental properties of iterated integrals become
\begin{enumerate}
	\item $\iInt_{\gamma} w \cdot \iInt_{\gamma} w' = \iInt_{\gamma} (w \shuffle w')$, i.e. $\iInt_{\gamma}$ is multiplicative,

	\item By Chen's lemma, concatenation $\gamma \concat \eta$ of two paths with $\eta(1) = \gamma(0)$ gives for every word $w = \letter{\sigma_1} \ldots \letter{\sigma_n}$
	\begin{equation}
		\label{eq:Chens-lemma} %
			\iInt_{\gamma \concat \eta} w
			= \sum_{i=0}^{n} 
					\iInt_{\gamma} \letter{\sigma_1}\ldots\letter{\sigma_i}
					\cdot 
					\iInt_{\eta} \letter{\sigma_{i+1}}\ldots\letter{\sigma_n}
			.
	\end{equation}

	\item $\Q(z) \tp_{\Q} T(\Sigma) \ni f \tp w \mapsto \left[ \gamma \mapsto f(\gamma(1)) \cdot \iInt_{\gamma} w \right]$ is injective, so iterated integrals associated to different words are linearly independent with respect to rational (actually even for algebraic) prefactors $f$.
\end{enumerate}
The analogous properties hold for the hyperlogarithms $w \mapsto \Hyper{w}$ of \eqref{eq:def:hyperlog}. These functions $\Hyper{w}: \C\setminus \Sigma \longrightarrow \C$ are single-valued once we restrict to the simply connected domain where $0<\abs{z}<\min\setexp{\abs{\sigma}}{0\neq\sigma \in \Sigma}$ and $z \notin (-\infty,0]$, after fixing $\log$ to the principal branch with $\log 1 = 0$. In the sequel we will only consider such hyperlogarithms $f(z) = \Hyper{w}(z)$ that allow for an analytic continuation to all of $(0,\infty)$. This is necessary to give the integrals $\int_0^{\infty} f(z)\ \dd z$ we want to compute a well-defined value.

\subsection{Integration and differentiation}
\label{sec:integration-differentiation}%
We consider the algebra 
$
	L(\Sigma)
	\defas
	\regulars_{\Sigma} \left[ 
		\Hyper{w}(z): w \in \Sigma^{\times}
	\right]
$
spanned by hyperlogarithms with rational prefactors whose denominators factor linearly with zeros in $\Sigma$ only:
\begin{equation}
	\regulars_{\Sigma}
	\defas
	\Q\Big[
		z, 
		\frac{1}{z-\sigma}: \sigma \in \Sigma
	\Big].
	\label{eq:def:regulars}
\end{equation}
By construction we have $\partial_z \Hyper{\letter{\sigma_1}\ldots\letter{\sigma_n}}(z) = \frac{1}{z-\sigma_1} \Hyper{\letter{\sigma_2}\ldots\letter{\sigma_n}}(z)$ such that $L(\Sigma)$ is closed under $\partial_z$, while for any $f \in L(\Sigma)$ we can find primitives $F \in L^{+} (\Sigma)$, $\partial_z F(z) = f(z)$, in the enlarged algebra
$
	L^{+}(\Sigma)
	\defas
%	\regulars_{\Sigma}^{+} \tp_{\regulars_{\Sigma}} L(\Sigma)
	\regulars_{\Sigma}^{+} \left[
		\setexp{\Hyper{w}}{w\in\Sigma^{\times}}
	\right]
$
where
\begin{equation}
	\regulars_{\Sigma}^{+}
	\defas
	\regulars_{\Sigma}\left[
			\Sigma 
			\cup
			\setexp{\frac{1}{\sigma-\tau}}{\sigma,\tau\in\Sigma \ \text{and}\ \sigma\neq\tau} 
	\right].
	\label{eq:def:L+}
\end{equation}
Namely, a primitive for $g(z) \Hyper{w}(z)$ can be constructed by partial fractioning the rational prefactor
\begin{equation}
	g(z)
	=
		\sum_{\sigma\in\Sigma} \sum_{n\in\N} \frac{A_{\sigma,n}}{(z-\sigma)^n} 
		+ \sum_{n\in\N_0} A_n z^n
	\in
	\regulars_{\Sigma},
\end{equation}
setting $F=\Hyper{\letter{\sigma}w}(z)$ as a primitive of $\frac{\Hyper{w}(z)}{z-\sigma}$ and repeated use of the partial integration formulae 
\begin{align}
	\int \frac{\dd z\ \Hyper{w}(z)}{(z-\sigma)^{n+1}} 
	& =
		-\frac{\Hyper{w}(z)}{n(z-\sigma)^n} 
		+ \int \frac{\dd z\ \partial_z \Hyper{w}(z)}{n(z-\sigma)^n} ,
	\\
	\int \dd z\ z^n \Hyper{w}(z)
	&=
		\frac{z^{n+1}\cdot\Hyper{w}(z)}{n+1} 
		- \int \frac{\dd z\ z^n}{n+1} \partial_z \Hyper{w}(z)
	\label{}
\end{align}
to reduce the problem of finding a primitive to the case where the hyperlogarithm $\partial_z \Hyper{\letter{\sigma_1}\ldots\letter{\sigma_n}}(z) = \frac{\Hyper{\letter{\sigma_2}\ldots\letter{\sigma_n}}(z)}{z-\sigma}$ is of lower weight. This recursion terminates when $w$ becomes the empty word. Hence computation of a convergent integral $\int_0^{\infty} f(z) \dd z$ for $f\in L(\Sigma)$ reduces to obtaining a primitive $F \in L^+(\Sigma)$ of $f$ as described and evaluating the limits
\begin{equation}
	\int_0^{\infty} f(z)\ \dd z
	= \lim_{z \rightarrow \infty} F(z)
		- \lim_{z \rightarrow 0} F(z).
	\label{eq:integral-as-difference-of-limits} %
\end{equation}

\subsection{Divergences and logarithmic regularization}
\label{sec:divergences-regularization} %
	The singularities of $\Hyper{w}(z)$ at $z\rightarrow \tau \in \Sigma \cup \set{\infty}$ are at worst logarithmic, namely for any $w \in T(\Sigma)$ there is a decomposition
\begin{equation}
	\Hyper{w}(z)
	=
	\sum_{i=0}^{\abs{w}}
	f_{w,\tau}^{(i)}(z)
	\cdot
	\begin{cases}
		\log^i z, & \tau=\infty\\
		\log^i (z-\tau), & \tau\neq \infty\\
	\end{cases}
	\label{eq:regularization}
\end{equation}
with functions $f_{w,\tau}^{(i)}(z)$ uniquely defined upon the requirement of being holomorphic at $z \rightarrow \tau$; for $t=\infty$ this means holomorphy of $f_{w,\infty}^{(i)}\left( \frac{1}{z} \right)$ at $z\rightarrow 0$.
Note that $\Hyper{w}(z)$ is finite for $z\rightarrow\tau \notin \set{0,\infty}$ whenever $w$ does not begin with the letter $\letter{\tau}$.

The \emph{regularized limits} are defined for any $\tau$ as
\begin{equation}
	\AnaReg{z}{\tau} \Hyper{w}(z)
	\defas
	f_{w,\tau}^{(0)} (\tau),
	\label{eq:def:reglim}
\end{equation}
	such that $\lim_{z \rightarrow \tau} \Hyper{w}(z) = \AnaReg{z}{\tau} \Hyper{w}(z)$ whenever this limit is finite. 
	The advantage is then that by linearity,
	\begin{equation*}
		\lim_{z \rightarrow \tau}
		f(z)
		= \sum_{w \in \Sigma^{\times}} \lambda_{w} \AnaReg{z}{\tau} \Hyper{w}(z)
		\ \text{for}\ 
		f(z) 
		= 
			\sum_{w \in \Sigma^{\times}} \lambda_w \Hyper{w}(z)
	\end{equation*}
	can be computed for each word $w$ separately and is thus well suited for an implementation, even though the limits $\lim_{z \rightarrow \tau}\Hyper{w}(z)$ might diverge individually.
\begin{definition}
		For disjoint sets $A,B \subset \Sigma$ the projection
$
	\WordReg{A}{B}: T(\Sigma) \longrightarrow T(\Sigma)
$
	is determined by the requirements
	\begin{enumerate}
		\item $\WordReg{A}{B} (w \shuffle w') = \WordReg{A}{B}(w) \shuffle \WordReg{A}{B} (w')\ \forall w,w' \in \Sigma^{\times}$,

		\item $\WordReg{A}{B}(w) = w = \letter{\sigma_1}\ldots\letter{\sigma_n}$ if $\sigma_1 \notin B$ and $\sigma_n \notin A$,

		\item $\WordReg{A}{B}(w) = 0$ for all $1 \neq w \in A^{\times} \cup B^{\times}$.
	\end{enumerate}
	We write $\WordReg{\sigma}{\tau} \defas \WordReg{\set{\sigma}}{\set{\tau}}$ and suppress empty sets in the notation, e.g. $\WordReg{\sigma}{} = \WordReg{\sigma}{\emptyset}$ and $\WordReg{}{\tau} = \WordReg{\emptyset}{\tau}$.
\end{definition}
This \emph{shuffle-regularization} is a combinatorial operation that projects onto words that neither begin with a letter in $A$ nor end with a letter from $B$. 
Every word $w \in T(\Sigma)$ decomposes uniquely as
\begin{equation}
	w 
	=
		\sum_{a \in A^{\times}} 
		\sum_{b \in B^{\times}} 
			a \shuffle b \shuffle w_{A,B}^{(a,b)}
	\label{eq:shuffle-decomposition} %
\end{equation}
into such \emph{$A$-$B$-regularized} words $w_{A,B}^{(a,b)} \in \im \WordReg{A}{B}$ and thus $\WordReg{A}{B}(w) = w_{A,B}^{(1,1)}$. To compute \eqref{eq:shuffle-decomposition} we can use
\begin{lemma}
	\label{lemma:shuffle-head-tail-decomposition} %
	For $w = u \letter{\sigma} a$ with $a = \letter{a_1}\ldots\letter{a_n}$,
\begin{equation}
	w
	=
	\sum_{i=0}^n
		\left[	u
						\shuffle 
						(-\letter{a_i})\ldots(-\letter{a_1}) 
		\right]
		\letter{\sigma}
		\shuffle
		\letter{a_{i+1}}\ldots\letter{a_n}.
	\label{eq:shuffle-tail-decomposition} %
\end{equation}
So when $\sigma\notin A$ is the last letter of $w$ not in $A$, thus $a \in A^{\times}$, we deduce $\WordReg{A}{}(w) = (-1)^n \left( u \shuffle \letter{a_n}\ldots\letter{a_1} \right) \letter{\sigma}$.

Analogously $\WordReg{}{B} (b \letter{\sigma}u) = \letter{\sigma} \left( u \shuffle S (b) \right)$ for $b \in B^{\times}$ and $\sigma\notin B$, setting $S(b_1\ldots b_k) \defas (-b_k)\ldots(-b_1)$.

Finally note $\WordReg{A}{B} = \WordReg{A}{} \circ \WordReg{}{B} = \WordReg{}{B} \circ \WordReg{A}{}$.
\end{lemma}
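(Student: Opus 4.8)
The plan is to prove the ``master'' identity \eqref{eq:shuffle-tail-decomposition} first and then read off the three closing claims. For \eqref{eq:shuffle-tail-decomposition}, observe that since concatenation is $\Q$-bilinear, $(-\letter{a_i})\ldots(-\letter{a_1})=(-1)^{i}\letter{a_i}\ldots\letter{a_1}=S(\letter{a_1}\ldots\letter{a_i})$, where $S$ is the antipode of the shuffle Hopf algebra \cite{Reutenauer:FreeLieAlgebras}, so the right-hand side of \eqref{eq:shuffle-tail-decomposition} equals $\sum_{i=0}^{n}\big[u\shuffle S(\letter{a_1}\ldots\letter{a_i})\big]\letter{\sigma}\shuffle\letter{a_{i+1}}\ldots\letter{a_n}$. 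The one technical ingredient I would establish is the insertion identity
\begin{equation}
	(x\letter{c})\shuffle v=\sum_{v=v'v''}(x\shuffle v')\,\letter{c}\,v''
	\label{eq:proposal-insert}
\end{equation}
for any word $x$, letter $\letter{c}$ and word $v$, the sum running over all splittings of $v$ into a prefix $v'$ and a suffix $v''$ and $\letter{c}\,v''$ denoting concatenation; this follows by an easy induction on $\abs{x}+\abs{v}$ directly from the recursion \eqref{eq:shuffle-product}, the point being that in any shuffle of $x\letter{c}$ with $v$ the marked last letter $\letter{c}$ can only be followed by a terminal segment $v''$ of $v$, while the letters before it interleave $x$ with the complementary prefix $v'$.

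Applying \eqref{eq:proposal-insert} to each summand of the right-hand side of \eqref{eq:shuffle-tail-decomposition} (with $x=u\shuffle S(\letter{a_1}\ldots\letter{a_i})$, $\letter{c}=\letter{\sigma}$ and $v=\letter{a_{i+1}}\ldots\letter{a_n}$), swapping the two summations and regrouping by the length $j$ of the piece of $a$ placed to the left of $\letter{\sigma}$, the right-hand side turns into
\begin{equation}
	\sum_{j=0}^{n}\Big(u\shuffle\sum_{i=0}^{j}S(\letter{a_1}\ldots\letter{a_i})\shuffle\letter{a_{i+1}}\ldots\letter{a_j}\Big)\letter{\sigma}\,\letter{a_{j+1}}\ldots\letter{a_n}.
	\label{eq:proposal-regroup}
\end{equation}
The inner sum is $\mul\circ(S\tp\id)\circ\Delta$ applied to $\letter{a_1}\ldots\letter{a_j}$ (with product $\mul=\shuffle$ and $\Delta$ the deconcatenation \eqref{eq:def:deconcatenation}); by the antipode axiom — equivalently by a one-line induction on $j$ — it equals $\counit(\letter{a_1}\ldots\letter{a_j})\emptyWord$, hence vanishes unless $j=0$. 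Thus only the $j=0$ term of \eqref{eq:proposal-regroup} remains, namely $(u\shuffle\emptyWord)\letter{\sigma}\,\letter{a_1}\ldots\letter{a_n}=u\letter{\sigma}a=w$, proving \eqref{eq:shuffle-tail-decomposition}.

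The three consequences then drop out of the defining properties of the shuffle-regularization. Applying the multiplicative projection $\WordReg{A}{}$ to \eqref{eq:shuffle-tail-decomposition}, the $i$-th term becomes $\WordReg{A}{}\big([u\shuffle S(\letter{a_1}\ldots\letter{a_i})]\letter{\sigma}\big)\shuffle\WordReg{A}{}(\letter{a_{i+1}}\ldots\letter{a_n})$; for $i<n$ the second factor is a nonempty word over $A$ (as $a\in A^{\times}$) and is killed, while for $i=n$ the words occurring in $(-1)^{n}(u\shuffle\letter{a_n}\ldots\letter{a_1})\letter{\sigma}$ end in $\letter{\sigma}\notin A$ and (vacuously) start outside $\emptyset$, hence are fixed, yielding $\WordReg{A}{}(w)=(-1)^{n}(u\shuffle\letter{a_n}\ldots\letter{a_1})\letter{\sigma}$. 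The $\WordReg{}{B}$ formula is the mirror statement and follows by the same argument read from the left, or by conjugating with the reversal automorphism $\letter{\sigma_1}\ldots\letter{\sigma_k}\mapsto\letter{\sigma_k}\ldots\letter{\sigma_1}$ of the (commutative) shuffle algebra, which interchanges first and last letter. Finally, both composites $\WordReg{A}{}\circ\WordReg{}{B}$ and $\WordReg{}{B}\circ\WordReg{A}{}$ satisfy the three requirements characterizing $\WordReg{A}{B}$: multiplicativity is inherited, a word neither starting in $B$ nor ending in $A$ is fixed by both factors, and a nonempty word over $A$ (resp.\ $B$) is fixed by $\WordReg{}{B}$ (resp.\ $\WordReg{A}{}$) and then annihilated by the other factor since $A\cap B=\emptyset$; by uniqueness both composites coincide with $\WordReg{A}{B}$.

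The bulk of the work, and the only place needing real care, is the combinatorial bookkeeping between \eqref{eq:proposal-insert} and \eqref{eq:proposal-regroup}: one has to reindex so that the inner sum is recognizably the antipode convolution, after which the collapse of the double sum is automatic. Verifying the auxiliary identity \eqref{eq:proposal-insert} and checking that $\WordReg{A}{}\circ\WordReg{}{B}$ meets the defining properties of $\WordReg{A}{B}$ are routine.
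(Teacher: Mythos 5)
Your proof is correct, and it takes a genuinely different route from the paper's. The paper proves \eqref{eq:shuffle-tail-decomposition} by induction on $n$: it decomposes the outer shuffle on the right-hand side with respect to the \emph{last} letter, identifies the terms ending in $\letter{a_n}$ with $w$ via the induction hypothesis, and kills the terms ending in $\letter{\sigma}$ by recognizing them as $\bigl\{ u \shuffle (S \convolution \id)(\letter{a_1}\ldots\letter{a_n}) \bigr\} \letter{\sigma}$, which vanishes by the antipode axiom. You instead prove a standalone insertion identity — in any shuffle of $x\letter{c}$ with $v$, the marked letter $\letter{c}$ is followed by a suffix of $v$ and preceded by a shuffle of $x$ with the complementary prefix — and then collapse the entire double sum at once, invoking the antipode convolution $\mul \circ (S \tp \id) \circ \Delta = \unit\counit$ once for each position $j$ of $\letter{\sigma}$. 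Both arguments hinge on the same algebraic fact ($S \convolution \id = \unit\counit$ for the deconcatenation coproduct \eqref{eq:def:deconcatenation}), but yours is non-recursive at the top level and makes transparent \emph{why} the identity holds: the antipode annihilates every configuration in which part of $a$ slips to the left of $\letter{\sigma}$. The cost is that you must prove the insertion identity separately (itself an easy induction), and you should note that you apply it with $x$ a linear combination of words, which is fine by bilinearity. Your verification of the three closing claims — which the paper states without proof — is also sound; in particular the observation that a nonempty word over $A$ is fixed by $\WordReg{}{B}$ (since $A \cap B = \emptyset$) and then annihilated by $\WordReg{A}{}$ is exactly what is needed for the factorization $\WordReg{A}{B} = \WordReg{A}{} \circ \WordReg{}{B} = \WordReg{}{B} \circ \WordReg{A}{}$.
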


For $A=\set{0}$ and $B=\emptyset$, \eqref{eq:shuffle-decomposition} reads
$
	w = \sum_{i} \letter{0}^i \shuffle w_i
$
where $w_i$ do not end in $\letter{0}$. Since $\Hyper{w_i}(z)$ is holomorphic at $z\rightarrow 0$ and
$
	\Hyper{w}(z)
	= \sum_{i} \log^i z \cdot \Hyper{w_i} (z)
$
reveal $f_{0,w}^{i}(z) = \Hyper{w_i} (z)$ from \eqref{eq:regularization}, we can compute the limit
\begin{equation}
	\AnaReg{z}{0} \Hyper{w}(z)
	= \Hyper{\WordReg{0}{}(w)} (0)
	= 0
	\ \text{when}\ 
	w \in \Sigma^{\times} \setminus \set{1}.
	\label{eq:reg0}
\end{equation}
In fact our definition~\eqref{eq:def:hyperlog} is deliberately tuned such that the empty word $w=1 \mapsto \Hyper{w}(z) = 1$ is the only word in $\Sigma^{\times}$ with non-vanishing $\AnaReg{z}{0} \Hyper{w}(z)$.

\begin{lemma}
	\label{lemma:zero-expansion} %
	Let $w = \sum_i \letter{0}^i \shuffle w_i \in L(\Sigma)$ for $\WordReg{0}{}(w_i) = w_i$ not ending on $\letter{0}$. Then $\Hyper{w}(z) = \sum_i \frac{\log^i z}{i!} \Hyper{w_i}(z)$ and $\Hyper{w_i}(z)$ are holomorphic at $z \rightarrow 0$ and their series expansion $\Hyper{w_i}(z) = \sum_{n\geq 0} a_n z^n$ can be directly computed (recursively) from the iterated integral representation: Starting with the empty word $\Hyper{1}(z) = 1$, let $\Hyper{w}(z) = \sum_{n\geq 0} a_n z^n$. Then
	\begin{align}
		\Hyper{\letter{0}w}(z)
		&= 
		\sum_{n =1}^{\infty} \frac{a_n}{n} z^n
		\quad\text{and for any}\quad
		\sigma \in \Sigma \setminus\set{0},
		\\
		\Hyper{\letter{\sigma}w}(z)
		&=
		\frac{1}{-\sigma} \sum_{n,m = 0}^{\infty} \frac{a_n}{\sigma^m (n+m+1)} z^{n+m+1}.
		\label{eq:zero-expansion-recursive} %
	\end{align}
\end{lemma}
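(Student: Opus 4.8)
The three assertions of the lemma --- the $\log$-expansion of $\Hyper{w}$, the holomorphy of the $\Hyper{w_i}$ at the origin, and the coefficient recursion --- can all be read off the definition~\eqref{eq:def:hyperlog}, so the plan is to do exactly that. The expansion $\Hyper{w}(z) = \sum_i \frac{\log^i z}{i!}\Hyper{w_i}(z)$ comes for free: the given decomposition $w = \sum_i \letter{0}^i \shuffle w_i$ is a finite $\Q$-linear combination of shuffle products, and $w \mapsto \Hyper{w}$ is linear and multiplicative ($\Hyper{u \shuffle v} = \Hyper{u}\Hyper{v}$, the property listed for $\iInt_{\gamma}$ which holds verbatim for $\Hyper{}$), so $\Hyper{w}(z) = \sum_i \Hyper{\letter{0}^i}(z)\,\Hyper{w_i}(z)$, and $\Hyper{\letter{0}^i}(z) = \frac{\log^i z}{i!}$ is part of the definition of $\Hyper{}$.

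Next I would treat holomorphy by induction on word length: for every word $u$ that does \emph{not} end in the letter $\letter{0}$, the function $\Hyper{u}$ is holomorphic at $z = 0$, and $\Hyper{u}(0) = 0$ unless $u = \emptyWord$. The base case $u = \emptyWord$ is trivial. For the inductive step write $u = \letter{\sigma_1} u'$; then $u'$ again does not end in $\letter{0}$, and if $\sigma_1 = 0$ then necessarily $u' \neq \emptyWord$ (otherwise $u = \letter{0}$ would end in $\letter{0}$). By the inductive hypothesis $\Hyper{u'}$ is holomorphic at the origin, with a zero there in the case $\sigma_1 = 0$, so the integrand $\frac{\Hyper{u'}(z')}{z'-\sigma_1}$ in $\Hyper{u}(z) = \int_0^z \frac{\Hyper{u'}(z')}{z'-\sigma_1}\,\dd z'$ is holomorphic near $z' = 0$: the pole of $1/z'$ is cancelled by the zero of $\Hyper{u'}$ when $\sigma_1 = 0$, and the denominator does not vanish at the origin when $\sigma_1 \neq 0$. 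Integrating a function holomorphic at the origin gives a function holomorphic at and vanishing at the origin, which closes the induction; applying it to each word occurring in $w_i \in \im \WordReg{0}{}$ and extending by linearity yields the holomorphy claim.

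Finally, the recursion. Write $\Hyper{w}(z) = \sum_{n \geq 0} a_n z^n$, convergent for $\abs{z} < R$. Both formulas come from $\Hyper{\letter{\sigma}w}(z) = \int_0^z \frac{\Hyper{w}(z')}{z'-\sigma}\,\dd z'$. When $\sigma = 0$ --- so $w \neq \emptyWord$ and $a_0 = \Hyper{w}(0) = 0$ by the previous step --- we have $\frac{\Hyper{w}(z')}{z'} = \sum_{n \geq 1} a_n z'^{n-1}$, and termwise integration gives $\sum_{n \geq 1} \frac{a_n}{n} z^n$. When $\sigma \neq 0$ we expand $\frac{1}{z'-\sigma} = -\frac{1}{\sigma}\sum_{m\geq0}(z'/\sigma)^m$; on the disk $\abs{z'} < \min(R,\abs{\sigma})$ this and the Taylor series of $\Hyper{w}$ converge absolutely, so we may multiply them and integrate termwise to obtain $\frac{1}{-\sigma}\sum_{n,m\geq0}\frac{a_n}{\sigma^m(n+m+1)}z^{n+m+1}$. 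That the recursion, started from $\Hyper{\emptyWord} = 1$, reaches every $\Hyper{w_i}$ follows since a word not ending in $\letter{0}$, say $\letter{\sigma_1}\cdots\letter{\sigma_k}$ with $\sigma_k \neq 0$, is obtained by successively prepending $\sigma_k, \sigma_{k-1},\ldots,\sigma_1$, and every intermediate word still ends in $\sigma_k \neq 0$, hence stays holomorphic at $0$ by the induction above, so the formulas are legitimately applicable at each step.

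I expect the inductive holomorphy argument to be the only genuinely delicate point, the crucial observation being that the right hypothesis is ``$u$ does not end in $\letter{0}$'', \emph{not} a condition on the first letter of $u$: it is the trailing letters that govern the cancellation of the $1/z'$ singularity at each prepending step. For instance $\Hyper{\letter{\sigma}\letter{0}}(z) = \int_0^z \frac{\log z'}{z'-\sigma}\,\dd z'$ contains a $z \log z$ term and thus fails to be holomorphic at $0$, even though it does not begin with $\letter{0}$. Everything else is routine manipulation of absolutely convergent power series.
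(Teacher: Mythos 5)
Your proof is correct. The paper does not actually include a proof of this lemma (it is not among those collected in \ref{sec:proofs}) but presents it as a direct computation from the iterated integral representation, and your argument --- multiplicativity of $w\mapsto\Hyper{w}$ for the $\log$-expansion, induction on word length for holomorphy with the hypothesis correctly placed on the \emph{last} letter, and termwise integration of the product of the Taylor series with the geometric series for \eqref{eq:zero-expansion-recursive} --- is exactly that intended computation, carried out completely and with the one genuinely delicate point (why the trailing-letter condition, not a leading-letter condition, is what propagates through the recursion) properly identified.
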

For expansions up at infinity, we first introduce an intermediary point $u \in (0,\infty)$ to split up the integration using Chen's lemma \eqref{eq:Chens-lemma}, and then let $u \rightarrow \infty$:
\begin{equation}
	\Hyper{w}(z)
	= 
	\sum_{k=0}^n
	\AnaReg{u}{\infty} \iInt_{u}^{z} \letter{\sigma_1}\ldots\letter{\sigma_k}
	\cdot
	\AnaReg{u}{\infty} \Hyper{\letter{\sigma_{k+1}}\ldots\letter{\sigma_n}} (u).
	\label{eq:inf-expansion-deconcatenation} %
\end{equation}
\begin{definition}
	For a word $w=\letter{\sigma_1}\ldots\letter{\sigma_n}$, let
	\begin{equation}
		\WordReg{}{\infty}(w)
		\defas
			\sum_{k=1}^{n}
			\left( \letter{\sigma_{k}} - \letter{-1} \right)
			\left[ 
			(-\letter{-1})^{k-1}
				\shuffle
				\letter{\sigma_{k+1}}\ldots\letter{\sigma_n}
			\right]
		\label{eq:def:word-reginf} %
	\end{equation}
	denote the projection of $T(\Sigma)$ on words beginning with differences $(\letter{\sigma}-\letter{-1})$ that annihilates $\WordReg{}{\infty} (\letter{-1}^n) = 0$ for any $n>0$. Further set $\WordReg{0}{\infty} \defas \WordReg{}{\infty} \circ \WordReg{0}{}$.
\end{definition}
If 
$
	w
	=
	(\letter{\sigma}-\letter{-1}) w'
	\in \im (\WordReg{0}{\infty})
$, then
\begin{equation}
	\Hyper{w}(z)
	= \int_0^{z} \frac{(1+\sigma)\dd z'}{(z' - \sigma)(z'+1)}
		\Hyper{w'}(z')
	\label{eq:reginf-convergent-integral} %
\end{equation}
reveals that $\Hyper{w}(\infty) = \AnaReg{z}{\infty} \Hyper{w}(z)$ is finite as an absolutely convergent integral since $\Hyper{w'}(z')$ grows at worst logarithmically by \eqref{eq:regularization}. 
We therefore conclude
\begin{equation}
	\AnaReg{z}{\infty} \Hyper{w}(z)
	=
	\Hyper{\WordReg{0}{\infty}(w)} (\infty)
	\ \text{for any}\ 
	w \in T(\Sigma)
	\label{eq:reginf} %
\end{equation}
 from $\displaystyle\AnaReg{z}{\infty} \Hyper{\letter{-1}}(z) = 0 = \AnaReg{z}{\infty} \Hyper{\letter{0}}$ and
\begin{lemma}
	\label{lemma:word-reginf-decomposition} %
	For any $w,w' \in L(\Sigma)$, $\WordReg{}{\infty}(w\shuffle w') = \WordReg{}{\infty}(w) \shuffle \WordReg{}{\infty}(w')$ is multiplicative and for any word $w=\letter{\sigma_1}\ldots\letter{\sigma_n}$, the following identity holds:
\begin{equation}
	w
	=
		\sum_{k=0}^n \letter{-1}^{k}
		\shuffle
		\WordReg{}{\infty} \left( 
			\letter{\sigma_{k+1}}\ldots\letter{\sigma_n}
		\right).
	\label{eq:word-reginf-decomposition} %
\end{equation}
\end{lemma}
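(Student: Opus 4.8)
The plan is to realise $\WordReg{}{\infty}$ as a convolution product in the endomorphism algebra of the shuffle Hopf algebra, from which both assertions drop out formally. Recall that $(T(\Sigma),\shuffle,\Delta)$ is a commutative graded connected Hopf algebra \cite{Reutenauer:FreeLieAlgebras,Sweedler}, with antipode $S(\letter{\sigma_1}\ldots\letter{\sigma_n}) = (-1)^n\letter{\sigma_n}\ldots\letter{\sigma_1}$, and that $\End(T(\Sigma))$ carries the convolution product $\phi\convolution\psi\defas\mul\circ(\phi\tp\psi)\circ\Delta$ with unit $\unit\counit$. Let $\rho\in\End(T(\Sigma))$ be the letterwise substitution $\letter{\sigma}\mapsto-\letter{-1}$, so that $\rho(\letter{\sigma_1}\ldots\letter{\sigma_n}) = (-\letter{-1})^n$; since it replaces letters one at a time it is visibly a $\shuffle$-algebra homomorphism, and likewise $\bar\rho\defas\rho\circ S$ is the substitution $\letter{\sigma}\mapsto\letter{-1}$.

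The one computational step is to verify the identification
\begin{equation*}
	\WordReg{}{\infty} = \rho\convolution\id,
	\qquad\text{i.e.}\qquad
	\WordReg{}{\infty}(\letter{\sigma_1}\ldots\letter{\sigma_n}) = \sum_{k=0}^{n} (-\letter{-1})^k \shuffle \letter{\sigma_{k+1}}\ldots\letter{\sigma_n},
\end{equation*}
by matching the right-hand side against the defining formula \eqref{eq:def:word-reginf}; I would do this by induction on $n$, peeling off the first letter and using the shuffle recursion \eqref{eq:shuffle-product} to reorganise $(-\letter{-1})^k\shuffle\letter{\sigma_{k+1}}(\cdots)$ so that the sum over $k$ telescopes, the boundary term $-\letter{-1}\shuffle(-\letter{-1})^{n-1} = (-\letter{-1})^n$ supplying the $k=n$ summand. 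The same check shows that $\WordReg{}{\infty}$ acts as the identity on any word $(\letter{\sigma}-\letter{-1})w'$ beginning with a difference (since $\rho(\letter{\sigma}w') = \rho(\letter{-1}w')$) and annihilates $\letter{-1}^n$ for $n\geq1$, which also justifies calling $\WordReg{}{\infty}$ the stated projection; in particular one reads off the convention $\WordReg{}{\infty}(\emptyWord) = \emptyWord$ that \eqref{eq:def:word-reginf} leaves implicit.

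Granting the identification, everything else is the yoga of convolution. Because $T(\Sigma)$ is $\shuffle$-commutative, the convolution of two $\shuffle$-algebra homomorphisms is again one; applying this to the homomorphisms $\rho$ and $\id$ shows that $\WordReg{}{\infty} = \rho\convolution\id$ is multiplicative, i.e.\ $\WordReg{}{\infty}(w\shuffle w') = \WordReg{}{\infty}(w)\shuffle\WordReg{}{\infty}(w')$. For the decomposition \eqref{eq:word-reginf-decomposition}, note that $\bar\rho = \rho\circ S$ is the convolution inverse of $\rho$; indeed $(\bar\rho\convolution\rho)(\letter{\sigma_1}\ldots\letter{\sigma_n}) = \sum_{k=0}^{n}\letter{-1}^k\shuffle(-\letter{-1})^{n-k} = \big(\sum_{k=0}^{n}(-1)^{n-k}\binom{n}{k}\big)\letter{-1}^n = \delta_{n,0}\,\emptyWord$, so $\bar\rho\convolution\rho = \unit\counit$ and hence
\begin{equation*}
	\bar\rho\convolution\WordReg{}{\infty}
	= \bar\rho\convolution(\rho\convolution\id)
	= (\bar\rho\convolution\rho)\convolution\id
	= (\unit\counit)\convolution\id
	= \id .
\end{equation*}
Spelling out the left-hand side through $\Delta$ is precisely the identity $w = \sum_{k=0}^{n}\letter{-1}^k\shuffle\WordReg{}{\infty}(\letter{\sigma_{k+1}}\ldots\letter{\sigma_n})$. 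The main obstacle is therefore entirely contained in the identification step: one must carefully reconcile the two formulas for $\WordReg{}{\infty}$ and handle the edge cases $w=\emptyWord$ and $w=\letter{-1}^n$ in accordance with the conventions of \eqref{eq:def:word-reginf}; the Hopf-algebraic remainder is purely formal.
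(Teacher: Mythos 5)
Your proof is correct, but it takes a genuinely different route from the one in the paper. The paper proves \eqref{eq:word-reginf-decomposition} by direct induction on $n$: it expands the right-hand side via the shuffle recursion \eqref{eq:shuffle-product} with respect to the first letter, invokes the induction hypothesis for the part beginning with $\letter{-1}$, and collapses the remaining part using the telescoping identity $\sum_{k=0}^{i-1}\letter{-1}^k\shuffle(-\letter{-1})^{i-1-k}=(\letter{-1}-\letter{-1})^{\shuffle(i-1)}=\delta_{i,1}$. You instead identify $\WordReg{}{\infty}=\rho\convolution\id$ and let the convolution formalism do the work; the identification itself is correct and in fact needs no induction --- a single application of \eqref{eq:shuffle-product} to $(-\letter{-1})^k\shuffle\letter{\sigma_{k+1}}\ldots\letter{\sigma_n}$ followed by reindexing the sum over $k$ reproduces \eqref{eq:def:word-reginf} exactly. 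Your route buys two things: it proves the multiplicativity assertion (which the paper's appendix proof does not explicitly address, leaving it to the standard fact that a convolution of $\shuffle$-algebra morphisms is again one, using commutativity of $\shuffle$ and the bialgebra compatibility of deconcatenation), and it explains conceptually why \eqref{eq:word-reginf-decomposition} holds, namely as Möbius inversion by the convolution inverse $\bar\rho$ of $\rho$ --- your direct check $(\bar\rho\convolution\rho)(w)=\sum_k(-1)^{n-k}\binom{n}{k}\letter{-1}^n=\delta_{n,0}\,\emptyWord$ is exactly the same binomial cancellation that appears as the telescoping step in the paper, just relocated. The paper's induction is more elementary and self-contained; yours is more structural and slightly more general. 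Both are sound.
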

The second ingredient to compute \eqref{eq:inf-expansion-deconcatenation} lies in
\begin{lemma}
	\label{lemma:moebius-transform} %
	For any M\"{o}bius transform $f(z)=\frac{az+b}{cz+d}$,
\begin{equation}
	\frac{\dd f^{-1}(z)}{f^{-1}(z) - \sigma}
	=
	\frac{\dd z}{z - f(\sigma)} - \frac{\dd z}{z-f(\infty)}
	\label{eq:moebius-differentials} %
\end{equation}
wherefore $\iInt_{A}^{B} w = \iInt_{f(A)}^{f(B)} \WordTransformation{f}(w)$ with the linear (and multiplicative) map $\WordTransformation{f}$ that replaces any letter $\letter{\sigma}$ by
\begin{equation}
	\WordTransformation{f} (\letter{\sigma})
	\defas
	\letter{f(z)} - \letter{f(\infty)},
	\ \text{dropping any}\ 
	\letter{\infty} \defas 0.
	\label{eq:moebius-transform} %
\end{equation}
\end{lemma}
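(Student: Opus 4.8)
The statement has two parts: the differential identity \eqref{eq:moebius-differentials}, and the transformation law it implies for the iterated integrals \eqref{eq:def:iterated-integral}. The plan is to prove the first by an elementary computation and then bootstrap the second from it, using only the change-of-variables rule for \eqref{eq:def:iterated-integral} together with multilinearity and the linear independence of iterated integrals. For \eqref{eq:moebius-differentials} I would write $g\defas f^{-1}$, which is again a M\"obius transform, say $g(z)=\frac{dz-b}{a-cz}$. A direct calculation gives
\[
	g(z)-\sigma
	=
	\frac{(d+c\sigma)z-(b+a\sigma)}{a-cz}
	=
	\frac{d+c\sigma}{-c}\cdot\frac{z-f(\sigma)}{z-f(\infty)},
\]
using $f(\sigma)=\frac{a\sigma+b}{c\sigma+d}$ and $f(\infty)=\frac ac$. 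Applying $\dd\log$ to both sides, the constant prefactor drops out and one is left with $\frac{\dd g(z)}{g(z)-\sigma}=\frac{\dd z}{z-f(\sigma)}-\frac{\dd z}{z-f(\infty)}$, which is \eqref{eq:moebius-differentials}. Equivalently one may argue on $\Projective^1$ that $g^{\ast}\frac{\dd w}{w-\sigma}$ is a rational one-form whose only singularities are simple poles at $g^{-1}(\sigma)=f(\sigma)$ and $g^{-1}(\infty)=f(\infty)$ with residues $+1$ and $-1$; subtracting $\frac{\dd z}{z-f(\sigma)}-\frac{\dd z}{z-f(\infty)}$ leaves a rational one-form without poles on $\Projective^1$, hence $0$. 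The degenerate cases --- $c=0$, so that $f(\infty)=\infty$, or $c\sigma+d=0$, so that $f(\sigma)=\infty$ --- are precisely those in which one of the two terms is absent, consistent with the convention $\letter{\infty}\defas 0$ in \eqref{eq:moebius-transform}.

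For the second part I would replace the integration path $\gamma$ from $A$ to $B$ by $g\circ\delta$ with $\delta\defas f\circ\gamma$ running from $f(A)$ to $f(B)$. An induction on the word length, using the recursive definition \eqref{eq:def:iterated-integral} and the ordinary change-of-variables formula for the innermost integral, shows that $\iInt_{A}^{B}\letter{\sigma_1}\ldots\letter{\sigma_n}$ equals the iterated integral along $\delta$ whose one-forms are the pullbacks $g^{\ast}\letter{\sigma_1},\ldots,g^{\ast}\letter{\sigma_n}$. By the first part $g^{\ast}\letter{\sigma}=\letter{f(\sigma)}-\letter{f(\infty)}$, so expanding this iterated integral multilinearly in its arguments reproduces exactly $\WordTransformation{f}(\letter{\sigma_1}\ldots\letter{\sigma_n})$, and extending by linearity yields $\iInt_{A}^{B}w=\iInt_{f(A)}^{f(B)}\WordTransformation{f}(w)$ for all $w\in T(\Sigma)$. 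That $\WordTransformation{f}$ is multiplicative for the shuffle product then follows either from multiplicativity of $\iInt$ together with linear independence of iterated integrals over rational prefactors (property~3 of section~\ref{sec:tensor-algebra-iterated-integrals}), or directly by an easy induction on the recursion \eqref{eq:shuffle-product}.

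The main difficulty is not any single calculation but the bookkeeping around the point at infinity and the endpoints: one must check that $\delta=f\circ\gamma$ stays in $\C\setminus f(\Sigma)$ (equivalently, that $\gamma$ avoids $\Sigma$ \emph{and} $\infty$), handle the cases in which some $\sigma_i$, or $A$ or $B$, is sent to $\infty$, and confirm that dropping the letter $\letter{\infty}$ matches simply omitting the corresponding differential on $\Projective^1$. Once this is settled, the argument reduces to the two short steps above.
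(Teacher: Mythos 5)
Your proof is correct. The paper in fact gives no proof of this lemma at all (its proofs appendix only covers Lemmas~\ref{lemma:shuffle-head-tail-decomposition}, \ref{lemma:word-reginf-decomposition} and \ref{lemma:contour-splitting}; Lemma~\ref{lemma:moebius-transform} is treated as a standard fact), so there is nothing to compare against. Your factorization $g(z)-\sigma=\mathrm{const}\cdot\frac{z-f(\sigma)}{z-f(\infty)}$ followed by $\dd\log$, the residue argument on $\Projective^1$ as a cross-check, and the functoriality of \eqref{eq:def:iterated-integral} under reparametrization combined with multilinearity are exactly the standard route, and your handling of the degenerate cases $f(\sigma)=\infty$ or $f(\infty)=\infty$ correctly matches the convention $\letter{\infty}\defas 0$.
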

We apply this to
$
	\iInt_{u}^{z} w
	= \iInt_{1/u}^{1/z} \WordTransformation{\frac{1}{z}} (w)
$
and recall that $\lim_{u\rightarrow \infty} \iInt_{1/u}^{1/z} w = \Hyper{w}(\frac{1}{z})$ is finite for $w \in \im \left( \WordReg{0}{} \right)$ not ending in $\letter{0}$. Furthermore,
\begin{equation*}
	\AnaReg{u}{\infty} \iInt_{1/u}^{1/z} \letter{0}
	=
	\AnaReg{u}{\infty} \log \frac{u}{z}
	=
	\log\frac{1}{z}
	=
	\Hyper{\letter{0}}\left( \frac{1}{z} \right)
\end{equation*}
completes \eqref{eq:inf-expansion-deconcatenation} to a combinatorial algorithm of the form
\begin{equation}
	\Hyper{w}(z)
	= \sum_{w} \Hyper{\WordTransformation{\frac{1}{z}}(w_1)}\left( \frac{1}{z} \right)
		\cdot \Hyper{\WordReg{0}{\infty}\left( w_2 \right)}(\infty).
	\label{eq:inf-expansion}
\end{equation}
Employing \eqref{eq:zero-expansion-recursive}, this equation can be used to expand $\Hyper{w}(z)$ at $z\rightarrow \infty$ as a polynomial in $\log z$ and a power series in $\frac{1}{z}$. This suffices to compute $\lim_{z \rightarrow \infty} F(z)$ in \eqref{eq:integral-as-difference-of-limits}.

\subsection{Regularized limits as hyperlogarithms}
\label{sec:reginf-as-hyperlog}%
When we follow \eqref{eq:iteration-partial-integrals}, after taking the limits \eqref{eq:integral-as-difference-of-limits} we will from \eqref{eq:inf-expansion} have a representation of the partial integral $F_k$ in terms of expressions $\Hyper{\WordReg{0}{\infty}(w)}(\infty)$ that depend on the next integration variable $ t \defas z_{k+1}$ implicitly through the letters in the word $w$. To proceed with the integration process, we must rewrite $F_k$ as a hyperlogarithm in $t$.

So let $w=\letter{\sigma_1}\ldots\letter{\sigma_n}$ ($\sigma_n \neq 0$) with letters $\sigma_i(t)$ depending on a parameter $t$, then we can take the derivative $\partial_t \Hyper{w}(z)$ in the integrand of the iterated integral $\Hyper{w}$. Partial fractioning and partial integration suffice to prove 
	\begin{align*}
		\partial_t \Hyper{w}(z)
		&=
		\sum_{i=1}^{n-1}
			\left[
				\frac{
					\partial_t (\sigma_i(t) - \sigma_{i+1}(t))
				}{
					\sigma_i(t)-\sigma_{i+1}(t)
				}
			\right]
%			\cdot 
			\Hyper{
					\ldots\not\letter{\sigma_{i+1}}\ldots
				- \ldots \not\letter{\sigma_i}\ldots
			}(z)
		\\ &
		+
			\left[
				\frac{-\partial_t \sigma_1}{z - \sigma_1}
			\right]
%			\cdot
			\Hyper{\letter{\sigma_2}\ldots\letter{\sigma_n}}(z)
		- \left[
				\frac{\partial_t \sigma_n}{\sigma_n}
			\right]
%			\cdot
			\Hyper{\letter{\sigma_1}\ldots\letter{\sigma_{n-1}}}(z)
	\end{align*}
	where $\not\letter{\sigma_i}$ means to delete the letter $\letter{\sigma_i}$ from $w$. Applying $\AnaReg{z}{\infty}$ and exploiting $\AnaReg{z}{\infty} \partial_t = \partial_t \AnaReg{z}{\infty}$ yields
	\begin{align}
		&\partial_t \AnaReg{z}{\infty} \Hyper{w}(z)
		=
		-\left[ \partial_t \ln \sigma_n(t) \right]
		\cdot
		\AnaReg{z}{\infty}
		\Hyper{\ldots\not\letter{\sigma_n}}(z)
		\label{eq:diff-reginf}%
		\\ 
		&\quad+
		\sum_{i=1}^{n-1}
		\left[ \partial_{t} \ln (\sigma_i - \sigma_{i+1} ) \right]
		\cdot
		\AnaReg{z}{\infty}
		\Hyper{\ldots\not\letter{\sigma_{i+1}}\ldots-\ldots\not\letter{\sigma_i}\ldots}(z). 
		\nonumber %
	\end{align}
We assume that $\sigma_1,\ldots,\sigma_n \in \Q(t)$ are rational, such that any
$
	\sigma_i(t)-\sigma_j(t) 
	= c \prod_{\tau} (t - \tau)^{\lambda_{\tau}}
$
factors linearly\footnote{with zeros $\tau \in \overline{\Q}$ in the algebraic closure and integer multiplicities $\lambda \in \Z$} and thus
$
	\partial_t \ln \left[ \sigma_i(t) - \sigma_j(t) \right]
	= \sum_{\tau} \frac{\lambda_{\tau}}{t-\tau}
$
together with \eqref{eq:diff-reginf} prove
\begin{lemma}
	\label{lemma:reginf-is-hyperlog}%
	For rational letters $\Sigma = \set{\sigma_1(t), \ldots, \sigma_N(t)} \subset \Q(t) \setminus (0,\infty)$ without positive constants (such that the limit $\AnaReg{z}{\infty} \Hyper{w}(z)$ is well-defined for general $t$) and $w\in T(\Sigma)$,
	\begin{equation}
		\AnaReg{z}{\infty} \Hyper{w}(z)
		\in L(\Sigma_t)(t) \tp \AnaReg{t}{0} \AnaReg{z}{\infty} L(\Sigma)(t)
		\label{eq:reginf-is-hyperlog}
	\end{equation}
	is itself a hyperlogarithm in $t$ with algebraic letters
	\begin{equation}
		\Sigma_t
		\defas
		\bigg\{
			\text{zeros of}\ 
			\prod_{i<j} \big[ \sigma_i(t) - \sigma_j(t) \big]
		\bigg\}
		\subset
		\overline{\Q}
		.
		\label{eq:def:reglim-letters}%
	\end{equation}%
\end{lemma}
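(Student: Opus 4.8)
The plan is to run an induction on the weight $n=\abs{w}$ with the derivative formula \eqref{eq:diff-reginf} as the engine; the real content is bookkeeping the coefficient ring and the constant of integration. First I would make two preliminary reductions: by linearity it is enough to treat a single word $w\in\Sigma^{\times}$, and one may assume its last letter $\sigma_n\neq 0$, since combining \eqref{eq:reginf} with $\WordReg{0}{\infty}=\WordReg{}{\infty}\circ\WordReg{0}{}$ and the idempotency of the projection $\WordReg{0}{}$ gives $\AnaReg{z}{\infty}\Hyper{w}=\AnaReg{z}{\infty}\Hyper{\WordReg{0}{}(w)}$, and $\WordReg{0}{}(w)$ is a $\Q$-linear combination of words not ending in $\letter{0}$. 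Write $R_w(t)\defas\AnaReg{z}{\infty}\Hyper{w}(z)$, regarded as a function of $t$ through the rational letters $\sigma_i=\sigma_i(t)$.

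For the base case $n=0$ one has $R_{\emptyWord}=1\in L(\Sigma_t)(t)$, which also equals $\AnaReg{t}{0}\AnaReg{z}{\infty}\Hyper{\emptyWord}$, so it lies in the claimed space. For the inductive step I would assume the statement for all elements of $T(\Sigma)$ of weight below $n$. Because each $\sigma_i(t)-\sigma_j(t)\in\Q(t)$ factors as $c\prod_{\tau}(t-\tau)^{\lambda_{\tau}}$ with integer exponents $\lambda_{\tau}$, $\tau$ ranging over the zeros and poles of the difference, its logarithmic derivative $\partial_t\ln\big(\sigma_i(t)-\sigma_j(t)\big)=\sum_{\tau}\tfrac{\lambda_{\tau}}{t-\tau}$ is a $\Z$-linear combination of simple fractions with poles in $\Sigma_t$ (with $0$ always adjoined, as required for hyperlogarithms in $t$); the same holds for $\partial_t\ln\sigma_n(t)$. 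Substituting into \eqref{eq:diff-reginf} then writes $\partial_t R_w(t)$ as a finite sum of terms $\tfrac{c}{t-\tau}\,R_{w'}(t)$ with $\tau\in\Sigma_t$, $c\in\Z$ and $w'\in T(\Sigma)$ of weight $n-1$. By the induction hypothesis each $R_{w'}(t)\in L(\Sigma_t)(t)\tp\AnaReg{t}{0}\AnaReg{z}{\infty}L(\Sigma)(t)$, and multiplication by $\tfrac{1}{t-\tau}$ with $\tau\in\Sigma_t$ preserves this space by the definition \eqref{eq:def:regulars} of $\regulars_{\Sigma_t}$; hence $\partial_t R_w(t)$ lies there too.

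It then remains to integrate back in $t$ and pin down the constant. Since every element of $L(\Sigma_t)(t)\tp\AnaReg{t}{0}\AnaReg{z}{\infty}L(\Sigma)(t)$ has at worst logarithmic singularities as $t\to 0$ by \eqref{eq:regularization}, the regularized limit $\AnaReg{t}{0}R_w(t)$ exists, and by the very definition of $R_w$ it equals $\AnaReg{t}{0}\AnaReg{z}{\infty}\Hyper{w}(z)$, an element of $\AnaReg{t}{0}\AnaReg{z}{\infty}L(\Sigma)(t)$. Applying the primitive construction of section~\ref{sec:integration-differentiation} componentwise — the constant coefficients pull out of $\partial_t$ — yields a primitive of $\partial_t R_w$ inside $L^{+}(\Sigma_t)(t)\tp\AnaReg{t}{0}\AnaReg{z}{\infty}L(\Sigma)(t)$, and choosing its free additive constant so that it has vanishing regularized limit at $t\to 0$ gives $R_w(t)=\big(R_w(t)-\AnaReg{t}{0}R_w\big)+\AnaReg{t}{0}\AnaReg{z}{\infty}\Hyper{w}(z)$, with the bracket in $L(\Sigma_t)(t)\tp\AnaReg{t}{0}\AnaReg{z}{\infty}L(\Sigma)(t)$ (over the algebraic letters $\Sigma_t$ there is no harm in identifying $L^{+}$ with $L$) and the last summand in its right-hand factor. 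This is exactly the membership \eqref{eq:reginf-is-hyperlog}.

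The step I expect to be the main obstacle is the precise identification of the letter set: one must check that \emph{every} pole appearing in the coefficients $\partial_t\ln(\sigma_i-\sigma_{i+1})$ and $\partial_t\ln\sigma_n$ of \eqref{eq:diff-reginf} — in particular the poles of the $\sigma_i$ themselves, which do occur once the letters are genuinely rational rather than polynomial — lies among the zeros \eqref{eq:def:reglim-letters} of the differences (or is $0$), so that the recursion never introduces a letter outside $\Sigma_t$. One must likewise verify that the constant of integration stays inside $\AnaReg{t}{0}\AnaReg{z}{\infty}L(\Sigma)(t)$ rather than drifting into a larger space of periods, which is exactly what the identification $\AnaReg{t}{0}R_w=\AnaReg{t}{0}\AnaReg{z}{\infty}\Hyper{w}$ guarantees. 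The remaining points — multiplicativity surviving \eqref{eq:diff-reginf}, the strict drop in weight, and the existence of the $t$-primitive — are routine consequences of the structural results of sections~\ref{sec:integration-differentiation} and~\ref{sec:divergences-regularization}.
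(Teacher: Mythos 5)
Your argument is essentially the paper's own proof: the recursion on weight driven by the derivative formula \eqref{eq:diff-reginf}, with the logarithmic derivatives $\partial_t\ln(\sigma_i-\sigma_j)$ expanded into simple fractions over $\Sigma_t$ and the integration constant fixed as $\AnaReg{t}{0}\AnaReg{z}{\infty}\Hyper{w}(z)$ via \eqref{eq:reg0}, exactly as in \eqref{eq:reglim-integration-constant}. The one worry you flag — poles of the rational letters producing poles of $\sigma_i-\sigma_j$ that are not literally ``zeros'' — is already absorbed by the paper's convention that the factorization $c\prod_\tau(t-\tau)^{\lambda_\tau}$ allows integer multiplicities $\lambda_\tau\in\Z$ of either sign, so no genuine gap remains.
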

Namely, equation~\eqref{eq:diff-reginf} presents an effective recursive algorithm to compute \eqref{eq:reginf-is-hyperlog}: First translate all shorter words
$
	\AnaReg{z}{\infty} \Hyper{\ldots\not\letter{\sigma_i}\ldots}(z) 
$
such that \eqref{eq:diff-reginf} reads
\begin{equation}
	\partial_t
	\AnaReg{z}{\infty} \Hyper{w} (z)
	=
	\sum_{u \in \Sigma_t^{\times}, \tau \in \Sigma_{t}}
		\frac{\lambda_{\tau, u}}{t - \tau}
		\Hyper{u} (t)
		\cdot c_{u}
\end{equation}
for some multiplicities $\lambda_{\tau,u} \in \Z$ and constants $c_u$. So
\begin{equation}
	\AnaReg{z}{\infty} \Hyper{w} (z)
	=
	C
	+
	\sum_{u \in \Sigma_t^{\times}, \tau \in \Sigma_t}
	\lambda_{\tau, u}
	\Hyper{ \letter{\tau} u} (t)
	\cdot c_{u}
\end{equation}
is determined up to the integration constant 
\begin{equation}
	C
	=
	\AnaReg{t}{0} \AnaReg{z}{\infty}
	\Hyper{w}(z)
	\quad\text{ by \eqref{eq:reg0}}.
	\label{eq:reglim-integration-constant} %
\end{equation}

\subsection{Regularized limits of regularized limits at infinity}
\label{sec:reglim-algorithm}%
We now explain how to compute the regularized limits \eqref{eq:reglim-integration-constant} symbolically, without a need for numeric evaluations (which are for example used in \cite{AnastasiouDuhrDulatMistlberger:SoftTripleRealHiggs} as explained in its appendix D). The ideas we present in the first half of this section were very recently also sketched in \cite{AblingerBluemleinRaabSchneiderWissbrock:Hyperlogarithms}.

Note that the limits $C$ of \eqref{eq:reglim-integration-constant} are constant only with respect to the variable $t$, while in our applications these will in general still depend on further variables.

Let $w \in \Sigma^{\times}$ be a word with letters $\Sigma \subset \Q(t) \setminus (0,\infty)$ depending rationally on $t$. 
We can restrict to $w = \WordReg{0}{}(w)$ not ending with $\letter{0}$ since $\AnaReg{z}{\infty} \Hyper{\letter{0}}(z) = 0$.
The simplest possible case is
\begin{lemma}
	\label{lemma:reglim-simple} %
	When $\lim_{t \rightarrow 0} (\Sigma) \subset \Q\setminus(0,\infty)$ is finite and $w = \letter{\sigma_1}\ldots\letter{\sigma_n}$ ends in a letter with $\lim_{t\rightarrow 0} (\sigma_n) \neq 0$, then
	\begin{equation}
		\AnaReg{t}{0} \AnaReg{z}{\infty} \Hyper{w} (z)
		=
		\AnaReg{z}{\infty}
		\Hyper{\lim\limits_{t \rightarrow 0} w}(z).
		\label{eq:reglim-simple} %
	\end{equation} %
\end{lemma}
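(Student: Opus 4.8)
The plan is to reduce \eqref{eq:reglim-simple} to an application of dominated convergence for parametric integrals, the hypothesis on the last letter being exactly what keeps everything finite. Write $g(t) \defas \AnaReg{z}{\infty}\Hyper{w}(z)$ and $\sigma_i(0) \defas \lim_{t\to0}\sigma_i(t)$, so that $\lim_{t\to0}w = \letter{\sigma_1(0)}\cdots\letter{\sigma_n(0)}$. By \eqref{eq:reginf}, $g(t) = \Hyper{\WordReg{0}{\infty}(w)}(\infty)$, and since $\WordReg{0}{\infty}$ is a purely combinatorial map on formal letters it commutes with sending the letters to their limits: $\WordReg{0}{\infty}(\lim_{t\to0}w) = \lim_{t\to0}\WordReg{0}{\infty}(w)$, so the right-hand side of \eqref{eq:reglim-simple} equals $\Hyper{\lim_{t\to0}\WordReg{0}{\infty}(w)}(\infty)$. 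Because $w = \WordReg{0}{}(w)$ ends in $\letter{\sigma_n}$ with $\sigma_n(0)\neq0$, every term occurring in $\WordReg{0}{\infty}(w)$ has, by \eqref{eq:def:word-reginf}, the form $v = (\letter{\sigma_k} - \letter{-1})v'$ with $v'$ of weight $n-1$, letters in $\set{-1,\sigma_{k+1},\ldots,\sigma_n}$, and last letter $\letter{-1}$ or $\letter{\sigma_n}$ --- in either case a letter whose limit at $t=0$ is non-zero. This structural fact is what I would lean on throughout.

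The core step is to show that $g$ extends continuously to $t=0$ with $\lim_{t\to0}g(t) = \Hyper{\lim_{t\to0}\WordReg{0}{\infty}(w)}(\infty)$; then \eqref{eq:reglim-simple} follows since $\AnaReg{t}{0}$ reduces to the ordinary limit whenever that limit exists. For each term $v = (\letter{\sigma_k} - \letter{-1})v'$ I use the absolutely convergent representation \eqref{eq:reginf-convergent-integral},
\begin{equation*}
	\Hyper{v}(\infty)
	=
	\int_0^{\infty} \frac{(1+\sigma_k(t))\,\dd z'}{(z'-\sigma_k(t))(z'+1)}\,\Hyper{v'}(z')
\end{equation*}
(with the letters, hence $\Hyper{v'}(z')$, depending on $t$), and pass to the limit $t\to0$ under the integral sign. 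The integrand converges pointwise for every $z'>0$: the kernel does, because no letter $\sigma_i(t)$ ever meets $(0,\infty)$, so no pole is crossed; and $\Hyper{v'}(z') \to \Hyper{\lim_{t\to0}v'}(z')$ because the \emph{last} letter of $v'$ stays away from $0$, which is precisely what precludes a $\log(\text{last letter})$ blow-up (letters of $v'$ in leading or internal positions may well tend to $0$, and this does no harm). For domination one needs uniform bounds: as $z'\to\infty$ the at-worst-logarithmic growth \eqref{eq:regularization} gives $\abs{\Hyper{v'}(z')} \leq C(1+\log z')^{n-1}$ against a kernel of size $\bigo{(z')^{-2}}$; as $z'\to0$, when $\sigma_k(0)=0$ the limiting kernel behaves like $1/z'$, and one then uses $\Hyper{v'}(0)=0$ from \eqref{eq:reg0} (the word $v'$ does not begin with $\letter{0}$) to conclude $\Hyper{v'}(z')=\bigo{z'}$ near the origin, uniformly enough in $t$. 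Summing over all $v$ yields the claim. The auxiliary continuity and boundedness statements are themselves proved by a self-contained induction on weight, since the property "last letter non-degenerate" survives stripping off the leading letter of $v'$.

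It is also instructive to see informally from \eqref{eq:diff-reginf} why $g$ should carry no $\log t$-singularity at $t=0$: the coefficient $\partial_t\ln\sigma_n(t)=\sigma_n'/\sigma_n$ of the once-shortened word is holomorphic at $t=0$ because $\sigma_n(0)\neq0$, and for $i<n$ the coefficient $\partial_t\ln(\sigma_i-\sigma_{i+1})$ is holomorphic at $t=0$ unless $\sigma_i(0)=\sigma_{i+1}(0)$, in which case the two words $\ldots\not\letter{\sigma_{i+1}}\ldots$ and $\ldots\not\letter{\sigma_i}\ldots$ coincide in the limit, so their difference --- which that coefficient multiplies --- vanishes at $t=0$; hence $\partial_t g$, and therefore $g$, is regular at $t=0$. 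I expect the real work, in either approach, to be the uniform analytic control that legitimises interchanging $\lim_{t\to0}$ with the $z'$-integration: one must check that whenever an intermediate letter $\sigma_i(t)$ with $i<n$ degenerates to $0$, the ensuing singularity of the kernel at the origin is exactly cancelled by the vanishing \eqref{eq:reg0} of the accompanying hyperlogarithm, so that no boundary term survives and the limiting integral is still absolutely convergent.
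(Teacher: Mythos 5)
Your argument is correct and follows the paper's own proof in essentially the same way: reduce via \eqref{eq:reginf} to the absolutely convergent representation \eqref{eq:reginf-convergent-integral}, pass to the limit $t \rightarrow 0$ under the integral by dominated convergence, and commute $\WordReg{0}{\infty}$ with the substitution of letters by their limits. The only difference is that you spell out the domination estimates --- in particular the role of $\sigma_n(0)\neq 0$ in controlling the $z'\rightarrow 0$ endpoint when intermediate letters degenerate --- which the paper leaves at the level of ``essentially because the limiting integrand is absolutely integrable itself.''
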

\begin{proof}
	Using \eqref{eq:reginf} it suffices to investigate differences $w=(\letter{\sigma_1} - \letter{-1})\letter{\sigma_2} \ldots \letter{\sigma_n}$ and consider $\Hyper{w}(\infty)$ as the absolutely convergent integral \eqref{eq:reginf-convergent-integral} with integrand 
	\begin{equation}
		\label{eq:reginf-integrand-t-dependent} %
			\frac{1+\sigma_1(t)}{(z_1 + 1)(z_1 - \sigma_1(t))}
			\frac{1}{z_2-\sigma_2(t)}
			\cdots
			\frac{1}{z_n - \sigma_n(t)}.
	\end{equation}
	We can apply the theorem of dominated convergence to show $\lim_{t\rightarrow 0} \Hyper{w} (\infty) = \Hyper{\lim_{t\rightarrow 0} w} (\infty)$, essentially because the limiting integrand is absolutely integrable itself. So
	\begin{equation*}
		\AnaReg{t}{0} \AnaReg{z}{\infty} \Hyper{w} (z)
		=
		\AnaReg{t}{0} \Hyper{\WordReg{}{\infty}(w)} (\infty)
		=
		\Hyper{\lim\limits_{t \rightarrow 0} \WordReg{}{\infty}(w)}(\infty)
	\end{equation*}
	holds for all $w$ as allowed in \ref{lemma:reglim-simple}. Looking at \eqref{eq:def:word-reginf}, we may swap $\WordReg{}{\infty}$ and $\lim_{t\rightarrow 0}$ since the latter just substitutes letters $\letter{\sigma} \mapsto \letter{\sigma(0)}$. Finally apply \eqref{eq:reginf} again.
\end{proof}
This naive method can fail for three different reasons:
\begin{enumerate}[I]
	\item Some $\sigma(t) \in \Sigma$ diverges in the limit $t \rightarrow 0$.

	\item $\sigma_n(0) = 0$, because the limiting integrand \eqref{eq:reginf-integrand-t-dependent} at $t=0$ is not integrable:
		$\int_{0}^{z_{n-1}} \frac{\dd z_n}{z_n}$ diverges.

	\item Some letter has a limit $\sigma_i(0) \in (0,\infty)$ on the positive real axis, wherefore \eqref{eq:reginf-integrand-t-dependent} acquires a singularity at $z_i = \sigma_i(0)$ inside the domain of integration.
\end{enumerate}

We consider our main contribution as the algorithm to deal with these cases, which we present below. First note
\begin{lemma}[Scaling invariance]
	\label{lemma:scaling-invariance} %
	Given some $\alpha \in \Z$ and a word $w=\letter{\sigma_1}\ldots\letter{\sigma_n}$ in letters $\sigma_i \in \Q(t)$, let $w' \defas \letter{\sigma_1'}\ldots \letter{\sigma_n'}$ for $\sigma_i'(t) \defas t^{\alpha} \cdot \sigma_i(t)$. Then
	\begin{equation}
		\AnaReg{t}{0} \AnaReg{z}{\infty} \Hyper{w} (z)
		= \AnaReg{t}{0} \AnaReg{z}{\infty} \Hyper{w'} (z).
	\label{eq:scaling-invariance} %
	\end{equation} %
\end{lemma}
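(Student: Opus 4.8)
My plan is to deform the rescaling $\sigma_{i}\mapsto t^{\alpha}\sigma_{i}$ into a one–parameter family and to read off the $t$–dependence from the differential equation~\eqref{eq:diff-reginf}. As a harmless preliminary reduction I would replace $w$ by $\WordReg{0}{}(w)$: the projection $\WordReg{0}{}$ commutes with the substitution $\letter{\sigma}\mapsto\letter{t^{\alpha}\sigma}$ of every letter, since in Lemma~\ref{lemma:shuffle-head-tail-decomposition} neither the shuffles nor the signs change a letter's value and $t^{\alpha}\cdot 0=0$, while both $\AnaReg{z}{\infty}$ and $\AnaReg{t}{0}$ are linear. So I may assume $\sigma_{n}\neq 0$, and I write $w^{[j]}\defas\letter{\sigma_{j+1}}\ldots\letter{\sigma_{n}}$ for the word obtained from $w$ by dropping its first $j$ letters.

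The central step is to introduce a fresh scaling parameter $s$, to set $v^{(s)}\defas\letter{s\sigma_{1}}\ldots\letter{s\sigma_{m}}$ for a word $v=\letter{\sigma_{1}}\ldots\letter{\sigma_{m}}$ over $\Q(t)$, and to study $\Phi_{v}(s)\defas\AnaReg{z}{\infty}\Hyper{v^{(s)}}(z)$. Differentiating $\Phi_{w}(s)$ in $s$ with $t$ frozen, equation~\eqref{eq:diff-reginf} applies to $w^{(s)}$, and every logarithmic derivative occurring there is now $\partial_{s}\ln(s\sigma_{n})=\partial_{s}\ln(s\sigma_{i}-s\sigma_{i+1})=\tfrac1s$. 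With this common prefactor the sum over $i$ consists of differences of deleted–letter terms that telescope; they cancel the $\Phi$ with $\letter{\sigma_{n}}$ deleted against the boundary term and leave only the $\Phi$ with $\letter{\sigma_{1}}$ deleted, so that
\begin{equation*}
	\partial_{s}\Phi_{w}(s)=-\tfrac1s\,\Phi_{w^{[1]}}(s).
\end{equation*}
Solving this recursively on the length of $w$, with $\Phi_{\emptyWord}(s)=1$ and with the integration constant fixed by $\Phi_{v}(1)=\AnaReg{z}{\infty}\Hyper{v}(z)$, I obtain the closed form
\begin{equation*}
	\AnaReg{z}{\infty}\Hyper{w^{(s)}}(z)=\sum_{j=0}^{n}\frac{(-\log s)^{j}}{j!}\,\AnaReg{z}{\infty}\Hyper{w^{[j]}}(z).
\end{equation*}

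Now I specialise $s=t^{\alpha}$, so that $w^{(s)}=w'$ and $\log s=\alpha\log t$, and apply $\AnaReg{t}{0}$ to both sides. The term $j=0$ reproduces $\AnaReg{t}{0}\AnaReg{z}{\infty}\Hyper{w}(z)$; for $j\geq 1$ each summand equals $\tfrac{(-\alpha)^{j}}{j!}(\log t)^{j}\cdot\AnaReg{z}{\infty}\Hyper{w^{[j]}}(z)$, where $\AnaReg{z}{\infty}\Hyper{w^{[j]}}(z)$ is a hyperlogarithm in $t$ by Lemma~\ref{lemma:reginf-is-hyperlog} and $(\log t)^{j}=j!\,\Hyper{\letter{0}^{j}}(t)$. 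By multiplicativity of $\Hyper{\cdot}$ such a summand is thus a $\Q$–linear combination of products $\Hyper{\letter{0}^{j}}(t)\,\Hyper{u}(t)=\Hyper{\letter{0}^{j}\shuffle u}(t)$, hence of hyperlogarithms of words of length $\geq j\geq 1$; every non-empty word is killed by $\AnaReg{t}{0}$ according to~\eqref{eq:reg0}. By linearity of $\AnaReg{t}{0}$ the entire $j\geq 1$ part disappears and we conclude $\AnaReg{t}{0}\AnaReg{z}{\infty}\Hyper{w'}(z)=\AnaReg{t}{0}\AnaReg{z}{\infty}\Hyper{w}(z)$.

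The step I expect to demand the most care is the collapse of~\eqref{eq:diff-reginf} to the recursion for $\partial_{s}\Phi_{w}$: one has to check that~\eqref{eq:diff-reginf} is invoked strictly within its hypotheses (which is exactly what forces the reduction to $\sigma_{n}\neq 0$), that the deleted–letter terms are paired so that the sum genuinely telescopes, and that at $s=1$ the rescaled words specialise back to the originals so that the iteration closes up. After that, specialising $s=t^{\alpha}$ and discarding the powers of $\log t$ is routine, once one remembers that $\AnaReg{t}{0}$ annihilates every non-empty hyperlogarithm word in $t$.
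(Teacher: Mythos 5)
Your argument is correct, but it reaches \eqref{eq:scaling-invariance} by a genuinely different route than the paper. The paper's proof is a one-line change of variables: after the same reduction to $\sigma_n\neq 0$, substituting $z'=z''t^{-\alpha}$ in \eqref{eq:def:hyperlog} gives $\Hyper{w}(z)=\Hyper{w'}(zt^{\alpha})$ directly, and expanding $\log^j(zt^{\alpha})=(\log z+\alpha\log t)^j$ in the regularization \eqref{eq:regularization} at $\tau=\infty$ yields $\AnaReg{z}{\infty}\Hyper{w}(z)=\sum_i(\alpha\log t)^i f^{(i)}_{w',\infty}(\infty)$, whose $\AnaReg{t}{0}$ is just the $i=0$ term. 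You instead interpolate the rescaling by a parameter $s$, apply \eqref{eq:diff-reginf} in $s$ (where every logarithmic derivative degenerates to $\tfrac1s$, so the sum telescopes to $\partial_s\Phi_w=-\tfrac1s\Phi_{w^{[1]}}$), and integrate the resulting recursion to the closed form $\AnaReg{z}{\infty}\Hyper{w^{(s)}}(z)=\sum_j\tfrac{(-\log s)^j}{j!}\AnaReg{z}{\infty}\Hyper{w^{[j]}}(z)$ before specializing $s=t^{\alpha}$ and killing the $j\geq 1$ terms via \eqref{eq:reg0}. I verified the telescoping and the integration of the ODE; both are sound, your preliminary commutation of $\WordReg{0}{}$ with letterwise rescaling is correct, and the final annihilation of the $(\log t)^j$ terms is the same mechanism the paper uses. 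What your version buys is the explicit scaling expansion above, which identifies the paper's coefficients $f^{(j)}_{w,\infty}(\infty)$ as $\tfrac{(-1)^j}{j!}\AnaReg{z}{\infty}\Hyper{w^{[j]}}(z)$ --- slightly more information than the lemma requires. What it costs is a dependence on \eqref{eq:diff-reginf}, whose own derivation (via $\partial_t\AnaReg{z}{\infty}=\AnaReg{z}{\infty}\partial_t$) is the delicate part of section~\ref{sec:reginf-as-hyperlog}, plus the need for $\Phi_w(s)$ to be analytic along the path from $s=1$ to $s=t^{\alpha}$ in order to fix the integration constant; this holds because $s\sigma_i(t)\notin(0,\infty)$ for all $s>0$ whenever $\sigma_i(t)\notin(0,\infty)$, but it is an extra analytic input that the paper's substitution argument avoids entirely.
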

\begin{proof}
	Since $\AnaReg{z}{\infty} \Hyper{w}(z) = \AnaReg{z}{\infty} \Hyper{\WordReg{0}{} (w)}(z)$ we can restrict to $\sigma_n \neq 0$ and rescale $z' = z'' t^{-\alpha}$ in \eqref{eq:def:hyperlog} to conclude that $\Hyper{w}(z) = \Hyper{w'}\left( z t^{\alpha} \right)$.
	In regard of the regularization \eqref{eq:regularization} at $\tau=\infty$, this shows that
	\begin{equation*}
		f_{w,\infty}^{(i)} (z)
		=
		\sum_{j\geq i} \binom{j}{i} \log^{j-i} \left( t^{\alpha} \right)
		\cdot f_{w',\infty}^{(j)} \left(z t^{\alpha} \right).
	\end{equation*}
	Therefore we can conclude that indeed,
	\begin{align*}
%		\AnaReg{t}{0} \AnaReg{z}{\infty} \Hyper{w} (z)
		& \AnaReg{t}{0} f_{w, \infty}^{(0)}(\infty)
		= \AnaReg{t}{0} \sum_{i} 
				\left(\alpha \log t \right)^i 
				\cdot
				f_{w', \infty}^{(i)} (\infty)
		\\
		=& \AnaReg{t}{0} f_{w', \infty}^{(0)} (\infty)
		= \AnaReg{t}{0} \AnaReg{z}{\infty} \Hyper{w'} (z).
		\qedhere%
	\end{align*}
\end{proof}
A suitable such rescaling ensures finiteness of all $\sigma_i'(t)$ at $t \rightarrow 0$ and thus resolves problem I above.
\begin{example}
	\label{ex:scaling-invariance} %
	Applying lemmata \ref{lemma:scaling-invariance}, \ref{lemma:reglim-simple}, equation~\eqref{eq:reginf} and a M\"{o}bius transformation \ref{lemma:moebius-transform} with $f(z)=\frac{z}{1+z}$ shows
\begin{align*}
	&	\AnaReg{t}{0} \AnaReg{z}{\infty} \Hyper{\letter{-1}\letter{-1/t}}(z)
		=
		\AnaReg{t}{0} \AnaReg{z}{\infty} \Hyper{\letter{-t}\letter{-1}}(z)
		\\
	=&  \AnaReg{z}{\infty} \Hyper{\letter{0}\letter{-1}}(z)
		= \Hyper{(\letter{0}-\letter{-1})\letter{-1}}(\infty)
		= - \iInt_0^1 \!\letter{0}\letter{1}
		= \mzv{2}.
\end{align*} %
\end{example}
To address the issue II when $\sigma_n(0) = 0$, we make
\begin{definition}
	\label{def:reglim-laurent}%
	For any $0 \neq \sigma(t) \in \Q(t)$, the Laurent series
	$
		\sigma(t)
		=
		\sum_{n = N}^{\infty}
			t^{n} a_n
	$
	at $t\rightarrow 0$ with $a_N \neq 0$ defines a vanishing degree 
	$
		\deg_t (\sigma)
		\defas N
		\in \Z
	$
	and a leading coefficient 
	$
		\leadingCoefficient_t ( \sigma )
		\defas a_N 
		= \lim_{t\rightarrow 0} \left[ \sigma(t) \cdot t^{-N} \right]
	$. 
	For a word $w = \letter{\sigma_1}\ldots\letter{\sigma_n}$ set
	$
		\deg_t(w)
		\defas 
		\min \setexp{\deg_t(\sigma_i)}{1 \leq i \leq n}
	$.
\end{definition}
Whenever the final letter of a word $w=\letter{\sigma_1}\ldots\letter{\sigma_n}$ is of smallest vanishing degree  $\deg_t (\sigma_n) = \deg_t (w)$, rescaling $\sigma_k' \defas \sigma_k \cdot t^{-\deg_t(w)}$ ensures $\sigma_n'(0) = \leadingCoefficient_t(\sigma_n) \neq 0$ and lemma~\ref{lemma:reglim-simple} becomes applicable. Hence let us define
\begin{equation}
	\ReglimWord{t}{0} w
	\defas
	\lim_{t\rightarrow 0} \letter{\sigma_1'}\ldots\letter{\sigma_n'}
	\ \text{if}\ 
	\deg_t(w)
	=
	\deg_t(\letter{\sigma_n})
	,
	\label{eq:def:reglim-word} %
\end{equation}
e.g. $\ReglimWord{t}{0}(\letter{-1}\letter{-1/t}) = \letter{0}\letter{-1}$ in example~\ref{ex:scaling-invariance}.

But if $\deg_t(\sigma_n) > \deg_t(w)$, the rescaled $\sigma_n'(t)$ will vanish at $t \rightarrow 0$. In this case let
\begin{equation*}
	k
	\defas
	\max\setexp{i}{\deg_t (\sigma_i) = \deg_t (w)}
	<n
\end{equation*}
denote the last place in $w$ with minimal vanishing degree. Using \eqref{eq:shuffle-tail-decomposition} we can rewrite $w = \sum w_i \shuffle a_i$ such that each $w_i$ ends in $\letter{\sigma_k}$ and $a_i$ is a suffix of $\letter{\sigma_{k+1}}\ldots\letter{\sigma_n}$, i.e. $\abs{a_i}\leq n-k < n$. Applying this procedure recursively to each $a_i$ finally results in a representation 
\begin{equation}
	\label{eq:shuffle-decomposition-final-letter-minimal-vanishing} %
	w
	=
	\sum_i \left( w_{i,1} \shuffle \ldots \shuffle w_{i,r_i} \right)
\end{equation}
of $w$ in the shuffle algebra into elements $w_{i,j}$ each ending in some $\sigma_{i,j}$ with minimum vanishing degree $\deg_t\left( \sigma_{i,j} \right) = \deg_t\left( w_{i,j} \right)$.
\begin{example}
	For $w = \letter{-1}\letter{-t}$ this decomposition reads $w = \letter{-1} \shuffle \letter{-t} - \letter{-t}\letter{-1}$. So with $\AnaReg{z}{\infty} \Hyper{\letter{-1}}(z) = 0$,
	\begin{equation*}
		\AnaReg{t}{0} \AnaReg{z}{\infty} \Hyper{\letter{-1}\letter{-t}}(z)
		=
		-	\AnaReg{z}{\infty} \Hyper{\letter{0}\letter{-1}}(z).
	\end{equation*}
\end{example}
\begin{definition}
	\label{def:reglim-word} %
	For any alphabet $\Sigma \subset \Q(t)$ let
	\begin{equation}
		\label{eq:leading-alphabet} %
		\leadingCoefficient_t (\Sigma)
		\defas
		\set{0} \cupdot 
			\setexp{\leadingCoefficient_t(\sigma)}{\sigma \in \Sigma \setminus{0}}.
	\end{equation}
	Further we denote by $\ReglimWord{t}{0}: T(\Sigma) \longrightarrow T(\leadingCoefficient_t(\Sigma))$ the unique morphism of shuffle algebras that extends \eqref{eq:def:reglim-word} with $\ReglimWord{t}{0} (\letter{0}) = 0$.
\end{definition}
	This combinatorial regularized limit of words is a projection. For $w\in T(\Sigma)$ decomposed as \eqref{eq:shuffle-decomposition-final-letter-minimal-vanishing}, it is just
	\begin{equation}
		\ReglimWord{t}{0} w
		=
		\sum_i \left(
			\ReglimWord{t}{0} w_{i,1} \shuffle \ldots \shuffle \ReglimWord{t}{0} w_{i,r_i}
		\right).
		\label{eq:def:reglim-word-decomposed} %
	\end{equation}
	Putting together the lemmata \ref{lemma:scaling-invariance} and \ref{lemma:reglim-simple} with the linearity and multiplicativity of $\AnaReg{t}{0}$, $\AnaReg{z}{\infty}$ and $w \mapsto \Hyper{w}(z)$ we can compute regularized limits combinatorially using
\begin{corollary}
	\label{cor:reglim-word-no-positives} %
	For an alphabet $\Sigma \subset \C(t) \setminus (0,\infty)$ such that $\leadingCoefficient_t(\Sigma) \subset \C \setminus (0,\infty)$, any $w \in T(\Sigma)$ fulfills
	\begin{equation}
		\AnaReg{t}{0} \AnaReg{z}{\infty} \Hyper{w}(z)
		=
		\AnaReg{z}{\infty} \Hyper{\ReglimWord{t}{0}(w)}(z).
		\label{eq:reglim-word-no-positives} %
	\end{equation}
\end{corollary}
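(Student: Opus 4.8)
The plan is to peel the word apart in the shuffle algebra until every piece has the shape already settled by Lemma~\ref{lemma:reglim-simple}, controlling the limit by the scaling invariance of Lemma~\ref{lemma:scaling-invariance}. Throughout I would use that $\AnaReg{t}{0}$, $\AnaReg{z}{\infty}$, $w\mapsto\Hyper{w}(z)$ and $\ReglimWord{t}{0}$ are all multiplicative for the shuffle product, so by linearity it suffices to establish \eqref{eq:reglim-word-no-positives} after reducing first to a single word and then to a single shuffle factor. As a first reduction, using $\AnaReg{z}{\infty}\Hyper{\letter{0}}(z)=0$ on the left, $\ReglimWord{t}{0}(\letter{0})=0$ on the right, the decomposition $w=\sum_i\letter{0}^i\shuffle w_i$ of \eqref{eq:shuffle-decomposition} (with $A=\set{0}$, $B=\emptyset$), and multiplicativity, I may assume $w=\letter{\sigma_1}\ldots\letter{\sigma_n}$ does not end in $\letter{0}$.

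Next I would apply the tail identity \eqref{eq:shuffle-tail-decomposition} of Lemma~\ref{lemma:shuffle-head-tail-decomposition} repeatedly at the last letter of minimal vanishing degree, arriving at the decomposition \eqref{eq:shuffle-decomposition-final-letter-minimal-vanishing}, $w=\sum_i w_{i,1}\shuffle\cdots\shuffle w_{i,r_i}$ in $T(\Sigma)$, in which every factor $w_{i,j}$ ends in a letter realizing $\deg_t(w_{i,j})$; the recursion terminates because each application of \eqref{eq:shuffle-tail-decomposition} splits off a strictly shorter suffix, and interior letters $\letter{0}$ are harmless since their vanishing degree is $+\infty$ and they are therefore never the splitting letter. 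Applying $\AnaReg{t}{0}\AnaReg{z}{\infty}\Hyper{\cdot}(z)$ and expanding by multiplicativity, the claim \eqref{eq:reglim-word-no-positives} reduces --- in view of \eqref{eq:def:reglim-word-decomposed} --- to the single-factor identity $\AnaReg{t}{0}\AnaReg{z}{\infty}\Hyper{v}(z)=\AnaReg{z}{\infty}\Hyper{\ReglimWord{t}{0}v}(z)$ for a word $v$ whose final letter attains $\deg_t(v)$.

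For such a $v$ I would rescale every letter by $t^{-\deg_t(v)}$. By Lemma~\ref{lemma:scaling-invariance} this leaves $\AnaReg{t}{0}\AnaReg{z}{\infty}\Hyper{v}(z)$ unchanged, and the rescaled word $v'$ has, at $t\rightarrow 0$, only finite letter limits --- each either $0$ or a leading coefficient $\leadingCoefficient_t(\sigma)\in\C\setminus(0,\infty)$ by hypothesis --- while its last letter has a nonzero limit precisely because $\deg_t(v)$ was attained there. These are exactly the hypotheses of Lemma~\ref{lemma:reglim-simple}, whose dominated-convergence proof uses only the non-positivity of the limits and not that they lie in $\Q$; hence $\AnaReg{t}{0}\AnaReg{z}{\infty}\Hyper{v}(z)$ equals $\AnaReg{t}{0}\AnaReg{z}{\infty}\Hyper{v'}(z)$, which Lemma~\ref{lemma:reglim-simple} identifies with $\AnaReg{z}{\infty}\Hyper{\lim\limits_{t\rightarrow 0}v'}(z)=\AnaReg{z}{\infty}\Hyper{\ReglimWord{t}{0}v}(z)$, the last equality being the definition \eqref{eq:def:reglim-word} of $\ReglimWord{t}{0}$. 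Reinserting this into the shuffle expansion and recollecting the factors with \eqref{eq:def:reglim-word-decomposed} completes the argument.

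I expect the difficulty to be bookkeeping rather than conceptual: the one point deserving genuine care is that $\ReglimWord{t}{0}$ is a well-defined morphism of shuffle algebras, independent of how the decomposition \eqref{eq:shuffle-decomposition-final-letter-minimal-vanishing} is performed (as asserted in Definition~\ref{def:reglim-word}), together with a uniform verification that the hypotheses of Lemma~\ref{lemma:reglim-simple} --- finiteness and non-positivity of the rescaled limits and non-vanishing of the rescaled final letter --- hold for every factor $v$ produced by the decomposition.
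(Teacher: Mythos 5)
Your proposal is correct and is essentially the paper's own argument: the corollary is stated there as an immediate consequence of the preceding discussion, namely the reduction to words not ending in $\letter{0}$, the shuffle decomposition \eqref{eq:shuffle-decomposition-final-letter-minimal-vanishing} into factors ending in a letter of minimal vanishing degree, the rescaling of Lemma~\ref{lemma:scaling-invariance}, and the dominated-convergence Lemma~\ref{lemma:reglim-simple}, recombined by multiplicativity exactly as you describe. Your two flagged caveats --- that Lemma~\ref{lemma:reglim-simple} extends verbatim from $\Q$ to $\C$-valued limits, and that the well-definedness of $\ReglimWord{t}{0}$ as a shuffle morphism is asserted rather than proved in Definition~\ref{def:reglim-word} --- are accurate and match the level of detail the paper itself supplies.
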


\begin{example}
	For $a,b,c \in \C\setminus[0,\infty)$, the decomposition of the form
	\eqref{eq:shuffle-decomposition-final-letter-minimal-vanishing} for	the word $	w=
	\letter{a}\letter{bt}\letter{ct^2} $ is
	\begin{equation*}
		w
		=
		\letter{a} \shuffle \letter{bt} \shuffle \letter{ct^2}
		 - \letter{a} \shuffle \letter{ct^2} \letter{bt}
		 - \letter{bt}\letter{a} \shuffle \letter{ct^2}
		 + \letter{ct^2}\letter{bt}\letter{a}
 \end{equation*}
	and shows
	$
		\AnaReg{t}{0}\AnaReg{z}{\infty} \Hyper{w}(z)
		=
		\AnaReg{z}{\infty} \Hyper{u}(z)
	$ 
	for
	\begin{equation*}
		u
		= \letter{a} \shuffle \letter{b} \shuffle \letter{c}
		 - \letter{a} \shuffle \letter{0} \letter{b}
		 - \letter{0}\letter{a} \shuffle \letter{c}
		 + \letter{0}\letter{0}\letter{a}.
	\end{equation*}
\end{example}
We still need to address problem III on our agenda: What happens when some $\leadingCoefficient_t(\sigma_{k}) \in (0,\infty)$ approaches a positive value?
In this situation, $\AnaReg{t}{0} \AnaReg{z}{\infty} \Hyper{w}(z)$ can have a discontinuity along $t\in[0,\infty)$. For example,
\begin{equation*}
	\AnaReg{z}{\infty} \Hyper{\letter{t}}(z)
	= \AnaReg{z}{\infty} \log \frac{z-t}{-t}
	= \log \frac{1}{-t}
	= \ln \abs{t} - \imag \arg (-t)
\end{equation*}
is defined only when $t \in \C \setminus [0,\infty)$; otherwise $\Hyper{\letter{t}}(z)$ is not well-defined for real $z\geq t$.
So to make sense of $\AnaReg{t}{0}$ we must tie $t \in \Halfplane^{\pm} \defas \setexp{z \in \C}{\pm \Imaginaerteil z > 0}$ to either the upper or lower half-plane resulting in
\begin{equation*}
	\AnaReg{t}{0+\imag \varepsilon} \AnaReg{z}{\infty} \Hyper{\letter{t}}(z)
	=	-\imag \pi
	\quad\text{and}\quad
	\AnaReg{t}{0-\imag \varepsilon} \AnaReg{z}{\infty} \Hyper{\letter{t}}(z)
	=	\imag \pi.
\end{equation*}
\begin{definition}
	\label{def:alphabet-above-below-partition} %
	Choosing $t \in \Halfplane^{+}$ partitions the alphabet
\begin{equation}
	\label{eq:alphabet-above-below-partition} %
	\Sigma
	=
	\widetilde{\Sigma} \cupdot \Sigma^{+} \cupdot \Sigma^{-}
	\subset
	\C(t) \setminus (0,\infty)
\end{equation}
into non-positive
$
	\leadingCoefficient_t ( \widetilde{\Sigma} )
	\cap
	(0,\infty)
	=
	\emptyset
$
and letters with 
$
	\leadingCoefficient_t ( \Sigma^{\pm} )
	\subset
	(0,\infty)
	$. These are separated by $\Sigma^{\pm} \subset \Halfplane^{\pm}$ for sufficiently small $\Realteil t$ and infinitesimal $\Imaginaerteil t > 0$.
	In particular we note that whenever $\leadingCoefficient_t(\sigma) \in (0,\infty)$,
	\begin{equation}
		\deg_t(\sigma)<0
		\Rightarrow
		\sigma \in \Sigma^{-}
		\ \text{and}\ 
		\deg_t(\sigma)>0
		\Rightarrow
		\sigma \in \Sigma^{+}
		.
	\end{equation}
	We denote the finite positive limits by
	\begin{equation}
		\label{eq:def:alphabet-finite-positive-limits} %
		\Sigma_0^{\pm}
		\defas
		\setexp{
				\lim_{t\rightarrow 0} \sigma(t)
		}{
				\sigma \in \Sigma^{\pm} 
				\ \text{and}\ 
				\deg_t(\sigma) = 0
		}.
	\end{equation}
\end{definition}
\begin{figure}
	\begin{center}
		\includegraphics[width=0.9\columnwidth]{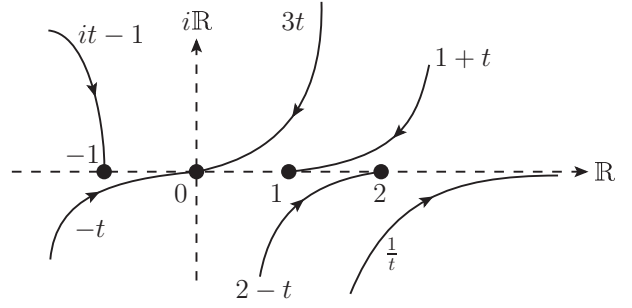}%
	\end{center}%
	\caption{This graph shows the limits of $\Sigma$ in example~\ref{ex:above-below-partition} when $t \rightarrow 0$ with positive real part and small positive imaginary part.}%
	\label{fig:above-below-limits}%
\end{figure}
\begin{example}
	\label{ex:above-below-partition} %
	For $\Sigma = \set{-1+\imag t,-t, 3t, 1+t,2-t, \frac{1}{t}}$, the limit $t\rightarrow  0 + \imag \varepsilon$ is shown in figure~\ref{fig:above-below-limits} and \eqref{eq:alphabet-above-below-partition} reads
	\begin{equation*}
		\widetilde{\Sigma}
		= \set{-1+\imag t, 0, -t}
		,
		\Sigma^{+}
		= \set{1+t, 3t}
		,
		\Sigma^{-}
		= \set{2-t,\frac{1}{t}}
		.
	\end{equation*} %
\end{example}
Now consider a word $w=\letter{\sigma_1}\ldots\letter{\sigma_n}$ with $\deg_t(\sigma_n) = \deg_t(w)$. Only the letters 
$
	\Sigma_w 
	\defas
	\setexp{\sigma_k}{\deg_t(\sigma_k) = \deg_t(w)}
$
play a role since after rescaling $\sigma_k'(t) = \sigma_k(t)\cdot t^{-\deg_t(w)}$ by lemma~\ref{lemma:scaling-invariance}, all other letters have $\deg_t(\sigma_k') = \deg_t(\sigma_k) - \deg_t(w) > 0$ and therefore approach $0 \notin (0,\infty)$ in the limit $t \rightarrow 0$ of \eqref{eq:reglim-simple}.
By homotopy invariance,
\begin{equation}
	\AnaReg{t}{0 + \imag\varepsilon}
	\AnaReg{z}{\infty} \Hyper{w}(z)
	=
	\iInt_{\gamma} \ReglimWord{t}{0} (w)
	\label{eq:reglim-contour-deformation} %
\end{equation}
is the iterated integral along a smooth deformation $\gamma$ of the originally real integration contour $[0,\infty)$. It avoids the positive letters among $\leadingCoefficient_t(\Sigma)$ as follows:
\begin{itemize}
	\item[]\hspace{-5mm}$\deg_t(w)<0$:
		$\Sigma_w \setminus \widetilde{\Sigma} \subset \Sigma^{-}$, $\gamma$ passes above all $\leadingCoefficient_t\left( \Sigma_w \right)$,

	\item[]\hspace{-5mm}$\deg_t(w)>0$:
		$\Sigma_w \setminus \widetilde{\Sigma} \subset \Sigma^{+}$, $\gamma$ passes below all $\leadingCoefficient_t\left( \Sigma_w \right)$,

	\item[]\hspace{-5mm}$\deg_t(w)=0$: 
		$	\lim_{t\rightarrow 0} (\Sigma_w \setminus \widetilde{\Sigma} )
			\subset
			\Sigma_0^{-} \cup \Sigma_0^{+}
		$ and $\gamma$ passes above $\Sigma_0^{-}$ and below $\Sigma_0^{+}$ as illustrated in figure~\ref{fig:contour-deformation} for example~\ref{ex:above-below-partition}. We must require that
		$
			\Sigma_0^{+} \cap \Sigma_0^{-}
			=
			\emptyset
		$
		as otherwise $\gamma$ is pinched between letters from $\Sigma^{+}$ and $\Sigma_{-}$ in the limit $t \rightarrow 0$. This situation did not occur in our applications but could be incorporated in the future.
\end{itemize}
\begin{figure}
	\begin{center}
		\includegraphics[width=0.9\columnwidth]{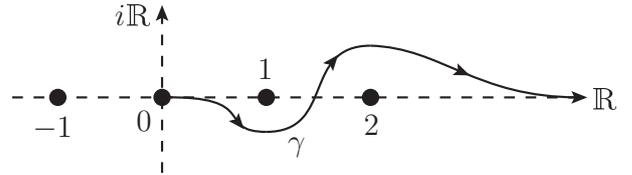}%
	\end{center}%
	\caption{The letters $\set{1+t,2-t} \subset \Sigma$ in example~\ref{ex:above-below-partition} induce a deformation of the real integration path $[0,\infty)$ towards $\gamma$, which avoids the positive limits in passing below $\Sigma_0^{+} = \set{1}$ and above $\Sigma_0^{-} = \set{2}$.}%
	\label{fig:contour-deformation}%
\end{figure}
In order to keep the implementation simple, we express \eqref{eq:reglim-contour-deformation} again in the form $\AnaReg{z}{\infty} \Hyper{v}(z)$ with $v$ not containing any positive letters (such that $\Hyper{v}(z)$ is single-valued on $z\in(0,\infty)$ and does not need additional specification of the contour $\gamma$). This is achieved by splitting up the contour $\gamma = \eta_u \concat \gamma_u$ at $u>0$ with the straight path $\eta_u$ from $0$ to $u$. So for $w=\letter{\sigma_1}\ldots\letter{\sigma_n}$ with $\sigma_n \neq 0$, Chen's lemma~\eqref{eq:Chens-lemma} takes the form
\begin{equation}
	\label{eq:chens-lemma-contour-splitting}%
	\iInt_{\gamma} w
	=
	\sum_{k=0}^{n}
		\iInt_{\gamma_u} \letter{\sigma_1}\ldots\letter{\sigma_k}
		\cdot
		\Hyper{\sigma_{k+1},\ldots,\sigma_n} (u)
\end{equation}
where $u < \tau \defas \min \left( \set{\sigma_1,\ldots,\sigma_n} \cap (0, \infty) \right)$ approaches the first potential branch point of $\Hyper{w}(z)$ on the positive real axis, see figure~\ref{fig:contour-split}. In the limit $u \rightarrow \tau$ we obtain
\begin{figure}
	\begin{center}
		\includegraphics[width=0.9\columnwidth]{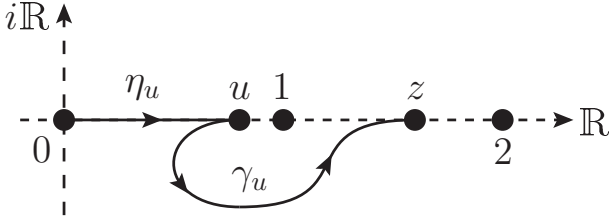}
	\end{center}
	\caption{The contour $\gamma$ of example~\ref{ex:above-below-partition} shown in figure~\ref{fig:contour-deformation} is homotopic to the splitting $\eta_u \concat \gamma_u$.}
	\label{fig:contour-split} %
\end{figure}
\begin{lemma}
	\label{lemma:contour-splitting} %
	Let $w=\letter{\sigma_1}\ldots\letter{\sigma_n} \in \Sigma^{\times}$ with $\sigma_n \neq 0$ and denote by
	$
		\tau
		\defas
		\min \left( 
			\Sigma
			\cap
			(0,\infty)
		\right)
	$
	the first (smallest) potential branch point of $\Hyper{w}(z)$ on the positive real axis.
	The analytic continuation past $\tau < z = \gamma(1)$ along $\gamma$ is
	\begin{equation}
		\Hyper{w}(z)
		=
		\sum_{k=0}^n
		\iInt_{\tau}^{z}
		(\letter{\sigma_1}\ldots\letter{\sigma_k})
		\cdot
		\Hyper{\WordReg{}{\tau} (\letter{\sigma_{k+1}}\ldots\letter{\sigma_n})} (\tau)
		\label{eq:contour-splitting} %
	\end{equation}
	where $\iInt_{\tau}^{z}$ denotes the iterated integral \eqref{eq:def:iterated-integral} whenever $\sigma_k \neq \tau$. It is extended to all words by imposition of $\iInt_{\tau}^z (w \shuffle v) = \iInt_{\tau}^{z} (w) \cdot \iInt_{\tau}^{z} (v)$ and letting $\gamma$ determine the branch of
	\begin{equation}
		\iInt_{\tau}^{z} \letter{\tau}
		\defas
		\iInt_{\gamma} \letter{\tau}
		=
		\pm \imag \pi
		+
		\log\frac{z-\tau}{\tau}
		\quad\text{for}\quad
		z > \tau.
		\label{eq:contour-splitting-branch} %
	\end{equation}
\end{lemma}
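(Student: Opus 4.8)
The plan is to obtain \eqref{eq:contour-splitting} as the limit $u\to\tau^-$ of Chen's lemma \eqref{eq:Chens-lemma} applied to the decomposition $\gamma=\eta_u\concat\gamma_u$, where $\eta_u$ is the straight segment from $0$ to $u\in(0,\tau)$ and $\gamma_u$ the part of $\gamma$ from $u$ to $z=\gamma(1)$. Since $\tau=\min(\Sigma\cap(0,\infty))$, the segment $\eta_u$ meets no letter of $\Sigma$, so $\iInt_{\eta_u}v=\Hyper{v}(u)$ for every word $v$, and \eqref{eq:Chens-lemma} takes the form \eqref{eq:chens-lemma-contour-splitting},
\begin{equation*}
	\iInt_{\gamma} w
	=
	\sum_{k=0}^{n} \iInt_{\gamma_u}(\letter{\sigma_1}\ldots\letter{\sigma_k}) \cdot \Hyper{\sigma_{k+1},\ldots,\sigma_n}(u),
\end{equation*}
an identity valid for every $u\in(0,\tau)$ whose left-hand side $\iInt_\gamma w$ is the finite, $u$-independent analytic continuation of $\Hyper{w}(z)$ along $\gamma$. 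Terms in which neither $\sigma_k=\tau$ nor $\sigma_{k+1}=\tau$ converge directly as $u\to\tau$, so the work lies in making sense of the limit of the remaining ones.

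First I would isolate the divergences. Writing $L_u\defas\log\frac{u-\tau}{-\tau}=\log\frac{\tau-u}{\tau}\to-\infty$, I use the shuffle-regularization at $\tau$ from \eqref{eq:shuffle-decomposition} and lemma~\ref{lemma:shuffle-head-tail-decomposition} (with $A=\set{\tau},B=\emptyset$ on the prefixes and $A=\emptyset,B=\set{\tau}$ on the suffixes) to expand $\letter{\sigma_1}\ldots\letter{\sigma_k}=\sum_{i\geq0}q_{k,i}\shuffle\letter{\tau}^i$ with $q_{k,i}$ not ending in $\letter{\tau}$ (so $q_{k,0}=\WordReg{\tau}{}(\letter{\sigma_1}\ldots\letter{\sigma_k})$) and $\letter{\sigma_{k+1}}\ldots\letter{\sigma_n}=\sum_{j\geq0}\letter{\tau}^j\shuffle r_{k,j}$ with $r_{k,j}$ not beginning in $\letter{\tau}$ (so $r_{k,0}=\WordReg{}{\tau}(\letter{\sigma_{k+1}}\ldots\letter{\sigma_n})$). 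Multiplicativity of $\iInt_{\gamma_u}$ and of $v\mapsto\Hyper{v}$, together with $\Hyper{\letter{\tau}^j}(u)=L_u^j/j!$ and $\iInt_{\gamma_u}\letter{\tau}=\iInt_{\gamma}\letter{\tau}-L_u$ (path-additivity of $\int\tfrac{\dd z}{z-\tau}$), turn the right-hand side into a polynomial in $L_u$ whose remaining factors $\iInt_{\gamma_u}(q_{k,i})$ and $\Hyper{r_{k,j}}(u)$ stay bounded as $u\to\tau$: by \eqref{eq:regularization} their integrands have at worst logarithmic singularities at $z'\to\tau$, so they converge to the finite values $\iInt_{\tau}^{z}(q_{k,i})$ and $\Hyper{r_{k,j}}(\tau)$ by dominated convergence, exactly as in the proof of lemma~\ref{lemma:reglim-simple}. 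Since the left-hand side $\iInt_\gamma w$ is finite and $u$-independent while $L_u\to-\infty$, only the $L_u$-free part survives in the limit; collecting it (from the terms with arbitrary $i$ and with $j=0$) gives
\begin{equation*}
	\iInt_{\gamma} w
	=
	\sum_{k=0}^{n} \Bigg( \sum_{i\geq0} \iInt_{\tau}^{z}(q_{k,i}) \frac{(\iInt_{\gamma}\letter{\tau})^i}{i!} \Bigg) \cdot \Hyper{r_{k,0}}(\tau).
\end{equation*}
The inner sum is precisely $\iInt_{\tau}^{z}(\letter{\sigma_1}\ldots\letter{\sigma_k})$ in the shuffle-extended sense of the statement, with $\iInt_{\tau}^{z}\letter{\tau}=\iInt_{\gamma}\letter{\tau}=\pm\imag\pi+\log\frac{z-\tau}{\tau}$ obtained by integrating $\tfrac{\dd z'}{z'-\tau}$ along $\gamma$ (the sign recording on which side of $\tau$ the path passes), and $\Hyper{r_{k,0}}(\tau)=\Hyper{\WordReg{}{\tau}(\letter{\sigma_{k+1}}\ldots\letter{\sigma_n})}(\tau)$; this is \eqref{eq:contour-splitting}. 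Homotopy invariance of $\iInt_\gamma$ lets $\gamma_u$ be replaced by any representative of its class, so $\iInt_\tau^z$ depends on $\gamma$ only through that sign, which is the content of \eqref{eq:contour-splitting-branch}.

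I expect the only genuine work to lie in the limit $u\to\tau$: after the shuffle bookkeeping one must verify that the factors which are not explicit powers of $L_u$ stay bounded and converge to the asserted regularized values, and that precisely the powers of $\log\frac{\tau-u}{\tau}$ cancel — which is exactly why only the shuffle-regularized words $\WordReg{}{\tau}(\cdot)$ survive. Both points reduce to \eqref{eq:regularization} and a dominated-convergence argument in the spirit of lemma~\ref{lemma:reglim-simple}; the identity $\AnaReg{u}{\tau}\Hyper{v}(u)=\Hyper{\WordReg{}{\tau}(v)}(\tau)$ implicitly used here is the finite-point analogue of \eqref{eq:reginf}. As an alternative one could prove \eqref{eq:contour-splitting} by induction on $\abs{w}$, since both sides have $z$-derivative $\tfrac{1}{z-\sigma_1}$ times the corresponding expression for $\letter{\sigma_2}\ldots\letter{\sigma_n}$ (the shuffle-extended $\iInt_\tau^z$ still satisfies $\partial_z\iInt_\tau^z(\letter{a_1}\ldots\letter{a_m})=\tfrac{1}{z-a_1}\iInt_\tau^z(\letter{a_2}\ldots\letter{a_m})$), so that they agree up to a $z$-constant which is fixed at $z\to0$ by a path-reversal identity; this route, however, is not shorter, so I would follow the limiting argument above.
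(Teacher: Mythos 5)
Your argument is correct and shares the paper's skeleton --- Chen's lemma at an intermediate point $u\in(0,\tau)$, the head/tail shuffle decomposition of lemma~\ref{lemma:shuffle-head-tail-decomposition} with respect to $\letter{\tau}$, then $u\to\tau$ --- but the two proofs part ways at the limit. The paper works locally on each maximal block $\letter{\tau}^m$ straddling the deconcatenation point and recombines the two individually divergent factors via Chen's lemma, $\sum_{a+b=m}\iInt_{\gamma_u}(\letter{\tau}^a)\cdot\iInt_{\eta_u}(\letter{\tau}^b)=\iInt_{\gamma}(\letter{\tau}^m)$, \emph{before} letting $u\to\tau$; after that regrouping every remaining factor converges separately and no cancellation has to be argued. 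You instead regularize the whole prefix and suffix at once, expand in powers of $L_u$, and appeal to finiteness and $u$-independence of $\iInt_\gamma w$ to discard the positive powers. That step is the one place where your write-up is thinner than it should be: from $\sum_m c_m(u)L_u^m=C$ with each $c_m(u)$ convergent one cannot conclude $\lim c_0(u)=C$ by dividing by the top power alone, since $c_M(u)\to 0$ does not force $c_M(u)L_u^M\to 0$. What rescues the argument is precisely the structure \eqref{eq:regularization}: each $c_m(u)$ is itself a polynomial in $\log(\tau-u)$ with coefficients holomorphic at $\tau$, and linear independence of the powers of $\log(\tau-u)$ over such germs yields $\lim_{u\to\tau}c_m(u)=0$ for $m\geq 1$ and $\lim_{u\to\tau}c_0(u)=C$. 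Since you explicitly reduce the cancellation to \eqref{eq:regularization} and a dominated-convergence argument, the proof closes; the paper's regrouping simply sidesteps this delicacy at the cost of slightly more intricate bookkeeping of the $\letter{\tau}$-blocks. Both routes identify the surviving constant term with the shuffle-extended $\iInt_{\tau}^{z}$ exactly as in \eqref{eq:contour-splitting}.
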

\begin{example}
	\label{ex:dilog-past-one} %
	Take the dilogarithm $\Hyper{\letter{0}\letter{1}}(z) = -\Li_2(z)$ and the path $\gamma$ passing below $\tau=1$ shown in figure~\ref{fig:contour-split}. Since $\WordReg{}{1} (\letter{1}) = 0$ and $\Hyper{\letter{0}\letter{1}}(1) = -\mzv{2}$, \eqref{eq:contour-splitting} reduces to
	\begin{equation*}
		\Li_2(z) - \mzv{2}
		=
		-\iInt_1^{z} (\letter{0}\letter{1})
		\defas
		-\iInt_1^{z} \letter{0} \cdot \iInt_1^{z} \letter{1} + \iInt_1^{z}(\letter{1}\letter{0}).
	\end{equation*}
	Inserting $\iInt_1^{z} \letter{1} \defas \iInt_{\gamma} \letter{1} = \imag\pi + \log(z-1)$and $z = 1+z'$, we obtain for $z'=z-1>0$ ($z>1$) the representation
	\begin{align*}
		\Li_2(z)
		&= \mzv{2} + \Hyper{\letter{0}\letter{-1}}(z') - (\imag\pi + \log z') \Hyper{\letter{-1}}(z')
		\\
		&= \mzv{2} - \Li_2(-z') - (\imag\pi + \log z')\log z.
	\end{align*}
\end{example}
In \eqref{eq:contour-splitting} we can resolve $\iInt_{\tau}^{z}$ into explicit factors $\pm\imag\pi$ from \eqref{eq:contour-splitting-branch} as dictated by $\gamma$ and the iterated integrals $\iInt_{\tau}^{z}$.

Recursive splitting of these at the next positive letter $\tau' \defas \min \left( \set{\sigma_1,\ldots,\sigma_n} \setminus {\tau} \cap (0,\infty) \right)$ finally expresses $\int_{\gamma} (w)$ in powers of $\pm\imag\pi$ and integrals $\iInt_{0}^{\tau}(v'), \iInt_{\tau}^{\tau'}(v''),\ldots$ which are simply defined by the straight line integration paths $0 \rightarrow \tau$, $\tau \rightarrow \tau'$ and so on. Through M\"{o}bius transformations \eqref{eq:moebius-transform}, these may all be transformed to $0\rightarrow \infty$.

\section{The implementation \HyperProg}
\label{sec:implementation}%
\subsection{General remarks}
We implemented the algorithms of section~\ref{sec:algorithms} in the computer algebra system {\Maple} \cite{Maple}. Even though these procedures are very flexible, we did not intend to provide a general purpose package supporting arbitrary symbolic calculations with hyper- and polylogarithms.

Instead, we were driven by our aim to compute Feynman integrals as we comment on in section~\ref{sec:feynman-integrals}. Therefore other applications are not as well supported, but we will give examples showing how {\HyperProg} can be used for quite general calculations with polylogarithms.

Note that we did not include facilities for numeric evaluations of hyper- and polylogarithms, because first of all these are not necessary for the algorithms and secondly there are already established programs available for this task, e.g. \cite{Ginac:Introduction,VollingaWeinzierl:NumericalMpl}.

The program uses the \code{remember} option of {\Maple}, which creates lookup tables to avoid recomputations of functions. But some of these functions depend on global parameters as explained for instance in section~\ref{sec:factorization}.
Therefore, whenever such a parameter is changed, the function \mbox{\code{forgetAll}$()$} must be called to invalidate those lookup tables. Otherwise the program might behave inconsistently.

\subsection{Installation and files}
\label{sec:installation}%
The program requires no installation. It is enough to load it during a {\Maple}-session by invoking
\begin{MapleInput}
read "HyperInt.mpl";
\end{MapleInput}
if the file \code{HyperInt.mpl} is located in the current directory or another place in the search paths of {\Maple}. If \code{periodLookups.m} can be found, it will be loaded automatically which is of great benefit as explained in section~\ref{sec:periods}.

All together, we supply the following main files:
\begin{description}
	\item[\Filename{HyperInt.mpl}]
		\hfill\\
		Contains our implementation of the algorithms in section~\ref{sec:algorithms} as well as supplementary procedures to handle Feynman graphs and Feynman integrals.

	\item[\Filename{periodLookups.m}]
		\hfill\\
		This table stores a reduction of multiple zeta values up to weight $12$ to a (conjectured) basis and similarly for alternating Euler sums up to weight $8$. It is not required to run the program, but necessary for efficient calculations involving high weights. Details follow in section~\ref{sec:periods}.

	\item[\Filename{Manual.mw}]
		\hfill\\
		This {\Maple} worksheet explains the practical usage of {\HyperProg}. In particular it includes plenty of explicit Feynman integral computations. Many explanations, details and comments are provided here.

	\item[\Filename{HyperTests.mpl}]
		\hfill\\
		A series of various tests of the program. Calling {\Maple} with \code{maple HyperTests.mpl} has to run without any error messages. Please report immediately if errors occur. Note that these tests only work when \code{periodLookups.m} can be found by {\HyperProg}.

		Due to the many different tests, including some Feynman integrals, the reader might find it instructive to read this file. See also section~\ref{sec:tests}.
\end{description}

\subsection{Representation of polylogarithms and conversions}
\label{sec:representations-conversions}
Internally, polylogarithms are represented as lists
\begin{equation}
	\label{eq:maple-internal-reginf-rep} %
	f
	=	
	[ 
		[g_1,[ w_{1,1}, \ldots, w_{1,r_1}] ],
		[g_2,[ w_{2,1}, \ldots, w_{2,r_1}] ],
		\ldots
	]
\end{equation}
of pairs of rational prefactors $g_i$ and lists of words $w_{i,j} = \WordReg{0}{}\left(w_{i,j}\right)$ not ending on $\letter{0}$. These encode the function
\begin{equation}
	\label{eq:maple-internal-reginf} %
	f
	=
	\sum_i g_i
				\cdot
				\prod_{j=1}^{r_i} \Hyper{\WordReg{}{\infty}(w_{i,j})}(\infty)
\end{equation}
and we decided not to use \eqref{eq:shuffle-product} to combine those words into the linear combination $\shuffle_j w_{i,j}$ for two reasons:
\begin{enumerate}
	\item Empirically, this expansion of shuffle products tends to increase the number of terms considerably.

	\item Our algorithm of section~\ref{sec:reglim-algorithm} to compute $\AnaReg{t}{0}$ produces products of words with \emph{different sets of letters}. Mixing these letters due to a shuffle introduces spurious letters in following integration steps which we want to avoid.
\end{enumerate}
To encode hyperlogarithms of a particular variable $z$, we use a list notation without explicit products:
\begin{equation}
	\label{eq:maple-internal-hyperlog} %
	f 
	= [ 
				[g_1, w_1],
				[g_2, w_2],
				\ldots
		]
	\defas
	\sum_{i} 
		g_i(z) \Hyper{w_i}(z)
	.
\end{equation}
These representations make the implementation of the algorithms of section~\ref{sec:algorithms} straightforward, but for easier, human-readable input and output we allow the notations
	\begin{align*}
		\code{Hlog}\left(z, [\sigma_1,\ldots,\sigma_r]\right)
		& \defas
		\Hyper{\sigma_1,\ldots,\sigma_r}(z)
		\quad\text{and}
		\\
		\code{Mpl}\left( [n_1,\ldots,n_r], [z_1,\ldots,z_r] \right)
		& \defas
		\Li_{n_1,\ldots,n_r}\left(z_1,\ldots,z_r\right)
	\end{align*}
for hyperlogarithms \eqref{eq:def:hyperlog} and multiple polylogarithms \eqref{eq:def:Li}. {\HyperProg} extends the native function \code{convert}$(f,\code{form})$ to transform an expression $f$ containing any of the functions
\begin{equation*}
\set{
	\code{log},
	\code{ln},
	\code{polylog},
	\code{dilog},
	\code{Hlog},
	\code{Mpl},
	\code{Hpl}
}
\end{equation*}
into one of the possible target formats
\begin{description}
	\item[$\code{form}=\code{HlogRegInf}$:]
		\hfill\\
		transforms $f$ into the list representation \eqref{eq:maple-internal-reginf-rep}.

	\item[$\code{form} \in \set{\code{Hlog}, \code{Mpl}}$:]
		\hfill\\
		expresses $f$ in terms of $\Hyper{}$ or $\Li$, using \eqref{eq:Hyper-as-Li}.

	\item[$\code{form} = \code{Hpl}$]
		\hfill\\
		translates hyperlogarithms $\code{Hlog}(z, w)$ with words $w \in \set{-1,0,1}^{\times}$ into the compressed notation of harmonic polylogarithms that was introduced in \cite{RemiddiVermaseren:HarmonicPolylogarithms}.
		Concretely, 
		$
			\code{Hpl}_{n_1,\ldots,n_r}(z)
			\defas
			\Hyper{\underline{n_1},\ldots,\underline{n_r}}(z)
		$
		where
		$
			\underline{0}
			\defas
			\letter{0}
		$
		and for any $n\in\N$,
		$
%			\label{eq:def:hpl}%
			\underline{\pm n}
			\defas
			\mp \letter{0}^{n-1} \letter{\pm 1}
%			\quad\text{for}\quad
%			n \in \N.
%		\end{equation}
	$.

	\item[$\code{form} = \code{i}$]
		\hfill\\
		same as $\code{form} = \code{Hlog}$, but produces the notation
		\begin{equation*}
			i[0,\sigma_n,\ldots,\sigma_1,z]
			\defas
			\code{Hlog}\left( z, [\sigma_1,\ldots,\sigma_n] \right)
		\end{equation*}
		which is used in {\zetaprocedures} \cite{Schnetz:ZetaProcedures}. The result can then be evaluated numerically in that program, e.g. using $\code{evalz}\left( \cdot \right)$.
\end{description}
\begin{example}The dilogarithm $\Li_2(z)$ has representations
\begin{MapleInput}
convert(polylog(2,z), Hlog);
\end{MapleInput}
\begin{MapleMath}
	-\Hlog{1}{0, 1/z}
\end{MapleMath}
\begin{MapleInput}
convert(polylog(2,z), HlogRegInf);
\end{MapleInput}
\begin{MapleMath}
	[[1, [[-1+z, -1]]], [-1, [[-1, -1]]]]
\end{MapleMath}
\end{example}
Due to the many functional relations, a general polylogarithm $f(\vec{z})$ has many different representations. In particular, the representation \eqref{eq:maple-internal-reginf} is far from being unique.

It is therefore crucial to be able to express polylogarithms in a basis in order to simplify results and to detect relations.
As was demonstrated in \cite{Brown:TwoPoint}, lemma~\ref{lemma:reginf-is-hyperlog} provides such a basis through
\begin{corollary}
	\label{corollary:fibration-basis} %
	Let $f(\vec{z}) = \AnaReg{z}{\infty} \Hyper{w}(z)$ for $w \in L(\Sigma)$ with rational letters $\Sigma \subset \C(\vec{z})$ and choose an order $\vec{z}=(z_1,\ldots,z_n)$. Then there is a unique way to write
	\begin{equation}
		f(\vec{z}) 
		=
		\sum_i
			\Hyper{w_{i,1}}(z_1)
			\cdot
			\ldots
			\cdot
			\Hyper{w_{i,n}}(z_n)
			\cdot
			c_i
		\label{eq:fibration-basis} %
	\end{equation}
	as a linear combination of products of hyperlogarithms of words $w_{i,j} \in T(\Sigma_i)$ with letters in some algebraic alphabets
	$
		\Sigma_i
		\subset
		\overline{\C(z_{i+1},\ldots,z_n)}
	$,
	which may only depend on the following variables. The factors $c_i$ in \eqref{eq:fibration-basis} are constants (with respect to $\vec{z}$), namely
	\begin{equation}
		\label{eq:fibration-basis-constants} %
		c_i
		\in
		\AnaReg{z_n}{0} \ldots \AnaReg{z_1}{0}
		\AnaReg{z}{\infty} L(\Sigma)(z).
	\end{equation}
\end{corollary}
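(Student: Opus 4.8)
The plan is to induct on the number $n$ of variables, using Lemma~\ref{lemma:reginf-is-hyperlog} to peel off the dependence on $z_1$ and reduce to the same statement with one variable fewer. For $n=0$ the function $f$ is already a constant and nothing is to be shown.

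For the inductive step I would set $t\defas z_1$, treat $z_2,\ldots,z_n$ together with $\C$ as the ground field so that $\Sigma\subset\C(\vec z)$ consists of rational functions of $t$, and apply Lemma~\ref{lemma:reginf-is-hyperlog}. Its constructive form is what I need: \eqref{eq:diff-reginf} exhibits $\partial_{z_1}\AnaReg{z}{\infty}\Hyper{w}(z)$ as a linear combination of terms $\tfrac{\Hyper{u}(z_1)}{z_1-\tau}$ with no polynomial part --- the logarithmic derivatives $\partial_{z_1}\ln(\sigma_i-\sigma_j)$ have only simple poles in $z_1$ and vanish at $z_1=\infty$, and the shorter-weight regularized limits are already hyperlogarithms in $z_1$ by the internal recursion --- so antidifferentiation yields
\begin{equation*}
	f(\vec z)=\sum_u\Hyper{v_u}(z_1)\cdot c_u,\qquad v_u\in T(\Sigma_1),\quad\Sigma_1\subset\overline{\C(z_2,\ldots,z_n)},
\end{equation*}
with \emph{no} rational prefactor in $z_1$, where $\Sigma_1$ is the algebraic alphabet~\eqref{eq:def:reglim-letters} and each $c_u$, together with the overall integration constant, lies in $\AnaReg{z_1}{0}\AnaReg{z}{\infty}L(\Sigma)(z)$. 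The next step is to rewrite these $z_1$-constants via the algorithm of section~\ref{sec:reglim-algorithm}: Corollary~\ref{cor:reglim-word-no-positives} (supplemented by the contour splitting of Lemma~\ref{lemma:contour-splitting}, which at worst contributes constant $\pm\imag\pi$ factors, should some $\leadingCoefficient_{z_1}(\sigma)$ lie in $(0,\infty)$) turns each $c_u$ into a combination of $\AnaReg{z}{\infty}\Hyper{v}(z)$ with $v$ a word over the leading-coefficient alphabet $\leadingCoefficient_{z_1}(\Sigma)$, whose letters are rational functions of $z_2,\ldots,z_n$ alone. Thus every $c_u$ is --- up to such $\imag\pi$-factors --- precisely of the form to which the induction hypothesis applies in the variables $(z_2,\ldots,z_n)$; expanding each $c_u$ in its fibration basis and multiplying through by $\Hyper{v_u}(z_1)$ produces \eqref{eq:fibration-basis}, with the membership \eqref{eq:fibration-basis-constants} of the $c_i$ following by tracking the chain of regularized limits.

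For uniqueness I would strip off $z_1$ first as well. Suppose a $\C$-linear combination of the products $\Hyper{w_{i,1}}(z_1)\cdots\Hyper{w_{i,n}}(z_n)$ vanishes, with the $w_{i,1}$ written in $\WordReg{0}{}$-reduced form. Applying $\AnaReg{z_1}{0}$ annihilates every factor $\Hyper{w_{i,1}}(z_1)$ with $w_{i,1}\neq\emptyWord$ by \eqref{eq:reg0} and leaves a vanishing combination of products in $(z_2,\ldots,z_n)$, so by induction on $n$ all coefficients with $w_{i,1}=\emptyWord$ vanish; differentiating by $z_1$ and partial-fractioning the simple poles $\tfrac{1}{z_1-\sigma}$ (distinct for distinct letters of $\Sigma_1$) lowers the weight in $z_1$, and a secondary induction on that weight finishes the argument. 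This is the usual proof that distinct hyperlogarithms of one variable are linearly independent over functions independent of it (property~3 of section~\ref{sec:tensor-algebra-iterated-integrals}), applied fibre by fibre.

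The step I expect to be the genuine obstacle is closing the existence induction: one must certify that the $z_1$-constants $c_u$ coming out of Lemma~\ref{lemma:reginf-is-hyperlog} are themselves regularized limits at infinity of hyperlogarithms whose letters are \emph{rational} functions of $z_2,\ldots,z_n$, so that the induction hypothesis truly applies to them. This is exactly what section~\ref{sec:reglim-algorithm} provides --- Lemma~\ref{lemma:scaling-invariance} and the passage to $\leadingCoefficient_{z_1}(\Sigma)$ convert $\AnaReg{z_1}{0}\AnaReg{z}{\infty}$ into $\AnaReg{z}{\infty}$ over a smaller but still rational alphabet, while Lemma~\ref{lemma:contour-splitting} absorbs the exceptional positive limits into explicit $\pm\imag\pi$ terms. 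Once this is granted, the remainder (algebraicity of $\Sigma_1$ as zeros in $z_1$ of polynomials over $\C(z_2,\ldots,z_n)$, and the bookkeeping of the iterated regularized limits) is routine.
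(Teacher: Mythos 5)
Your argument is correct and follows the route the paper itself intends: the corollary is stated as an iterated application of Lemma~\ref{lemma:reginf-is-hyperlog} (peeling off one variable at a time), with the integration constants at each stage handled by the regularized-limit machinery of section~\ref{sec:reglim-algorithm} so that the recursion closes over a rational alphabet in the remaining variables, and uniqueness supplied by the linear independence of hyperlogarithms over $z_1$-independent coefficients. You have correctly identified the one non-trivial point --- that $\leadingCoefficient_{z_1}(\Sigma)\subset\C(z_2,\ldots,z_n)$ is again rational, so the induction hypothesis applies to the constants $c_u$ --- which is exactly what Corollary~\ref{cor:reglim-word-no-positives} and Lemma~\ref{lemma:contour-splitting} are designed to guarantee.
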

Its implementation constitutes the essential function
\begin{equation*}
	\code{fibrationBasis}\left(f, [z_1,\ldots,z_r], F\right),
\end{equation*}
which writes a polylogarithm $f$ (preferably in the list notation \eqref{eq:maple-internal-reginf-rep}, otherwise it will be converted first) in the form \eqref{eq:fibration-basis} with respect to the order $\vec{z} = [z_1,\ldots,z_r]$ of variables (when $\vec{z}$ is omitted, $\vec{z}=[]$ is used).
If the optional table $F$ is supplied, the result will be stored as $F_{[w_{i,1},\ldots,w_{i,n}]} = c_i$. 
		
\begin{example}
	This function can be used to obtain functional relations between polylogarithms. For example,
	\begin{MapleInput}
fibrationBasis(polylog(2,1-z), [z]);
convert(%, Mpl);
	\end{MapleInput}
	\begin{MapleMath}
		-\Hlog{z}{1, 0} + \mzv{2}
		\\
		-\Mpl{2}{z} + \ln(z) \Mpl{1}{z} + \mzv{2}
	\end{MapleMath}%
	reproduces the classic identity $\Li_2(1-z) = \mzv{2} - \Li_2(z) - \log z \log(1-z)$. Similarly, we obtain the inversion relation for $\Li_5\left(-\frac{1}{x} \right) = \frac{1}{120} \ln^5 x + \frac{\mzv{2}}{6} \ln^3 x + \frac{7}{10}\mzv[2]{2} \ln x + \Li_5(-x)$:
	\begin{MapleInput}
fibrationBasis(polylog(5, -1/x), [x]):
convert(%, Mpl);
	\end{MapleInput}
	\begin{MapleMath}
		\frac{1}{6}\mzv{2}\ln(x)^3+\frac{1}{120} \ln(x)^5 + \Mpl{5}{-x} + \frac{7}{10} \mzv[2]{2} \ln(x)
	\end{MapleMath}
	As an example involving multiple variables, the five-term relation of the dilogarithm is recovered as
\begin{MapleInput}
polylog(2,x*y/(1-x)/(1-y))-polylog(2,x/(1-y))-polylog(2,y/(1-x)):
fibrationBasis(%, [x, y]);
\end{MapleInput}
	\begin{MapleMath}
		\Hlog{y}{0, 1}+\Hlog{x}{0, 1}-\Hlog{x}{1}\Hlog{y}{1}
	\end{MapleMath}
\end{example}
Note that for more than one variable, each choice $\vec{z}$ of order defines a different basis and a function may take a much simpler form in one basis than in another. 
For example, $\Li_{1,2}(y,x) + \Li_{1,2}(\frac{1}{y}, xy)$ is just
\begin{MapleInput}
f:=Mpl([1,2], [y,x])+Mpl([1,2], [1/y,y*x]):
fibrationBasis(f, [x,y]);
\end{MapleInput}
\begin{MapleMath}
	\Hlog{x}{0, 1/y, 1}+\Hlog{x}{0, 1, 1/y}
\end{MapleMath}
but in another basis takes the form
\begin{MapleInput}
fibrationBasis(f, [y,x]);
\end{MapleInput}
\begin{MapleMath}
	\Hlog{y}{0, 1, 1/x}+\Hlog{y}{0, 1/x}\Hlog{x}{1}
	\\
	-\Hlog{y}{0, 0, 1/x}-\Hlog{y}{0, 1}\Hlog{x}{1}
\end{MapleMath}
We like to emphasize that every order $\vec{z}$ defines a true basis without relations. In particular this means that $f=0$ if and only if $\code{fibrationBasis}(f, \vec{z})$ returns $0$, no matter which order $\vec{z}$ was chosen.

Analytic continuation in a variable $z$ is performed along a straight path, therefore the result can be ambiguous when this line contains a point where the function is not analytic. In this case, an auxiliary variable
\begin{equation}
	\delta_z
	=
	\begin{cases}
		+1 & \text{when $z \in \Halfplane^{+}$,} \\
		-1 & \text{when $z \in \Halfplane^{-}$} \\
	\end{cases}
	\label{eq:which-half-plane}
\end{equation}
will appear to distinguish the branches above and below the real axis. From example \ref{ex:dilog-past-one} consider
\begin{MapleInput}
fibrationBasis(polylog(2, 1+z), [z]);
\end{MapleInput}
\begin{MapleMath}
	I\pi\delta_{z} \Hlog{z}{-1}-\Hlog{z}{-1, 0} + \mzv{2}
\end{MapleMath}

\subsection{Periods}
\label{sec:periods}%
Our algorithms express constants like \eqref{eq:fibration-basis-constants} through iterated integrals $\AnaReg{0}{\infty} \Hyper{w}(z)$ of words $w \in \overline{\Q}^{\times}$ with algebraic letters. These are transformed into iterated integrals $\Hyper{u}(1)$ by $u=\code{zeroInfPeriod}(w)$.
Such special values of multiple polylogarithms satisfy a huge number of relations and it is clearly highly desirable to express them in a basis over $\Q$.

The case $u\in\set{0,1}^{\times}$ of multiple zeta values (MZV) is by now perfectly understood on the motivic level \cite{Brown:MixedTateMotivesOverZ}, such that conjectural $\Q$-bases are available at arbitrary weight and \cite{Brown:DecompositionMotivicMZV} even provides a reduction algorithm that was implemented in \cite{Schnetz:ZetaProcedures}. Similar results can also be found for some cases of $u\in\setexp{0,\mu}{\mu^N=1}^{\times}$ with $N$-th roots of unity $\mu$, see \cite{Deligne:GroupeFondamentalMotiviqueN}.

{\HyperProg} can load lookup tables to benefit from such relations and we supply the file \Filename{periodLookups.m} which provides the reductions that were proven in the data mine project \cite{BluemleinBroadhurstVermaseren:Datamine} using standard relations. It includes multiple zeta values up to weight 12 and alternating Euler sums ($u\in\set{-1,0,1}^{\times}$) up to weight $8$ in the notation
\begin{equation}
	\mzv{n_1,\ldots,n_r}
	\defas
	\Li_{\abs{n_1},\ldots,\abs{n_r}}\left( \frac{n_1}{\abs{n_1}},\ldots \frac{n_r}{\abs{n_r}} \right),
	\label{eq:def:euler-sum} %
\end{equation}
with indices $n_1,\ldots,n_r \in \Z \setminus\set{0}$, $n_r\neq 1$.
When $u \in\set{0,a,2a}^{\times} \cup \set{-a,0,a}^{\times}$, M\"{o}bius transformations are used to express $\Hyper{u}(1)$ in terms of alternating Euler sums and $\log a$.

\begin{example}
	{\HyperProg} automatically attempts to load \Filename{periodLookups.m}, but can run without it. With its help,
	\begin{MapleInput}
fibrationBasis(Mpl([3], [1/2]));
	\end{MapleInput}
	\begin{MapleMath}
		\frac{1}{6}\ln(2)^3-\frac{1}{2}\ln(2)\mzv{2}+\frac{7}{8}\mzv{3}
	\end{MapleMath}
	is reduced to MZV and $\ln 2$. But if \Filename{periodLookups.m} is not available, we obtain merely
	\begin{MapleInput}
fibrationBasis(Mpl([3], [1/2]));
	\end{MapleInput}
	\begin{MapleMath}
		-\mzv{-3}-\mzv{2,-1}-\mzv{1,-2} + \frac{1}{6}\ln(2)^3
	\end{MapleMath}
\end{example}

The user can define a different basis reduction or provide bases for periods involving higher weights\footnote{For MZV and alternating sums, \cite{BluemleinBroadhurstVermaseren:Datamine} provides reductions up to weights 22 and 12, respectively.}, or additional letters. These must be defined as a table,
\begin{equation}
	\code{zeroOnePeriods}[u]
	\defas
	\Hyper{u}(1),
\end{equation}
and saved to a file $f$. To read it call $\code{loadPeriods}(f)$.

\begin{example}
	Polylogarithms $\Li_{\vec{n}}(\vec{z})$ at fourth roots of unity $\vec{z} \in \set{\pm 1, \pm \imag}^{\abs{n}}$ up to weight $\abs{n}\leq 2$, like
	\begin{MapleInput}
f := Mpl([1,1],[I,-1])+Mpl([1,1],[-1,I]):
fibrationBasis(f);
	\end{MapleInput}
	\begin{MapleMath}
		\Hlog{1}{-I, I}+\Hlog{1}{-1, I}
	\end{MapleMath}
	are tabulated in \Filename{periodLookups4thRoots.mpl} in terms of $\ln 2$, $\imag$, $\pi$ and Catalan's constant $\Imaginaerteil\Li_2(\imag)$:
	\begin{MapleInput}
loadPeriods("periodLookups4thRoots.mpl"):
fibrationBasis(f);
	\end{MapleInput}
	\begin{MapleMath}
		\frac{1}{8}\mzv{2}+\frac{1}{2}\ln(2)^2-\frac{1}{4}I\pi\ln(2)+I\Catalan
	\end{MapleMath}
\end{example}

\subsection{Integration of hyperlogarithms}
\label{sec:integration}%
The most important function provided by {\HyperProg} is
\begin{equation}
	\label{eq:integrationStep}%
	\code{integrationStep}(f, z)
	\defas
	\int_0^{\infty} f(z)\ \dd z
\end{equation}
and computes the integral of a polylogarithm $f$, which must be supplied in the form \eqref{eq:maple-internal-reginf-rep}.
First it explicitly rewrites $f(z) \in L(\Sigma)(z)$ following lemma~\ref{lemma:reginf-is-hyperlog} as a hyperlogarithm in $z$. Then a primitive $F=\code{integrate}(f,z)$ is constructed as explained in section~\ref{sec:integration-differentiation} and finally expanded at the boundaries $z\rightarrow 0, \infty$.
\begin{example}
	To compute $\int_0^{\infty} \frac{\Li_{1,1}\left(-x/y, -y \right)}{y(1+y)} \dd y$, type
	\begin{MapleInput}
convert(Mpl([1,1],[-x/y,-y])/y/(y+1), HlogRegInf): integrationStep(%, y):
fibrationBasis(%, [x]);
	\end{MapleInput}
	\begin{MapleMath}
		\mzv{2}\Hlog{x}{1}
		 + \Hlog{x}{1,0,1}
		 - \Hlog{x}{0,0,1}
	\end{MapleMath}
\end{example}
A more convenient and flexible form is the function
\begin{equation}\begin{split}
	&\code{hyperInt}\left( f, [z_1=a_1..b_1,\ldots,z_r=a_r..b_r] \right)
	\\
	&\qquad
	\defas
	\int_{a_r}^{b_r}
	\cdots
	\left[\int_{a_1}^{b_1} f\ \dd z_1 \right]
	\cdots
	\dd z_r
	\label{eq:hyperInt}%
\end{split}\end{equation}
which computes multi-dimensional integrals by repeated application of \eqref{eq:integrationStep} in the order $z_1,\ldots,z_r$ as specified. It automatically transforms the domains $(a_k,b_k)$ of integration to $(0,\infty)$ and furthermore, $f$ can be given in any form that is understood by $\code{convert}\left( \cdot, \code{HlogRegInf} \right)$.
\begin{example}
	\label{ex:moduli-space}%
	A typical integral studied in the origin \cite{Brown:MZVPeriodsModuliSpaces} of the algorithm is $I_2$ of equation~(8.6) therein:
	\begin{MapleInput}
I2 := 1/(1-t1)/(t3-t1)/t2:
hyperInt(I2, [t1=0..t2, t2=0..t3, t3=0..1]):
fibrationBasis(%);
	\end{MapleInput}
	\begin{MapleMath}
		2\mzv{3}
	\end{MapleMath}
\end{example}
\begin{example}
The ``Ising-class'' integrals $E_n$ were defined in \cite{BaileyBorweinCrandall:IsingClass}: For $n\geq 2$ let $u_k \defas \prod_{i=2}^k t_i$, $u_1 \defas 1$ and set
\begin{equation}
	E_n
	\defas
	2 \int_0^{1} \dd t_2 \ldots \int_0^1 \dd t_n
	\left( 
		\prod_{1 \leq j< k \leq n}
		\frac{u_j - u_k}{u_j+u_k}
	\right)^2.
	\label{eq:def:IsingE}%
\end{equation}
Because the denominators $u_j+u_k = (1+\prod_{i={j}}^{k-1} t_i) \prod_{i=k}^{n} t_i$ have very simple factors, it is easy to prove linear reducibility along the sequence $t_2,\ldots,t_n$ and to show that all $E_n$ are rational linear combinations of alternating Euler sums.

We included a simple procedure \mbox{\code{IsingE}$\left( n \right)$} to evaluate them in the attached manual. In particular we can confirm the conjecture on $E_5$ made in \cite{BaileyBorweinCrandall:IsingClass}:
\begin{MapleInput}
IsingE(5);
\end{MapleInput}
\begin{MapleMath}
2\mzv{3}
\left(
 - 37
 + 232\ln (2) 
\right)
 - 4\mzv{2}
\left(
31
 - 20\ln  ( 2 ) 
 + 64 \ln^2  ( 2 )
\right)
\\
 - \frac{318}{5}\mzv[2]{2}
 + 42
 - 992\mzv{1,-3}
 - 40 \ln ( 2 )
 + 464 \ln^2 ( 2 )
 + \frac{512}{3} \ln^4 ( 2 )
\end{MapleMath}
For illustration further exact results for $E_n$ up to $n=8$ can be found in \Filename{IsingE.mpl}. Time- and memory-requirements of these computations are summarized in table~\ref{tab:IsingE-resources}.
\end{example}
\begin{table*}
	\centering
	\begin{tabular}{rcccccccc}
		\toprule
		$n$ & 1 & 2 & 3 & 4 & 5 & 6 & 7 & 8 \\
		\midrule
		%		time & \SI{0.041}{\second} & \SI{0.052}{\second} & \SI{0.235}{\second} & \SI{2.018}{\second} & \SI{40.623}{\second} & \SI{1760.027}{\second} & \SI{100275.371827}{\second} \\
		time & \SI{10}{\milli\second} & \SI{41}{\milli\second} & \SI{52}{\milli\second} & \SI{235}{\milli\second} & \SI{2.0}{\second} & \SI{40.6}{\second} & \SI{29.3}{\minute} & \SI{28}{\hour}\\
		RAM & \SI{35}{\mebi\byte} & \SI{51}{\mebi\byte} & \SI{51}{\mebi\byte} & \SI{76}{\mebi\byte} & \SI{359}{\mebi\byte} & \SI{1.6}{\gibi\byte} & \SI{1.9}{\gibi\byte} & \SI{30}{\gibi\byte}	\\
		\bottomrule
	\end{tabular}
	\caption{Resources consumed during computation of the Ising-type integrals $E_n$ of \eqref{eq:def:IsingE} running on Intel\textsuperscript{\textregistered} Core{\texttrademark} i7-3770 CPU @ \SI{3.40}{\giga\hertz}. The column with $n=1$ (when $E_n \defas 1$) requires no actual computation and shows the time and memory needed to load \Filename{periodLookups.m}.}%
	\label{tab:IsingE-resources}%
\end{table*}

\subsubsection{Singularities in the domain of integration}
\label{sec:integration-contour-deformation}%
The integration \eqref{eq:integrationStep} requires that $f(z) \in L(\Sigma)(z)$ is a hyperlogarithm without any letters $\Sigma_+ \defas \Sigma \cap (0,\infty) = \emptyset$ inside the domain of integration, which ensures that $f(z)$ is analytic on $(0,\infty)$.

Otherwise $f(z)$ can have poles or branch points on $\Sigma_+$ and the integration is then performed along a deformed contour $\gamma$ as discussed in section~\ref{sec:reglim-algorithm}. The dependence on $\gamma$ (see figure~\ref{fig:contour-split}) is encoded in the variables
\begin{equation}
	\label{eq:contour-deformation-deltas}%
	\delta_{z,\sigma}
	=
	\begin{cases}
		+1 & \text{when $\gamma$ passes below $\sigma$},\\
		-1 & \text{when $\gamma$ passes above $\sigma$}.\\
	\end{cases}
\end{equation}
\begin{example}
	\label{ex:contour-deformation-integral}%
	The integrand $f(z)=\frac{1}{1-z^2}$ has a simple pole at $z\rightarrow 1$ and is not integrable over $(0,\infty)$. Instead, {\HyperProg} computes the contour integrals
	\begin{MapleInput}
hyperInt(1/(1-z^2), z): fibrationBasis(%);
	\end{MapleInput}
	\begin{MapleOutput}
Warning, Contour was deformed to avoid potential singularities at {1}.
	\end{MapleOutput}
	\begin{MapleMath}
		-\frac{1}{2} \cdot I \pi \delta_{z, 1}
	\end{MapleMath}
\end{example}
Note even when positive letters $\Sigma_{+}$ occur, $f(z)$ can be analytic on $(0,\infty)$ nonetheless. In this case the dependence on any $\delta_{z,\sigma}$ drops out in the result.
\begin{example}
	The integrand $f(z)=\frac{\ln(z)}{1-z^2}$ is analytic at $z\rightarrow 1$ and thus on all of $(0,\infty)$. It integrates to
	\begin{MapleInput}
hyperInt(ln(z)/(1-z^2), z):
fibrationBasis(%);
	\end{MapleInput}
	\begin{MapleOutput}
Warning, Contour was deformed to avoid potential singularities at {1}.
	\end{MapleOutput}
	\begin{MapleMath}
		-\frac{3}{2} \mzv{2}
	\end{MapleMath}
\end{example}

\subsubsection{Detection of divergences}
\label{sec:divergence-detection}%
By default, the option $\code{\_hyper\_check\_divergences}=\code{true}$ is activated and triggers, after each integration, a test of convergence. The primitive $F(z)$ is expanded as
\begin{equation}
	\label{eq:check-divergences}%
	F(z)
	= \sum_{i=0}^{N} \log^i z \sum_{j=-M}^{\infty} z^{j} F_{i,j}
	\quad\text{at}\quad
	z \rightarrow 0
\end{equation}
and all polylogarithms $F_{i,j}$ with $i>0$ or $j<0$ are explicitly checked to vanish $F{i,j}=0$ using \code{fibrationBasis}; the limit $z \rightarrow \infty$ is treated analogously. This method is time-consuming and we recommend to deactivate this option for any involved calculations, expecting that the convergence is granted by the problem at hand.
\begin{example}
	\label{ex:divergence-check}%
	An endpoint divergence at $z\rightarrow \infty$ is detected for $\int_0^{\infty} \frac{\ln z}{1+z}\dd z = \lim_{z\rightarrow\infty} \Hyper{\letter{-1}\letter{0}}(z)$:
	\begin{MapleInput}
hyperInt(ln(z)/(1+z), z);
	\end{MapleInput}
	\begin{MapleError}
Error, (in integrationStep) Divergence at z = infinity of type ln(z)^2
	\end{MapleError}
\end{example}
The expansions \eqref{eq:check-divergences} are only performed up to $i,j\leq\code{\_hyper\_max\_pole\_order}$ (default value is $10$). If higher order expansions are needed, an error is reported and this variable must be increased.

Note that the expansion \eqref{eq:check-divergences} is only computed at the endpoints $z\rightarrow 0,\infty$. Polar singularities inside $(0,\infty)$ are not detected, e.g.\ $\code{hyperInt}\left(\frac{1}{(1-z)^2}, z\right) = \restrict{\frac{1}{1-z}}{0}^{\infty} = 1$ calculates the integral along a contour evading $z=1$ just as discussed in section~\ref{sec:integration-contour-deformation}.
One can split the integration
\begin{equation}
	\label{eq:integral-splitted}%
	\int_0^{\infty} f(z) \ \dd z
	=
	\sum_{i=0}^{k} \int_{\tau_i}^{\tau_{i+1}} f(z)\ \dd z
\end{equation}
at such critical points $\Sigma_+ = \set{\tau_1<\ldots<\tau_k}$ with $\tau_0 \defas 0$, $\tau_{k+1} \defas \infty$ with the effect that all singularities now lie at endpoints and will be properly analyzed by the program.

A problem arises if calculations involve periods for which no basis reduction is known to {\HyperProg}, because the vanishing $F_{i,j}=0$ of a potential divergence might not be detected. One can then set $\code{\_hyper\_abort\_on\_divergence} \defas \code{false}$ to continue with the integration. All $F_{i,j}$ of \eqref{eq:check-divergences} are stored in the table \code{\_hyper\_divergences}.
\begin{example}
	\label{eq:divergence-period-relations}%
	When \Filename{periodLookups.m} is not loaded,
	\begin{MapleInput}
hyperInt(polylog(2,-1/z)*polylog(2,-z)/z,z);
	\end{MapleInput}
	\begin{MapleError}
Error, (in integrationStep) Divergence at z = infinity of type ln(z)
	\end{MapleError}
	inadvertently finds a divergence. Namely, $F_{1,0}$ of \eqref{eq:check-divergences} is
	\begin{MapleInput}
entries(_hyper_divergences, pairs);
	\end{MapleInput}
	\begin{MapleMath}
		\left(z=\infty, \ln \left( z \right) \right)
		=
		4\mzv{1,3} + 2 \mzv{2,2} - \frac{1}{36} \pi^{4}
	\end{MapleMath}
	and its vanishing corresponds to an identity of MZV.
\end{example}
We like to remark that through this observation, the computation of an integral which is known to be finite in fact implies some relations among periods.

\subsection{Factorization of polynomials}
\label{sec:factorization}%
Since we are working with hyperlogarithms throughout, it is crucial that all polynomials occurring in the calculation factor linearly with respect to the integration variable $z$. For example,
\begin{MapleInput}
integrationStep([[1/(1+z^2), []]], z);
\end{MapleInput}
\begin{MapleError}
Error, (in partialFractions) 1+z^2 is not linear in z
\end{MapleError}
fails because factorization is initially only attempted over the rationals $\K=\Q$. Instead we can allow for an algebraic extension $\K=\Q(R)$ by specification of a set $R=\code{\_hyper\_splitting\_field}$ of radicals:
\begin{MapleInput}
_hyper_splitting_field := {I}:
integrationStep([[1/(1+z^2), []]], z);
fibrationBasis(%);
\end{MapleInput}
\begin{MapleMath}
	\left[
		\left[
			\frac{1}{2}I, [[-I]]
		\right],
		\left[
			-\frac{1}{2}I, [[I]]
		\right]
	\right]
	\\
	\frac{1}{2} \pi
\end{MapleMath}
We can also go further and factorize over the full algebraic closure $\K=\overline{\Q(\vec{z})}$ by setting $\code{\_hyper\_algebraic\_roots}\defas\code{true}$. Over $\K$, all rational functions $\Q(\vec{z})$ factor linearly such that we can integrate any $f \in \AnaReg{t}{\infty} L(\Sigma)(t)$ as long as we start with rational letters $\Sigma\subset \Q(\vec{z})$.

This feature is to be considered experimental and only applied in \code{transformWord} which implements lemma~\ref{lemma:reginf-is-hyperlog}: Given an irreducible polynomial $P \in \Q[\vec{z}]$ and a distinguished variable $z$, the symbolic notation
\begin{equation}
	\label{eq:root-letter}%
	\letter{\code{Root}(P,z)}
	\defas
	\sum
	\setexp{\letter{z_0}}{\restrict{P}{z=z_0} = 0}
\end{equation}
sums the letters corresponding to all the roots of $P$.
\begin{example}
	\label{ex:algebraic-letters}%
	A typical situation looks like this:
	\begin{MapleInput}
f,g:=Hlog(x,[-z,x+x^2]),Hlog(x,[x+x^2,-z]):
fibrationBasis(f+g, [x, z]);
	\end{MapleInput}
	\begin{MapleError}
Error, (in linearFactors) z+x+x^2 does not factor linearly in x
	\end{MapleError}
	To express $f+g$ as a hyperlogarithm in $x$, the roots $R=\code{Root}(P,x)=\set{-\frac{1\pm\sqrt{1-4z}}{2}}$ of $P=z+x+x^2$ seem necessary. After allowing for such algebraic letters, we obtain:
	\begin{MapleInput}
_hyper_algebraic_roots := true:
fibrationBasis(f+g, [x, z]);
	\end{MapleInput}
	\begin{MapleMath}
		- \Hlog{x}{-1,-z}
		 - \Hlog{x}{-z,-1}
		 \\
		 + \Hlog{x}{-z,0}
		 + \Hlog{x}{0,-z}
	\end{MapleMath}
	Since this result actually does not involve $\letter{R}$ at all one might wonder why it was necessary in the first place. The reason is that the individual contributions $f$ and $g$ indeed need $\letter{R}$. Only in their sum this letter drops out:\footnote{%
	In this extremely simple example this is clear since by \eqref{eq:Chens-lemma},
	$
		f+g
		=
		\Hyper{\letter{-z}}(x)
		\cdot
		\Hyper{\letter{x(x+1)}}(x)
	$
	factorizes into $\log \frac{x+z}{z} \cdot \log \frac{x}{1+x}$. We thus see why our representation \eqref{eq:maple-internal-reginf-rep} is preferable to one where all products of words are multiplied out (as shuffles).%		
}%
	\begin{MapleInput}
alias(R = Root(z+x+x^2, x)):
fibrationBasis(f, [x, z]);
	\end{MapleInput}
	\begin{MapleMath}
		\Hlog{x}{R,-z}
		 + \Hlog{x}{R,-1}
		 - \Hlog{x}{R,0}
		\\
		 + \Hlog{x}{-z,0}
		 - \Hlog{x}{-z,-1}
		\\
		 - \Hlog{x}{-1,-z}
		 - \Hlog{x}{0,-z}
	\end{MapleMath}
\end{example}
Note that further processing of functions with such algebraic letters\footnote{%
These are sometimes referred to as generalized harmonic polylogarithms with \emph{nonlinear weights}.%
} is not supported by {\HyperProg}, because their integrals are in general not hyperlogarithms anymore. However, the case of example~\ref{ex:algebraic-letters} occurs frequently wherefore the option \code{\_hyper\_ignore\_nonlinear\_polynomials} (default value is \code{false}) is available to ignore all algebraic letters in the first place. That is, all words containing such a letter are immediately dropped when it is set to \code{true}.

In the example above this gives the correct result for $f+g$, but will provoke false answers when \code{fibrationBasis} is applied to $f$ or $g$ alone. Hence this option should only be used when linear reducibility is granted; preferably using the methods of section~\ref{sec:polynomial-reduction}.

\subsection{Additional functions}

In the manual we describe some further procedures provided by {\HyperProg} (note that all algorithms of section~\ref{sec:algorithms} are were implemented), like the extension of the commands \code{diff} and \code{series} to compute differentials and series expansions of hyperlogarithms.

\subsection{Performance}
\label{sec:performance}%
During programming we focussed on correctness and we are aware of considerable room for improvement of the efficiency of {\HyperProg}.
But we hope that our code and the details provided in section~\ref{sec:algorithms} will inspire further, streamlined implementations, even outside the regime of computer algebra systems.
This is possible since apart from the factorization of polynomials (which can be performed before the actual integration, see the next section), all operations boil down to elementary manipulations of words (lists) and computations with rational functions.

Ironically, often just decomposing into partial fractions becomes a severe bottleneck in practice, as was also noted in \cite{AblingerBluemleinRaabSchneiderWissbrock:Hyperlogarithms}. This happens when an integrand contains denominator factors to high powers or very large polynomials in the numerator.

We observed that {\Maple} consumes a lot of main memory, in very challenging calculations the demand grew beyond $\SI{100}{\gibi\byte}$. Often this turns out to be the main limitation in practice.

Our program uses some functions that are not thread-safe and can therefore not be parallelized automatically. However, since the integration procedure considers every hyperlogarithm individually, a manual parallelization is straightforward: Multiple instances of {\Maple} can each compute a different piece of an integral whose results can be added up afterwards. Some example scripts are provided and discussed in the manual.

Also note that the product representation \eqref{eq:maple-internal-reginf-rep} inherently allows for different representations of the same words, because a product can either be represented symbolically or as the corresponding sum of shuffles. 
We argued that shuffling out every product is not desirable, so a better solution could be to choose an order on the alphabet $\Sigma$, which then gives rise to a polynomial basis of the shuffle algebra $T(\Sigma)$ in terms of Lyndon words \cite{Radford:BasisShuffleAlgebra}.

\section{Polynomial reduction and linear reducibility}
\label{sec:polynomial-reduction}%
In order to compute multi-dimensional integrals \eqref{eq:iteration-partial-integrals} by iterated integration using the algorithms of section~\ref{sec:algorithms}, we must require that for each $k$, the partial integral
\begin{equation}
	f_k
	\in
	L\left( \Sigma_k \right) (z_{k+1})
	\ \text{where}\ 
	\Sigma_k
	\subset 
	\C\left( z_{k+2}, \ldots, z_n \right)
	\label{eq:partial-integrals-hyperlogs} %
\end{equation}
is a hyperlogarithm in the next integration variable $z_{k+1}$. The alphabet $\Sigma_k$ is restricted to rational functions of the remaining variables, in particular $\Sigma_k \subset \C(z_{k+2})$, because only then lemma~\ref{lemma:reginf-is-hyperlog} guarantees that its integral $f_{k+1} \in L\left( (\Sigma_k)_{z+2} \right)(z_{k+2})$ is a hyperlogarithm in $z_{k+2}$.
%Since $(\Sigma_k)_{z_{k+2}}$ from \eqref{eq:def:reglim-letters} in general may entail algebraic functions in $z_{k+2}$, we see that linear reducibility is a strong condition indeed.
\begin{definition}
	We call $f_0(\vec{z})$ \emph{linearly reducible} if for some ordering $z_1,\ldots,z_n$ of its variables, sets $\Sigma_k$ exist such that \eqref{eq:partial-integrals-hyperlogs} holds for all $0\leq k < n$.
\end{definition}
For illustration let us suppose we want to integrate
\begin{equation}
	f_0(x,y,z)
	\defas
	\frac{1}{((1+x)^2+y)(y+z^2)}
	\label{eq:reduction-example-0}%
\end{equation}
over $x$ and $y$. To integrate $x$, we must include in $\Sigma_x$ the algebraic zeros $-1\pm\imag\sqrt{y}$ to get a hyperlogarithm $f_0(x) \in L\left( \Sigma_x \right) (x)$ in $x$. But then the integral
\begin{equation}
	\int_0^{\infty} f_0\ \dd x
	=
	\frac{\arctan\sqrt{y}}{\sqrt{y}(y+z^2)}
	\label{eq:reduction-example-x}%
\end{equation}
is not a hyperlogarithm in $y$ at all\footnote{But it is a hyperlogarithm in $t\defas\sqrt{y}$, so in this simple case a change of variables would help us out.}. On the other hand, since $f_0(y) \in L\left( \set{-(1-x)^2,-z^2} \right)(y)$ for letters rational in $x$, integration of $y$ results in a hyperlogarithm
\begin{equation}
	\int_0^{\infty} f_0 \ \dd y
	=
	2\frac{\log(1+x) -\log z}{(x+1+z)(x+1-z)}
	\label{eq:reduction-example-y}%
\end{equation}
in $x$ over letters $\set{-1,-1\pm z}$. So in the order $z_1 \defas y$, $z_2 \defas x$ linear reducibility is given and we can integrate
\begin{equation}
	\int_0^\infty \dd x
	\int_0^{\infty} \dd y
	\ f_0
	=
	\frac{\Hyper{\letter{1}\letter{0}}(z)-\Hyper{\letter{-1}\letter{0}}(z)}{z},
	\label{eq:reduction-example-xy}%
\end{equation}
which is a harmonic polylogarithm in $z$.

In principle, we can try to integrate $f_0$ for some arbitrary order and verify, after each step, that $\Sigma_k$ is rational (or otherwise abort and try a different order). But fortunately this is not necessary since there are means to analyse the singularities of the integrals $f_k$ in advance.

Namely, \emph{polynomial reduction} algorithms were presented in \cite{Brown:TwoPoint} and \cite{Brown:PeriodsFeynmanIntegrals}. These compute, for each subset $I\subset E \defas \set{z_1,\ldots,z_n}$ of variables, a set 
$
	S_I
	\subset
	\Q[E\setminus I]
$
of irreducible polynomials that provide an upper bound of the Landau varieties as introduced in \cite{Brown:PeriodsFeynmanIntegrals}. In particular this means that if there exists an ordering $z_1,\ldots,z_n$ of the variables such that all $p \in S_{I_k}$ are linear in $z_{k+1}$, for any $0
\leq k < n$ and $I_k \defas \set{z_1,\ldots,z_{k}}$, then the linear reducibility \eqref{eq:partial-integrals-hyperlogs} is granted with the rational alphabets
\begin{equation}
	\Sigma_k
	\defas
	\set{0}
	\cup
	\bigcup_{p \in S_{I_k}}
	\set{\text{zeros of $p$ in w.r.t.\ $z_{k+1}$}}.
	\label{eq:polynomial-reduction-alphabets}%
\end{equation}
Readers familiar with the symbol calculus will realize that the polynomials $S_{I_k}$ provide an upper bound of the entries of the symbol of $f_k$.
Explicit examples of such reductions are worked out in \cite{Brown:TwoPoint,Brown:PeriodsFeynmanIntegrals} and the appendix of \cite{Panzer:DivergencesManyScales}.

\subsection{Performance}
A polynomial reduction can significantly speed up computations of integrals \eqref{eq:iteration-partial-integrals}: During the step when $f_k$ is rewritten as a hyperlogarithm in $z_{k+1}$ following section~\ref{sec:reginf-as-hyperlog}, all words that contain a letter not in $\Sigma_k$ can be dropped, since the knowledge of \eqref{eq:polynomial-reduction-alphabets} proves that all such contributions must in total add up to zero (see example~\ref{ex:algebraic-letters}, where the algebraic roots $\letter{\code{Root}(P,z)}$ drop out for $f+g$).

Note that the dimension of the space of hyperlogarithms over an alphabet $\Sigma_k$ grows exponentially with the weight. Therefore, a polynomial reduction is absolutely crucial for problems of high complexity and cutting down the number of polynomials in $\Sigma_k$ is highly desirable. In practice this means that after computation of a polynomial reduction, one should look for a sequence $z_1,\ldots,z_n$ of variables not only ensuring that $S_{I_k}$ are linear in $z_{k+1}$, but also minimizing the number of $z_{k+1}$-dependent polynomials in $S_{I_k}$.

\subsection{Implementation in {\HyperProg}}
{\HyperProg} implements the \emph{compatibility graph} method \cite{Brown:PeriodsFeynmanIntegrals} of polynomial reduction and provides it as the command $\code{cgReduction}\left( L \right)$. The entries $L_I=\left[S_I, C_I\right]$ of the table $L$ are pairs of polynomials $S_I$ and edges $C_I \subset \binom{S_I}{2}$ between them.
\begin{example}
	\label{ex:polynomial-reduction}%
	The reduction of the integrand \eqref{eq:reduction-example-0} starts with the complete graph on the factors of its denominator:
	\begin{MapleInput}
S:={x^2+2*x+1+y,y+z^2}: L[{}]:=[S, {S}]:
cgReduction(L):
L[{x}][1]; L[{y}][1]; L[{x,y}][1];
	\end{MapleInput}
	\begin{MapleMath}
		{L_{\set{x}}}_1
		\\
		\set{x+1,x+1+z,x+1-z}
		\\
		\set{1+z,z-1}
	\end{MapleMath}
	We see that the results for $S_{\set{y}}$ and $S_{\set{x,y}}$ match with the letters of \eqref{eq:reduction-example-y} and \eqref{eq:reduction-example-xy}, but $S_{\set{x}}$ is not computed because $S_{\emptyset}$ is not linear in $x$.
\end{example}
Our implementation can use the knowledge of such reductions in two places (examples are given in the manual):
\begin{itemize}
	\item
		When a table $S$ is supplied as the (optional) fourth parameter to \code{fibrationBasis}, then all words $w_{i,k}$ in \eqref{eq:fibration-basis} containing letters not in $\Sigma_k$ of \eqref{eq:polynomial-reduction-alphabets} are removed from the result.

	\item
		In the first step of integrating $\int_0^{\infty} f\ \dd z$, the integrand $f$ is rewritten as a hyperlogarithm in $z$ using $\code{transformWord}(f, z) = \sum_w \Hyper{w}(z) \cdot c_u$. Setting
\begin{MapleInput}
_hyper_restrict_singularities := true:
_hyper_allowed_singularities := S:
\end{MapleInput}
ensures that any word $w$ containing a letter that is not a zero of some polynomial $p(z) \in S$ is dropped.
\end{itemize}

\subsection{Spurious polynomials and changes of variables}
Bear in mind that the sets $S_I$ only provide upper bounds on the alphabet. In course of our calculations we regularly observed that, with the number $\abs{I}$ of integrated variables increasing, more and more polynomials in $S_I$ tend to be spurious. In extreme cases it happens that a reduction contains surplus non-linear polynomials in every variable, while $f_0$ actually is linearly reducible.

But even when linear reducibility strictly fails, it is sometimes possible to change variables such that the integrand becomes linearly reducible in these new variables. 
We explain this in \cite{Panzer:DivergencesManyScales} using the example of a divergent, massive four-point box integral. Similar transformations are also employed in \cite{AblingerBluemleinRaabSchneiderWissbrock:Hyperlogarithms} to calculate generating functions of operator insertions into finite one-scale integrals. Also note the discussion \cite{BoncianiDegrassiVicini:GeneralizedHarmonicPolylogarithms} of alphabets containing square root letters that are typical for applications in particle physics and can be rationalized through simple changes of variables.

\section{Application to Feynman integrals}
\label{sec:feynman-integrals}%
In section~\ref{sec:algorithms} we investigated hyperlogarithms on their own, but the algorithms were originally developed in \cite{Brown:TwoPoint} for the computation of Feynman integrals. Important results on their linear reducibility (including counterexamples) and the geometry of Feynman graph hypersurfaces were obtained in \cite{Brown:PeriodsFeynmanIntegrals}. 
In \cite{Panzer:MasslessPropagators,Panzer:DivergencesManyScales} we successfully applied our implementation to compute many non-trivial examples, including massless propagators up to six loops and also divergent integrals depending on up to seven kinematic invariants. All results\footnote{These can be downloaded from \url{http://www.math.hu-berlin.de/~panzer/}.} presented in these papers were computed using this prgram {\HyperProg}.

Some further discussions on multi-scale and subdivergent integrals in the parametric representation are also given in \cite{BognerLueders:MasslessOnShell,BrownKreimer:AnglesScales,Kreimer:WheelsInWheels}.

We hope that our implementation will be particularly useful for applications to particle physics.

\begin{remark}
Our method applies only to the small class of linearly reducible graphs, which is a subset of those Feynman graphs that can be evaluated in terms of polylogarithms.
By now it is however well known that quantum field theory exceeds this space of functions not only in the massive case \cite{LaportaRemiddi:AnalyticSunrise,AdamsBognerWeinzierl:Sunrise,BlochVanhove:Sunset}, but also in massless integrals \cite{BrownDoryn:FramingsForGraphHypersurfaces}. Even in supersymmetric theories, elliptic integrals and generalizations have been identified, e.g.\ \cite{NandanPaulosSpradlinVolovich:StarIntegrals,CaronHuotLarsen:UniquenessTwoLoopMasterContours}.
\end{remark}

\subsection{Parametric representation and $\varepsilon$-expansion}
The popular method of Schwinger parameters \cite{ItzyksonZuber} expresses Feynman integrals $\Phi(G)$ associated to Feynman graphs $G$ by
\begin{equation}
	\Phi(G)
	= \Gamma(\sdd)
		\prod_{e \in E} \int_0^{\infty} \frac{\SP_e^{\EP_e-1}\ \dd \SP_e}{\Gamma(\EP_e)}
		\cdot \frac{\phipol^{-\sdd}}{\psipol^{\Dim/2-\sdd}}
		\cdot \delta(1-\SP_{e_N})
	\label{eq:parametric-feynman-integral}
\end{equation}
in $\Dim$ space-time dimensions. To each edge $e \in E$ of the graph corresponds a Schwinger variable $\SP_e$, and the corresponding scalar propagator may be raised to some power $\EP_e$. 
The superficial degree of divergence is 
$
	\sdd 
	\defas
	\sum_{e\in E} \EP_e
	- \loops{G} \cdot \frac{\Dim}{2}
$
for the loop number $\loops{G}$ of $G$.
The two graph polynomials $\psipol$ and $\phipol$ are for example defined in \cite{BognerWeinzierl:GraphPolynomials}, the $\delta$-distribution freezes an arbitrary $\SP_{e_N}$.

\subsection{$\varepsilon$-expansion}
For calculations in dimensional regularization\footnote{A definition in momentum space can be found in \cite{Collins}, while in the parametric representation it is immediate.}, we set $\Dim=4-2\varepsilon$ and also the edge powers $\EP_e = \EPZ_e + \varepsilon\EPE_e$ are $\varepsilon$-dependent and expanded near an integer $\EPZ_e \in \Z$. Assuming that \eqref{eq:parametric-feynman-integral} is convergent\footnote{This can always be arranged for with the help of preparatory partial integrations as was shown in \cite{Panzer:DivergencesManyScales}.} for $\varepsilon=0$, we can expand the integrand in $\varepsilon$ and obtain each coefficient $c_n$ of the Laurent series $\Phi(G) = \sum_{n} c_n \varepsilon^n$ as period integrals
\begin{equation}
	c_n
	= \Gamma(\sdd)
	\prod_{e\in E}
	\int_0^{\infty}
	\frac{\dd \SP_e}{\Gamma(\EP_e)}
	\cdot
	\frac{P^{(n)} \cdot f^{(n)}}{Q^{(n)}}
	\delta(1-\SP_{e_N})
	\label{eq:expansion-coefficients}%
\end{equation}
where $P^{(n)},Q^{(n)} \in \Q[\vec{\SP}]$ denote polynomials and $f^{(n)} \in \Q[\vec{\SP}, \log\vec{\SP},\log\phipol,\log\psipol]$. In particular $f^{(n)} \in L(\Sigma_e)(\SP_e)$ is a hyperlogarithm in $\SP_e$ whenever $\phipol$ and $\psipol$ are linear in $\SP_e$. If $f^{(n)}$ even turns out to be linearly reducible, we can integrate it with {\HyperProg}.

\subsection{Additional functions in {\HyperProg}}
In \ref{sec:function-list-feynman} we list the most important functions that support the calculation of Feynman integrals. These entail simple routines to construct the graph polynomials $\psipol$ and $\phipol$.

For divergent integrals, the parametric integrands in the representation \eqref{eq:parametric-feynman-integral} can be divergent. Such a situation demands partial integrations, which effectively implement the analytic (dimensional) regularization and produce a convergent integral representation in the end. This procedure is defined and exemplified in \cite{Panzer:DivergencesManyScales} and implemented into {\HyperProg} as described in the manual.

\begin{figure}
	\centering
		\includegraphics[width=0.8\columnwidth]{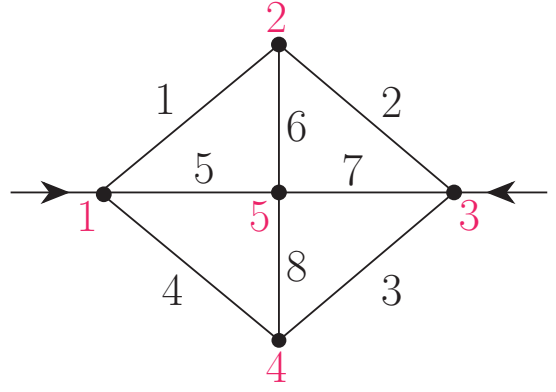}
		\caption{Four-loop massless propagator of section~\ref{sec:example-propagator}. In \cite{Panzer:MasslessPropagators} this one is called $M_{3,6}$. Edges are labelled in black, vertices in red.}%
	\label{fig:propagator-graph}%
\end{figure}
\subsection{Examples}
\label{sec:example-propagator}%
Plenty of examples are provided in the {\Maple} worksheet \Filename{Manual.mw}, wherefore we only present a very brief case of a four-loop massless propagator here.

First we define the graph of figure~\ref{fig:propagator-graph} by its edges $E$ and specify two external momenta of magnitude one entering the graph at the vertices $1$ and $3$. The polynomials $\psipol$ and $\phipol$ can be computed with
\begin{MapleInput}
E:=[[1,2],[2,3],[3,4],[4,1],[5,1],[5,2], [5,3],[5,4]]:
psi:=graphPolynomial(E):
phi:=secondPolynomial(E, [[1,1], [3,1]]):
\end{MapleInput}
This graph has vertex-width three \cite{Brown:PeriodsFeynmanIntegrals} and is therefore linearly reducible. Still let us calculate a polynomial reduction to verify this claim:
\begin{MapleInput}
L:=table(): S:=irreducibles({phi,psi}):
L[{}]:=[S, {S}]: cgReduction(L):
\end{MapleInput}
Afterwards we can investigate the polynomial reduction (for example with the procedure \mbox{$\code{reductionInfo}(L)$}) and find a linearly reducible sequence $\vec{z}$ of variables. We recommend to always check this with
\begin{MapleInput}
z:=[x[1],x[2],x[6],x[5],x[3],x[4],x[7],x[8]]:
checkIntegrationOrder(L, z[1..7]):
\end{MapleInput}
\begin{MapleOutput}
1. (x[1]): 2 polynomials, 2 dependent
2. (x[2]): 5 polynomials, 4 dependent
3. (x[6]): 8 polynomials, 4 dependent
4. (x[5]): 7 polynomials, 4 dependent
5. (x[3]): 6 polynomials, 6 dependent
6. (x[4]): 4 polynomials, 3 dependent
7. (x[7]): 1 polynomials, 1 dependent
Final polynomials:
\end{MapleOutput}
\begin{MapleMath}
	\set{}
\end{MapleMath}
The integrand is assembled according to \eqref{eq:parametric-feynman-integral} which in this case is already convergent as-is. We expand to second order in $\varepsilon$ with
\begin{MapleInput}
sdd := nops(E)-(1/2)*4*(4-2*epsilon):
f := series(psi^(-2+epsilon+sdd)*phi^(-sdd), epsilon=0):
f:=add(coeff(f,epsilon,n)*epsilon^n,n=0..2):
\end{MapleInput}
Now we integrate out all but the last Schwinger parameter
\begin{MapleInput}
hyperInt(f, z[1..-2]):
\end{MapleInput}
and reduce the result into a basis of MZV:
\begin{MapleInput}
fibrationBasis(f)*z[-1]:
collect(%, epsilon);
\end{MapleInput}
\begin{MapleMath}
	\left(
		254\mzv{7}
		+780 \mzv{5}
		-200 \mzv{2}\mzv{5}
		-196 \mzv[2]{3}
		+80 \mzv[3]{2}
		-\frac{168}{5}\mzv[2]{2}\mzv{3}
	\right)
	\varepsilon^2
	\\
+ \left(
		-28 \mzv[2]{3}
		+140 \mzv{5}
		+\frac{80}{7} \mzv[3]{2}
	\right)
	\varepsilon
+20 \mzv{5}.
\end{MapleMath}
Examples containing more external momenta, massive propagators and also divergences are included in \Filename{Manual.mw}.

\paragraph{Acknowledgments}
I thank Francis Brown for his beautiful articles and Dirk Kreimer for continuous encouragement.
Oliver Schnetz kept me interested into graphical functions and kindly verified many of my computations with his very own methods, thereby providing a strong cross-check.
Also Johannes Henn provided some $\varepsilon$-expansions to me for tests, and he motivated the study of divergent integrals in the parametric representation.
Many discussions with Christian Bogner of concrete examples and problems greatly improved my understanding of iterated integrals.
Figures were generated with {\JaxoDraw} \cite{BinosiTheussl:JaxoDraw}.

\appendix

\section{Tests of the implementation}
\label{sec:tests}%
We extensively tested our implementation with a variety of examples. Most of these are supplied in the file \Filename{HyperTests.mpl} which must run without any errors. Since it contains many diverse applications of {\HyperProg}, it might also be useful as a supplement to the manual.

Plenty of functional and integral equations of polylogarithms, taken from the books \cite{Lewin:PolylogarithmsAssociatedFunctions,Lewin:StructuralPropertiesPolylogarithms}, are checked with {\HyperProg}. These tests revealed a few misprints in \cite{Lewin:PolylogarithmsAssociatedFunctions}:
\begin{itemize}
	\item
	Equation (7.93): $-\frac{9}{4}\pi^2 \log^2(\xi)$ must be $-\frac{9}{12}\pi^2 \log^2(\xi)$.

	\item
	Equation (7.99), repeated as (44) in appendix~A.2.7: The second term $-\frac{9}{4}\pi^2 \log^3(\xi)$ of the last line must be replaced with $-\frac{3}{4}\pi^2 \log^3(\xi)$.

	\item
	Equation A.3.5. (9): The terms $-2 \Li_3\left( 1/x \right) + 2\Li_3(1)$ should read $+\Li_3(1/x) - \Li_3(1)$ instead.

	\item
	In equation (7.132), a factor $\frac{1}{2}$ in front of the second summand $D^n_{p=0} \frac{1}{p} \left\{ \cdots \right\}$ is missing (it is correctly given in 7.131).

	\item
	Equation (8.80): $(1-v)$ inside the argument of the fourth $\Li_2$-summand must be replaced by $(1+v)$, so that after including the corrections mentioned in the following paragraph, the correct identity reads
	\begin{equation}\begin{split}
		0=&
		\Li_2\left( \frac{(1+v)w}{1+w} \right)
		+ \Li_2\left( \frac{-(1-v)w}{1-w} \right)
		\\
		+& \Li_2\left( \frac{(1-v)w}{1+w} \right)
		+ \Li_2\left( \frac{-(1+v)w}{1-w} \right)
		\\
		- &\Li_2\left( \frac{-(1-v^2)w^2}{1-w^2} \right)
		+ \frac{1}{2} \log^2\left( \frac{1+w}{1-w} \right).
		\label{eq:lewin-8.80-corrected}%
	\end{split}\end{equation}

	\item
	Equation (16.46) of \cite{Lewin:StructuralPropertiesPolylogarithms}: $x^2$ must read $x^{-2}$.

	\item
	Equation (16.57) of \cite{Lewin:StructuralPropertiesPolylogarithms}: $\frac{\pi^4}{40}$ must read $\frac{\pi^4}{30}$.
\end{itemize}
Some tests are constructed by calculation of parametric integrals with known results in terms of polylogarithms and MZV. We used the expansion of Euler's beta function in the form
\begin{equation*}
%	B(1-x,1-y)
%	= \frac{\Gamma(1-x)\Gamma(1-y)}{\Gamma(2-x-y)}
	\frac{\exp\left[ \sum\limits_{n=2}^{\infty} \frac{\zeta(n)}{n}( x^n + y^n-(x+y)^n)  \right]}{1-x-y}
	=
	\int_0^{\infty} \frac{z^{-x}\ \dd z}{(1+z)^{2-x-y}}
\end{equation*}
and also checked the identity ($z \geq 0$)
\begin{align}
	&
	\int_{0}^{\infty} \left[ 
		\left( \frac{1}{x} - \frac{1}{x+z} \right) \Li_n(-x-z)
		-\frac{1}{x} \Li_n\left( -\frac{z}{x+1} \right)
	\right] \dd x
	\nonumber\\
	& =
	n \Li_{n+1}(-z),
	\label{eq:bubble-chain-identity} %
\end{align}
which is easily derived inductively for any $n\in \N$.
\begin{figure*}%
	\begin{equation*}
		\BBr{n}{m}
		\defas
		\Graph[0.5]{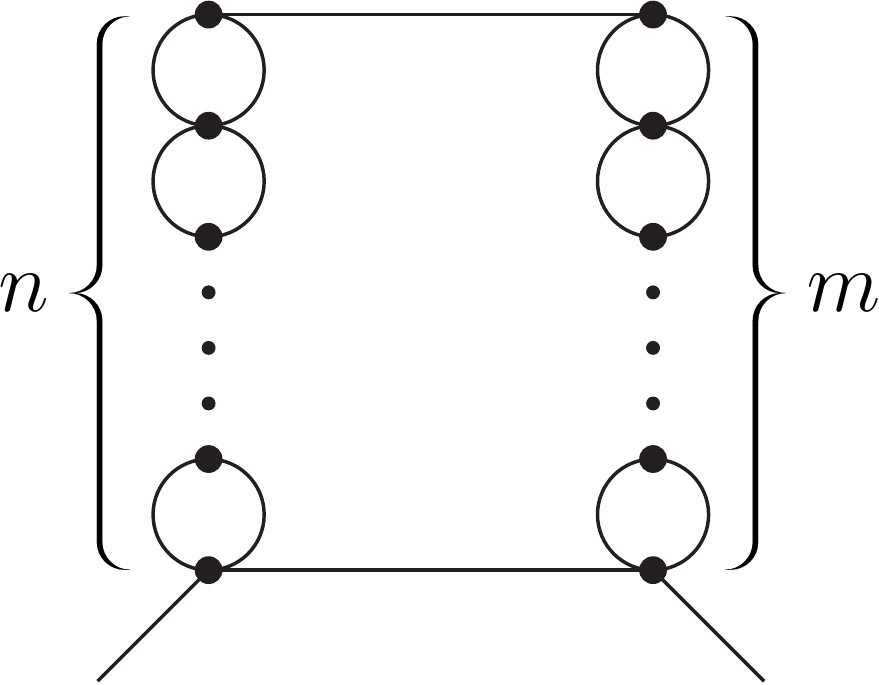}
		\qquad
		\BBt{n}{m}
		\defas
		\Graph[0.5]{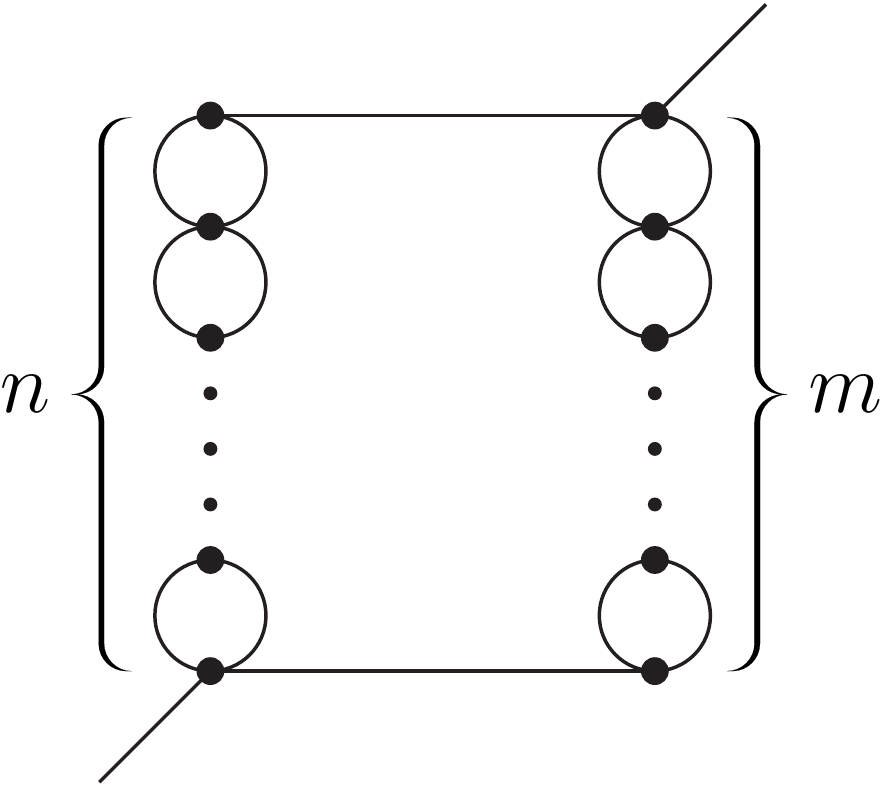}
	\end{equation*}%
	\caption{Two series of one-scale graphs with subdivergences in four dimensions. They occur in $\phi^4$-theory as vertex graphs with two nullified external momenta, incident to the two three-valent vertices.}%
	\label{fig:bubble-chains}%
\end{figure*}

The two families of ``bubble chain graphs'' shown in figure~\ref{fig:bubble-chains} can be calculated with standard techniques in momentum-space. Following the forest formula, we get
\begin{equation}
	\period\left( \BBr{n}{m} \right)
	\defas
	\restrict{\frac{\partial}{\partial q^2}}{q^2=1}
	\Phi_R\left( \BBr{n}{m} \right)
	= (n+m)!
	\label{eq:bubble-chains-rational-period}%
\end{equation}
for the derivative of the Feynman integrals $\Phi_R$ renormalized by subtraction at external momentum $q^2 = 1$. The second family has generating function
\begin{multline}
	\ln \bigg\{ 
		(1-x-y)
		\sum_{n,m\geq 0} \frac{x^n y^m}{n! m!} \period\left( \BBt{n}{m} \right)
	\bigg\}
	\\
	=
	\sum_{r\geq 1} \frac{2\mzv{2r+1}}{2r+1} \left[ 
		x^{2r+1} + y^{2r+1} - (x+y)^{2r+1}
	\right]
	\label{eq:bubble-chains-transcendental-period}%
\end{multline}
and we used \eqref{eq:bubble-chains-rational-period} and \eqref{eq:bubble-chains-transcendental-period} to verify our results obtained from the parametric integral representations for these periods derived in \cite{Panzer:Mellin}.

We furthermore tested some simple period integrals of \cite{Brown:MZVPeriodsModuliSpaces} and transformations of polylogarithms into hyperlogarithms given in \cite{BroedelSchlottererStieberger:PolylogsMZVSuperstringAmplitudes}. Our results for the integrals $E_n$ of \eqref{eq:def:IsingE} match the analytic results up to $n=4$ given in \cite{BaileyBorweinCrandall:IsingClass} and the numeric values obtained therein for $E_5$ and $E_6$ agree with our exact results.

Probably the strongest tests of our implementation are the computations of $\varepsilon$-expansions of various single-scale \cite{Panzer:MasslessPropagators} and multi-scale \cite{Panzer:DivergencesManyScales} Feynman integrals. We cross-checked these results with many different references, verified that they obey the symmetries of the associated Feynman graphs and in some cases used established programs \cite{BognerWeinzierl:ResolutionOfSingularities,SmirnovTentyukov:FIESTA} to obtain numeric evaluations to confirm our analytic formulas.

Also we confirmed the operator matrix elements $\hat{I}_{1a}$, $\hat{I}_{1b}$, $\hat{I}_{2a}$, $\hat{I}_{2b}$, $\hat{I}_{4}$ of ladder graphs computed in \cite{Wissbrock:Massive3loopLadder} and checked the Benz graphs $I_1$, $I_2$ and $I_3$ of \cite{AblingerBluemleinRaabSchneiderWissbrock:Hyperlogarithms}. The examples of $\hat{I}_4$ and $I_1$ are part of our manual, where we also correct mistakes in the equations (3.18) and (3.1) loc.cit.

Another check was done with the massless hexagon integral \cite{DelDucaDuhrSmirnov:OneLoopMasslessHexagon,DixonDrummondHenn:OneLoopMasslessHexagon}, which is also included in the manual.

\section{Proofs}
\label{sec:proofs} %

\begin{proof}[Lemma \ref{lemma:shuffle-head-tail-decomposition}]
	The statement is trivial for $n=0$ and we apply induction. For $n>0$, the outer shuffle product in the right-hand side of \eqref{eq:shuffle-tail-decomposition} decomposes with respect to the last letter into
	\begin{multline*}
%		w \letter{\sigma} \letter{a_1}\ldots\letter{a_N}
%		&=
		\left\{
		\sum_{i=0}^{n-1}
		\left[ u \shuffle ( -\letter{a_i}) \ldots (-\letter{a_1} ) \right]
		\letter{\sigma}
		\shuffle
		\letter{a_{i+1}}\ldots\letter{a_{n-1}}
		\right\}
		\letter{a_n}
		\\
		+
		\left\{
			u
			\shuffle
			\sum_{i=0}^{n}
				(-\letter{a_i}) \ldots (-\letter{a_1})
				\shuffle
				\letter{a_{i+1}}\ldots\letter{a_{n}}
		\right\}
		\letter{\sigma}.
	\end{multline*}
	The first contribution is $u\letter{\sigma}\letter{a_1}\ldots\letter{a_n}$ by the induction hypothesis and the second contribution vanishes because it represents 
	$
		\left\{ 
				u
				\shuffle 
				(S\convolution \id) (\letter{a_1}\ldots\letter{a_{n}})
		\right\}
		\letter{\sigma}
	$
	for the antipode $S$ of the Hopf algebra $T(\Sigma)$.
\end{proof}

\begin{proof}[Lemma \ref{lemma:word-reginf-decomposition}]
	The statement is trivial for $n=1$ and for $n>1$ we apply \eqref{eq:shuffle-product} to \eqref{eq:word-reginf-decomposition} such that the right-hand side becomes
	\begin{multline*}
		\sum_{0 \leq k < i \leq n}^{n}
		\left( \letter{\sigma_i} - \letter{-1} \right)
		\left[ 
			\letter{-1}^{k}
			\shuffle
			(-\letter{-1})^{i-k-1}
			\shuffle
			\letter{\sigma_{i+1}}\ldots\letter{\sigma_n}
		\right]
		\\
		+
		\letter{-1} 
		\sum_{k=1}^{n} \left[
			\letter{-1}^{k-1} 
			\shuffle 
			\WordReg{}{\infty}\left( \letter{\sigma_{k+1}}\ldots\letter{\sigma_n} \right)
		\right].
	\end{multline*}
	The second contribution is $\letter{-1} \letter{\sigma_2}\ldots \letter{\sigma_n}$ by the induction hypothesis, while the sum $\sum_{k=0}^{i-1} \letter{-1}^k \shuffle (-\letter{-1})^{i-1-k} = (\letter{-1} - \letter{-1})^{\shuffle(i-1)} = \delta_{i,1}$ reduces the first contribution to $\left( \letter{\sigma_1} - \letter{-1} \right) \letter{\sigma_2}\ldots\letter{\sigma_n}$.
\end{proof}

\begin{proof}[Lemma \ref{lemma:contour-splitting}]
	Consider a sequence $\letter{\tau}^n$ in a word $w=u (\letter{\tau}^n) v$ where $u \in \im(\WordReg{\tau}{})$ does not end in $\letter{\tau}$ and $v \in \im(\WordReg{}{\tau})$ does not begin with $\letter{\tau}$. The contributions to \eqref{eq:chens-lemma-contour-splitting} that split $w$ between $u$ and $v$ are
	\begin{equation}
		\sum_{k=0}^{n} 
			\iInt_{\gamma_u}  \left( u\letter{\tau}^k \right)
			\cdot
			\iInt_{\eta_u} \left( \letter{\tau}^{n-k}v \right).
			\tag{$*$}%
			\label{eq:contour-splitting-proof-deconcat} %
	\end{equation}
	From lemma~\ref{lemma:shuffle-head-tail-decomposition} we obtain the identities
	\begin{align}
		u (\letter{\tau}^k)
		&=
		\sum_{\mu=0}^{k} 
			\letter{\tau}^{k-\mu} 
			\shuffle 
			\WordReg{\tau}{} \left( u\letter{\tau}^{\mu} \right)
		\quad\text{and}
		\label{eq:regtail-shuffle-identity} %
		\\
		(\letter{\tau}^k) v
		&=
		\sum_{\nu=0}^{k}
			\letter{\tau}^{k-\nu}
			\shuffle
			\WordReg{}{\tau} \left( \letter{\tau}^{\nu}v \right)
		,
		\label{eq:reghead-shuffle-identity} %
	\end{align}
	which allow us to rewrite \eqref{eq:contour-splitting-proof-deconcat} as
	\begin{equation*}
		\sum_{\mu+\nu+a+b=n} 
				\iInt_{\gamma_u} \WordReg{\tau}{}\left( u\letter{\tau}^{\mu} \right)
				\cdot
				\iInt_{\eta_u} \WordReg{}{\tau}(\letter{\tau}^{\nu} v)
				\cdot
				\iInt_{\gamma_u} \letter{\tau}^{a}
				\cdot
				\iInt_{\eta_u} \letter{\tau}^{b}
		.
	\end{equation*}
	The sum over $a+b=n-\mu-\nu$ of the last two terms combines to $\iInt_{\gamma} \letter{\tau}^{n-\mu-\nu}$. Now the limit $u\rightarrow \tau$ in the remaining two factors is finite, such that \eqref{eq:contour-splitting-proof-deconcat} then becomes
	\begin{multline*}
		\sum_{\mu+\nu \leq n}
		\iInt_{\tau}^{z} \WordReg{\tau}{} \left( u\letter{\tau}^{\mu} \right)
		\cdot
		\iInt_{\tau}^{z} \letter{\tau}^{n-\mu-\nu}
		\cdot
		\iInt_0^{\tau} \WordReg{}{\tau} \left( \letter{\tau}^{\nu} v \right)
		\\
		=
		\sum_{\mu=0}^{n}
		\iInt_{\tau}^{z} \left( u \letter{\tau}^{\mu} \right)
		\cdot
		\iInt_{0}^{\tau} \WordReg{}{\tau}\left( \letter{\tau}^{n-\mu} v \right)
		.
	\end{multline*}
	We used \eqref{eq:regtail-shuffle-identity} again and the definition $\iInt_{\tau}^z \letter{\tau} \defas \iInt_{\gamma} \letter{\tau}$.
\end{proof}

\section{List of functions and options provided by \HyperProg}
\label{sec:reference}%
\subsection{Options and global variables}
\begin{description}
	\item[\code{\_hyper\_verbosity}]
		(default: $1$) \\
		The higher this integer, the more progress information is printed during calculations. The value zero means no such output at all.

	\item[\code{\_hyper\_verbose\_frequency}]
		(default: $10$) \\
		Sets how often progress output is produced during integration or polynomial reduction.

	\item[\code{\_hyper\_return\_tables}]
		(default: \code{false}) \\
		When \code{true}, \code{integrationStep} returns a table instead of a list. This is useful for huge calculations, because {\Maple} can not work with long lists.
	\item[\code{\_hyper\_check\_divergences}]
		(default: \code{true}) \\
		When active, endpoint singularities at $z\rightarrow 0,\infty$ are detected in the computation of integrals $\int_0^{\infty} f(z)\ \dd z$.

	\item[\code{\_hyper\_max\_pole\_order}]
		(default: $10$) \\
		Sets the maximum values of $i$ and $j$ in \eqref{eq:check-divergences} for which the functions $f_{i,j}$ are computed to check for potential divergences $f_{i,j} \neq 0$.

	\item[\code{\_hyper\_abort\_on\_divergence}]
		(default: \code{true}) \\
		This option is useful when divergences are detected erroneously, as happens when periods occur for which no basis is supplied to the program.

	\item[\code{\_hyper\_divergences}]
		\hfill \\
		A table collecting all divergences that occurred.

	\item[\code{\_hyper\_splitting\_field}]
		(default: $\emptyset$) \\
		This set $R$ of radicals defines the field $k=\Q(R)$ over which all factorizations are performed.

	\item[\code{\_hyper\_ignore\_nonlinear\_polynomials}]
		(default: \code{false}) \\
		Set to \code{true}, all non-linear polynomials (that would result in algebraic zeros as letters) will be dropped during integration. This is permissible when linear reducibility is granted.

	\item[\code{\_hyper\_restrict\_singularities}]
		(default: \code{false}) \\
		When \code{true}, the rewriting of $f$ as a hyperlogarithm in $z$ (performed during integration) projects onto the algebra $L(\Sigma)$ of letters $\Sigma$ specified by the roots of the set \code{\_hyper\_allowed\_singularities} (default: $\emptyset$) of irreducible polynomials. This can speed up the integration.

	\item[\code{\_hyper\_algebraic\_roots}]
		(default: \code{false}) \\
		When true, all polynomials will be factored linearly which can introduces algebraic functions. Further computations with such functions are not supported. 
\end{description}

\subsection{{\Maple} functions extended by {\HyperProg}}
\begin{description}
	\item[{\code{convert}$(f, \code{form})$}] with $\code{form} \in \set{\code{Hlog}, \code{Mpl}, \code{HlogRegInf}}$\\
		Rewrites polylogarithms $f$ in terms of hyper- or polylogarithms using \eqref{eq:Hyper-as-Li}. Choosing $\code{form}=\code{HlogRegInf}$ transforms $f$ into the list representation \eqref{eq:maple-internal-reginf-rep}.

	\item[{\code{diff}$\left( f, z \right)$}] \hfill\\
		Computes the partial derivative $\partial_t f$ of hyperlogarithms $\code{Hlog}\left(g(t), w(t)\right)$ or polylogarithms $\code{Mpl}\left( \vec{n},\vec{z}(t) \right)$ that occur in $f$. This works completely generally, i.e.\ also when a word $w(t)$ depends on $t$.

	\item[{\code{series}$\left(f, z=0 \right)$}] \hfill\\
		Implements the expansion of $f=\Hyper{w}(z)$ at $z\rightarrow 0$. To expand at different points, use $\code{fibrationBasis}$ first as explained in the manual.
\end{description}

\subsection{Some new functions provided by {\HyperProg}}
Note that there are further functions in the package, cf. the manual.
\begin{description}
	\item[{\code{integrationStep}$\left( f, z \right)$}] \hfill\\
		Computes $\int_0^{\infty} f\ \dd z$ for $f$ in the form \eqref{eq:maple-internal-reginf-rep}.

	\item[{\code{hyperInt}$\left(f, \vec{z}\right)$}] with a list $\vec{z}=[z_1,\ldots,z_r]$ or single $\vec{z} = z_1$\\
		Computes $\int_0^{\infty} \dd z_r \ldots \int_0^{\infty} \dd z_1 f$ from right to left. Any variable can also specify the bounds $z_i=a_i..b_i$ to compute $\int_{a_i}^{b_i} \dd z_i$ instead.

	\item[{\code{fibrationBasis}$\left(f, [z_1,\ldots,z_r], F, S\right)$}] \hfill\\
		Rewrites $f$ as an element of $L(\Sigma_1)(z_1) \tp \ldots \tp L(\Sigma_r)(z_r)$. Note that $\Sigma_i \subset \overline{\C(z_{i+1},\ldots,z_r)}$ in general are algebraic functions of the following variables. 
		A table $F$ (with indexing function \code{sparsereduced}) can be supplied to store the result in compact form, otherwise \code{Hlog}-expressions are returned.
		
		For each defined key $z_i$ of $S$, the result is projected from $L(\Sigma_i)(z_i)$ onto $L(\Sigma_i^S)(z_i)$ restricting to letters $\Sigma_i^S \defas \setexp{\text{zeros of $p(z_i)$}}{p \in S_{z_i}}$. All words including other letters are dropped in the computation.

	\item[{\code{index/sparsereduced}}] \hfill\\
		This indexing function corresponds to {\Maple}s \code{sparse}, but entries with value zero are removed. It is used to collect coefficients of hyperlogarithms.

	\item[{\code{forgetAll}$()$}] \hfill\\
		Clears cache tables for internal functions and should be called when options were changed.

	\item[{\code{transformWord}$(w, t)$}] \hfill\\
		Given a word $w=[\sigma_1,\ldots,\sigma_n]$ as a list, this function returns a list $[ [w_1, u_1], \ldots]$ of pairs such that
		\begin{equation*}
			\AnaReg{z}{\infty} \Hyper{w}(z)
			=
			\sum_i \Hyper{w_i}(t) \cdot \AnaReg{z}{\infty} \Hyper{u_i}(z)
		\end{equation*}
		and implements the algorithm of section~\ref{sec:reginf-as-hyperlog}. Note that $u_i$ is given in the product form \eqref{eq:maple-internal-reginf-rep}.

	\item[{\code{reglimWord}$(w, t)$}] \hfill \\
		Given a word $w=[\sigma_1,\ldots,\sigma_n] \in \Sigma^{\times}$ with rational letters  $\Sigma \subset \C(t)$, it implements our algorithm from section~\ref{sec:reglim-algorithm} to compute $u$ (in the representation \eqref{eq:maple-internal-reginf-rep}) such that
		\begin{equation*}
			\AnaReg{t}{0} \AnaReg{z}{\infty} \Hyper{w}(z)
			=
			\AnaReg{z}{\infty} \Hyper{u}(z).
		\end{equation*}

	\item[{\code{integrate}$(f, z)$}] \hfill \\
		Takes a hyperlogarithm $f(z)$ in the form \eqref{eq:maple-internal-hyperlog} and computes a primitive $F$ ($\partial_z F(z) = f(z)$) following section~\ref{sec:integration-differentiation}.

	\item[{\code{cgReduction}$\left( L, \code{todo}, d \right)$}] \hfill \\
		Computes compatibility graphs (stored in the table $L$) for the variable sets in the list \code{todo}, only taking reductions into account where every polynomial is of degree $d$ or less (default $d=1$).
		
		When \code{todo} is a set, then all reductions of variables not overlapping this set are computed.

	\item[{\code{checkIntegrationOrder}$\left( L, \vec{z} \right)$}] \hfill\\
		Tests whether for $\vec{z}=[z_1,\ldots]$, all polynomials in the reduction $L$ are linear in the corresponding $z_i$ and prints the number of polynomials.
\end{description}

\subsection{Functions related to Feynman integrals}
\label{sec:function-list-feynman}%
\begin{description}
	\item[{\code{graphPolynomial}$(E)$}] \hfill\\
		Computes $\psipol$ for the graph with edges $E=[ e_1,\ldots ]$ given as a list of pairs of vertices $e_i=[v_{i,1}, v_{i,2}]$. The vertices $V=\set{1,\ldots,\abs{V}}$ must be numbered consecutively.

	\item[{\code{forestPolynomial}$(E, P)$}] \hfill\\
		The spanning forest polynomial $\forestpolynom{P}$ of \cite{BrownYeats:SpanningForestPolynomials} of the graph $E$, $P$ is a partition of a subset of vertices.

	\item[{\code{secondPolynomial}$(E, P, M)$}] \hfill\\
%secondPolynomial := proc(edges::list, externals::list, masses::list := [])
		Computes $\phipol$ for the graph with edges $E$ that denote scalar propagators of masses $M$ (optional). $P=[ [v_1,p_1^2],\ldots]$ lists the vertices $v_i$ to which external momenta $p_i$ are attached.

	\item[{\code{graphicalFunction}$(E,V_{\text{ext}})$}] \hfill\\
%graphicalFunction := proc(edges::list, externals::list)
Constructs the parametric integrand for a \emph{graphical function} as defined in \cite{Schnetz:GraphicalFunctions} in $\Dim=4$. The edge list $E=[e_1,\ldots]$ can contain lists $e_i=[v_{i,1},v_{i,2}]$ to denote propagators and sets $e_i=\set{v_{i,1},v_{i,2}}$ to denote inverse (numerator) propagators.
	The external vertices are $V_{\text{ext}} = [v_z,v_0,v_1,v_{\infty}]$ with $v_{\infty}$ optional.

	\item[{\code{drawGraph}$(E, P, M, s)$}] \hfill\\
% proc(edges::list, momenta::list := [], masses::list := [], drawStyle := planar)
Draws the graph defined by the edge list $E$. The remaining parameters are optional: $P$ and $M$ are as for \code{secondPolynomial} while $s$ sets the style of the drawing (see \code{GraphTheory[DrawGraph]}).

	\item[{\code{findDivergences}$\left( f, P \right)$}] \hfill\\
		For any pair $J \cap K=\emptyset$ of disjoint sets of variables, the degree $\omega_J^K(f)$ of divergence when $z\rightarrow 0,\infty$ (for $z \in J,K$) is computed as defined in \cite{Panzer:DivergencesManyScales}. The result is a table indexed by the sets $J \cupdot K^{-1}$, holding the values of $\omega_J^K(f)$ that are $\leq 0$ when $\varepsilon=0$.

		The variables $P$ are considered fixed parameters, so only sets with $(J\cupdot K) \cap P = \emptyset$ are considered.

	\item[{\code{dimregPartial}$\left( f, I, \text{sdd} \right)$}] \hfill\\
		Computes the new integrand $\mathcal{D}_J^K(f)$ after a partial integration, as defined in \cite{Panzer:DivergencesManyScales}:
		\begin{equation}
			\mathcal{D}_J^K
			= 1 - \frac{1}{\text{sdd}} \left[ 
				\sum_{z \in J} \partial_z z
				-
				\sum_{z \in K} \partial_z z
			\right].
			\label{eq:partial-integration-operator}%
		\end{equation}
		The set $I=J\cupdot K^{-1}$ must consist of variables $J$ and inverses $K^{-1} = \setexp{z^{-1}}{z\in K}$.
\end{description}

\phantomsection
\pdfbookmark[1]{References}{final-bibliography}
\bibliographystyle{elsarticle-num}
\bibliography{../qft}

\end{document}